
\documentclass[10pt]{article}

\pdfpagewidth=8.5truein
\pdfpageheight=11truein

\setlength{\oddsidemargin}{0.25in}
\setlength{\evensidemargin}{\oddsidemargin}
\setlength{\textwidth}{6in}
\setlength{\textheight}{8in}
\setlength{\topmargin}{-0.0in}
\setcounter{topnumber}{1} 
\usepackage[vlined,ruled]{algorithm2e}

\usepackage{mathptmx}

\usepackage{comment}
\usepackage{amsmath}
\usepackage{amssymb}
\usepackage{amsthm}
\makeatletter
\newcommand{\singlespacing}{\let\CS=
\@currsize\renewcommand{\baselinestretch}{1}\tiny\CS}
\newcommand{\foospacing}{\let\CS=
\@currsize\renewcommand{\baselinestretch}{1.9}\tiny\CS}
\makeatother

\usepackage{multirow}

\usepackage{algorithmic}
\newcommand{\pos}{{{{\mathit{pos}}}}}
\newcommand{\p}{\ensuremath{\mathrm{P}}}
\newcommand{\np}{\ensuremath{\mathrm{NP}}}
\newcommand{\npc}{\ensuremath{\mathrm{NP\hbox{-}complete}}}
\newcommand{\npcs}{\ensuremath{\mathrm{NP\hbox{-}comp.}}}
\newcommand{\score}{\ensuremath{\mathit{score}}}
\newcommand{\vetoes}{\ensuremath{\mathit{vetoes}}}

\newtheorem{theorem}{Theorem}[section]

\newtheorem{corollary}[theorem]{Corollary}
\newtheorem{definition}[theorem]{Definition}
\newcommand{\bigoh}{{\protect\mathcal O}}
\hyphenation{Fali-szew-ski}
\hyphenation{Hemachan-dra Hemaspaan-dra}
\hyphenation{theory theoretical area areas theorem theorems}
\hyphenation{WCCAV WCCDV }

\newtheorem{tool}[theorem]{Tool}
\newtheorem{lemma}[theorem]{Lemma}

\newcommand{\calS}{{{\mathcal{S}}}}

\newcommand{\SKIPPED}[1]{\footnote{Search the source for \texttt{SKIPPED} to see omitted text.}}

\begin{document}
\sloppy

\title{Weighted Electoral Control}

\author{
  Piotr Faliszewski\\
  AGH University of Science and Technology\\
  Krakow, Poland\\
\and
  Edith Hemaspaandra\\
  Rochester Institute of Technology\\
  Rochester, NY, USA
\and
  Lane A. Hemaspaandra\\
  University of Rochester\\
  Rochester, NY, USA
}

\date{May 4, 2013} 

\maketitle

\begin{abstract}
  Although manipulation and bribery have been extensively studied 
  under 
  weighted voting, there has been almost no work done on
  election control 
  under
  weighted voting.  This is
  unfortunate, since weighted voting appears in many important natural
  settings.  
In this paper, we study the complexity of
  controlling the outcome of weighted elections through adding and deleting
 voters.  %
 We
  obtain polynomial-time algorithms, NP-completeness results, and for
  many NP-complete cases, 
  approximation algorithms.  
  In particular,
  for scoring rules we completely characterize the complexity of 
  weighted voter control.
  Our work shows
  that for quite a few important cases, either polynomial-time exact
  algorithms or polynomial-time approximation algorithms exist.

\end{abstract}

\section{Introduction}
In many real-world election systems the voters come with weights.
Examples range from stockholder elections weighted by shares, to the
US Electoral College, to the often-used example of the Nassau County Board
of Supervisors, to (in effect) any parliamentary system in which the 
parties typically vote as blocks, to Sweden's system of
wealth-weighted voting instituted in 1866 (and no longer used) where
``the wealthiest members of the rural communities received as many as
5,000 votes'' and ``in 10 percent of the districts the weighted votes
of just three voters could be
decisive''~\cite{con:b-chapter:sweden}.

So it is not surprising that in the study of manipulative attacks on
elections, weighted voting has been given great attention.  For
bribery and manipulation, two of the three most studied types of
manipulative attacks on elections, study of the case of weighted
voters has been extensively conducted.  Yet for the remaining one of
the three most studied types of attacks on elections, so-called
control attacks, almost no attention has been given to the case of
weighted voting; to the best of our knowledge, the only time this
issue has been previously raised is in two
M.S./Ph.D. theses~\cite{rus:t:borda,lin:thesis:elections}.  This lack
of attention is troubling, since the key types of control attacks,
such as adding and deleting voters, certainly do occur in many
weighted elections.

We study the complexity in weighted elections of
arguably the most important types of control---adding and deleting
voters---for various election systems.  We focus on
scoring rules, families of scoring rules, and 
(weak)Condorcet-consistent rules.
Control by 
deleting (adding)
voters asks
whether in a given election a given candidate can be made to win by
deleting (adding) at most a certain number of the voters (at most a
certain number of the members of the pool of potential additional
voters).  These control types model issues that are found in many
electoral settings, ranging from human to electronic.  They are
(abstractions of) issues often faced by people seeking to steer an
election, such as experts doing campaign management, and deciding for
example which $k$ people to offer rides to the polls.

Control was introduced (without
weights) in 1992 in the seminal paper by Bartholdi, Tovey, and
Trick~\cite{bar-tov-tri:j:control}.  Control has been the subject
of much attention since.  That attention, and the present paper, are
part of the line of work, started by Bartholdi, Orlin, Tovey, and Trick~\cite{bar-tov-tri:j:manipulating,bar-oli:j:polsci:strategic-voting,bar-tov-tri:j:control}, 
that seeks to determine for which types of manipulative attacks on
elections the attacker's task requires just polynomial-time
computation.
For a more detailed discussion of this line of work, 
we point the reader to the related work section
at the end of the paper
and to the 
surveys~\cite{fal-hem-hem-rot:b:richer,fal-hem-hem:j:cacm-survey,bra-con-end:b:comsoc}.

Our main results are as follows (see Section~\ref{ss:conclusions} for
tables summarizing our results).  First, in
Section~\ref{ss:scoring-protocols} we provide a detailed study of the
complexity of voter control under scoring protocols, for the case of 
fixed numbers of candidates.  We show that both constructive control
by adding voters and constructive control by deleting voters are in
$\p$ for $t$-approval
(and so this 
also covers 
plurality and 
$t'$-veto\footnote{If the number of candidates is fixed, then $t$-veto
  can be expressed as $(m-t)$-approval, where $m$ is the
  number of candidates. If the number of candidates is unbounded,
  then $t$-veto is not $t'$-approval.}) and are
$\np$-complete otherwise.  It is interesting to compare this result to
an analogous theorem regarding weighted coalitional manipulation:
There are cases where the complexities of voter control and
manipulation are the same (e.g., for plurality or for Borda) but there
are also cases where voter control is easier ($t$-approval for $t \geq
2$, for elections with more than $t$ candidates).  Is it ever possible
that weighted voter control is harder than weighted voting
manipulation? We show that weighted voter control is $\np$-hard
for (weak)Condorcet-consistent rules with at least three candidates.
Since weighted coalitional manipulation for the 3-candidate Llull system
is in $\p$~\cite{fal-hem-sch:c:copeland-ties-matter}, together with
the fact that Llull is weakCondorcet-consistent, this implies that
there is a setting where weighted voter control is harder than
weighted coalitional manipulation.

In Sections~\ref{ss:t-approval} and~\ref{ss:approx} we focus on the
complexity of weighted voter control under $t$-approval and $t$-veto, 
for the case of unbounded numbers of candidates.  
At the start of Section~\ref{ss:t-approval}, we will 
explain why these are the most interesting cases.
In
Section~\ref{ss:t-approval} we resolve six problems left open by
Lin~\cite{lin:thesis:elections}.  We establish the complexity of
weighted control by adding voters for $2$-approval, $2$-veto, and
$3$-approval, and of weighted control by deleting voters for
$2$-approval, $2$-veto, and $3$-veto.  In Section~\ref{ss:approx},
we give polynomial-time approximation algorithms for weighted voter
control under $t$-approval and $t$-veto.  Our algorithms seek to
minimize the number of voters that are added or deleted.

We believe that the complexity of weighted voter control,
and more generally 
the complexity of attacks on weighted elections, is an
important and 
interesting research direction that deserves much
further study. In particular, our research suggests
that it is worthwhile to seek 
$f(\cdot)$-approximation results for weighted
elections problems and that doing so can lead to interesting
algorithms.

\section{Preliminaries}
We assume that the reader is familiar with the basic notions of 
computational complexity theory
and the theory of algorithms.  Below we provide
relevant definitions and conventions regarding elections, election
rules, and control in elections. We also review some $\np$-complete
problems 
that we use in our reductions.

\subsubsection*{Elections}
We take an election to be a pair $E = (C,V)$, where $C$ is a set of
candidates and $V$ is a collection of voters.  
Each voter has a preference order over the set $C$. A \emph{preference
  order} is a total, linear order that ranks the candidates from the
most preferred one to the least preferred one. For example, if $C =
\{a,b,c\}$ and some voter likes $a$ best, then $b$, and then $c$, then
his or her preference order is $a > b > c$.
In weighted elections, each voter $v$ also has a positive integer
weight $\omega(v)$. A voter of weight $\omega(v)$ is treated by the
election system as $\omega(v)$ unweighted voters.  Given two
collections of voters, $V$ and $W$, we write $V+W$ to denote their
concatenation. 

\subsubsection*{Election Rules}
An election rule (or voting rule) is a function $R$ that given an
election $E = (C,V)$ returns a subset $R(E) \subseteq C$, namely those
candidates that are said to win the election. 

An $m$-candidate scoring rule is defined through a nonincreasing
vector $\alpha = (\alpha_1, \ldots, \alpha_m)$ of nonnegative
integers. For each voter $v$, each candidate $c$ receives
$\alpha_{\pos(v,c)}$ points, where $\pos(v,c)$ is the position of $c$
in $v$'s preference order. The candidates with the maximum total score are
the winners.  Given an election $E$ and a voting rule $R$ that assigns
scores to the candidates, we write $\score_E(c)$ to denote $c$'s total
score in $E$ under $R$. The voting rule used will always be clear from
context.  Many election rules are defined through families of scoring
rules, with one scoring vector for each possible number of
candidates. For example:
\begin{enumerate}
\item Plurality rule uses vectors of the form $(1,0, \ldots, 0)$.
\item $t$-approval uses vectors $(\alpha_1, \ldots, \alpha_m)$, where
  $\alpha_i = 1$ for each $i \in \{1, \ldots, t\}$, and $\alpha_i = 0$
  for $i > t$. By $t$-veto we mean the system that for $m$
  candidates uses the $(m-t)$-approval scoring vector.  For
  $m$-candidate $t$-approval and $t$-veto systems we will often treat
  each vote as a 0/1 $m$-dimensional approval vector that indicates
  which candidates receive points from the vote. Naturally, such a
  vector contains exactly $t$ ones for $t$-approval and exactly $t$
  zeroes for $t$-veto.\footnote{We emphasize that such a view of
    $t$-approval and $t$-veto is correct in settings where the set
    of candidates remains fixed. If the set of candidates were to
    change (e.g., as in control by adding/deleting candidates), then
    we would have to use the standard, preference-order-based
    definition.}

\item Borda's rule uses vectors of the form $(m-1, m-2, \ldots, 0)$,
  where $m$ is the number of candidates.
\end{enumerate}

Given an election $E = (C,V)$, a candidate $c$ is a (weak) Condorcet
winner if for every other candidate $d \in C -\{c\}$ it holds that
more than half (at least half) of the voters prefer $c$ to $d$. Note
that it is possible that there is no (weak) Condorcet winner in a
given election. We say that a rule $R$ is Condorcet-consistent if 
whenever there is a Condorcet winner he or she is the sole 
winner elected under $R$.
Analogously, a rule is weakCondorcet-consistent if it elects exactly
the weak Condorcet winners whenever they exist.
Every weakCondorcet-consistent system is 
Condorcet-consistent, but the converse does not always hold.

There are many 
Condorcet-consistent 
rules. We will briefly
touch upon the Copeland family of rules and the Maximin rule. For a given
election $E = (C,V)$ and two distinct candidates $c, d \in C$, we let
$N_E(c,d)$ be the number of voters that prefer $c$ to $d$.  Let
$\alpha$ be a rational number, $0 \leq \alpha \leq 1$. Under
Copeland$^\alpha$ the score of candidate $c \in C$ is defined as:
\[
\| \{ d \in C-\{c\} \mid N_E(c,d) > N_E(d,c) \} \| + \alpha \| \{ d
\in C-\{c\} \mid N_E(c,d) = N_E(d,c) \} \|,
\] and under Maximin the score of candidate $c\in C$ 
is defined as $\min_{d \in C-\{c\}}N_E(c,d)$. The
candidates with the highest score are winners.  Llull is another name
for Copeland$^1$. Clearly, Llull and Maximin are weakCondorcet-consistent.

\subsubsection*{Electoral Control}
We focus on constructive control by adding/deleting voters in weighted
elections. However, there are also other standard types of control
studied in the literature (e.g., control by adding/deleting
candidates and various forms of partitioning of candidates and voters; we
point the reader to Section~\ref{sec:related-work} for a discussion of
related work).

\begin{definition}\label{def:wcc}
  Let $R$ be a voting rule.  In both weighted constructive control by
  adding voters under rule $R$ ($R$-WCCAV) and weighted
  constructive control by deleting voters under rule $R$ 
  ($R$-WCCDV), our input contains a set of candidates $C$, a
  collection 
  of weighted voters $V$ 
(sometimes referred to as the registered voters)
with preferences over $C$, a
  preferred candidate $p \in C$, and a nonnegative integer $k$. In
  $R$-WCCAV we also have an additional collection $W$ 
  of weighted voters
  (sometimes referred to as the unregistered voters)
  with preferences over $C$. In these problems we ask the following
  questions:
  \begin{enumerate}
  \item {$R$}-WCCAV: Is there a subcollection $W'\!$ of $W$, of
    at most $k$ voters, such that $p \in R(C,\,V{+}W')$?
  \item {$R$}-WCCDV: Is there a subcollection $V'\!$ of $V$, of
    at most $k$ voters, such that $p \in R(C,\,V{-}V')$?
  \end{enumerate}
\end{definition}

Although in this paper we focus primarily 
on constructive control, Section~\ref{ss:scoring-protocols} makes some 
comments about the
so-called destructive variants of control problems. 
Given a voting rule $R$, weighted destructive control by
adding voters under rule $R$ ($R$-WDCAV) and weighted destructive
control by deleting voters under rule $R$ ($R$-WDCDV) are defined
analogously to their constructive variants, with the only difference
being that the goal is to ensure that the distinguished candidate $p$
is not a winner.

Note that in the above definitions the parameter $k$ defines the number
of voters that can be added/deleted, and not the total weight of the
voters that can be added/deleted. This is a standard approach when
modeling strategic behavior in weighted elections. For example,
in the study
of ``$R$-weighted-bribery''~\cite{fal-hem-hem:j:bribery},
bribing each weighted voter has unit cost
regardless of the voter's 
weight.

We will consider approximation algorithms for WCCAV and WCCDV under
$t$-approval and $t$-veto.  When doing so, we will assume that input
instances do not contain the integer $k$.  Rather, the goal is simply to
find (when success is possible at all) as small as possible a
collection of voters to add/delete such that $p$ is a winner of the
resulting election.  For a positive integer $h$, 
an $h$-approximation
algorithm for WCCAV/WCCDV is an algorithm that (when success 
is possible at all) always finds a solution
that adds/deletes at most $h$ times as many voters as an optimal
action does.
The notion of an 
$f(\cdot)$-approximation
algorithm for WCCAV/WCCDV is defined analogously, where the argument 
to $f$ is some variable related to the problem or instance.  
And the 
meaning of $\bigoh(f(\cdot))$-approximation 
algorithms will be similarly clear from context.
It is natural to worry about how the above seemingly 
incomplete definitions interact 
with the possibility that success might be impossible regardless of 
how many votes one adds/deletes.
However, for $t$-approval WCCDV and $t$-veto WCCDV
(and indeed, for any scoring rule), it is always possible to ensure that
$p$ is a winner, for example 
by deleting all the voters.
For $t$-approval WCCAV and $t$-veto WCCAV, it is possible to ensure
$p$'s victory through adding voters if and only if $p$ is a winner
after we add all the unregistered 
voters that approve of $p$. These observations
make it particularly easy to discuss and study approximation algorithms
for $t$-approval and for $t$-veto, because we can always easily check
whether there is some solution. 
For voting rules 
that don't have this easy-checking property,
such an analysis might be much more complicated.
(The reader may wish to compare our work with 
Brelsford et
al.'s
attempt at
framing a general election-problem approximation 
framework~\cite{bre-fal-hem-sch-sch:c:approximating-elections}.)

In this paper we do not consider candidate-control cases (such as
weighted constructive control by adding candidates and weighted
constructive control by deleting candidates, WCCAC and WCCDC).  The
reason is that for a 
bounded
number of candidates, when
winner
determination in the given weighted election
system is in $\p$ it holds that 
both WCCAC and WCCDC are in $\p$ by brute-force search. On the other
hand, if
the number of candidates is not bounded then candidate control is
already $\np$-hard for plurality (and $t$-approval and $t$-veto, in both
the constructive setting and the destructive setting) even without
weights~\cite{bar-tov-tri:j:control,hem-hem-rot:j:destructive-control,elk-fal-sli:j:cloning,lin:thesis:elections}.
Furthermore, many results for candidate control under Condorcet-consistent
rules can be claimed in the weighted setting.  For example, for
the Maximin rule and for the Copeland family of rules, hardness results
translate immediately, and it is straightforward to see that the existing
polynomial-time algorithms for the unweighted cases also work for the
weighted cases~\cite{fal-hem-hem:j:multimode}.

\subsubsection*{Weighted Coalitional Manipulation}
One of our goals is
to compare the complexity of weighted voter control with the
complexity of weighted coalitional manipulation (WCM)\@.
WCM is similar to WCCAV in that we also add voters, but it differs in
that (a)~we have to add exactly a given number of voters, and (b)~we
can pick the preference orders of the added voters.  It is quite
interesting to see how the differences in these problems' definitions
affect their complexities.

\begin{definition}
  Let $R$ be a voting rule. In $R$-WCM we are
given a weighted election $(C,V)$, a preferred candidate $p \in C$,
and a sequence $k_1, \ldots, k_n$ of positive integers. We ask whether it
is possible to construct a collection $W = (w_1, \ldots, w_n)$ of $n$
voters such that for each $i$, $1 \leq i \leq n$, $\omega(w_i) = k_i$,
and $p$ is a winner of the $R$ election $(C,\,V{+}W)$. 
The voters in $W\!$ are called manipulators.
\end{definition}

\subsubsection*{Computational Complexity}
In
our $\np$-hardness proofs we use reductions from the following
$\np$-complete problems.

\begin{definition}
  An instance of Partition consists of a sequence $(k_1, \ldots, k_t)$
  of positive integers whose sum is even.
 We ask whether there is a set $I \subseteq \{1, \ldots, t\}$
  such that $\sum_{i \in I}k_i = \frac{1}{2}\sum_{i=1}^t k_i$.
\end{definition}

In the proof of Theorem~\ref{t:avdv-scoring-protocols} we will use the
following restricted version of Partition, where we have greater
control over the numbers involved in the problem.

\begin{definition}
  An instance of Partition$\,'\!$ consists of a sequence $(k_1, \ldots,
  k_t)$ of positive integers, whose sum is even,
 such that (a)~$t$ is an even number, and
  (b)~for each $k_i$, $1 \leq i \leq t$, it holds that $k_i \geq
  \frac{1}{t+1}\sum_{j=1}^t k_j$.  We ask whether there is a set $I
  \subseteq \{1, \ldots, t\}$ of cardinality $\frac{t}{2}$ such that
  $\sum_{i \in I}k_i = \frac{1}{2}\sum_{i=1}^t k_i$.
\end{definition}

Showing the $\np$-completeness of this problem is a standard exercise.
(In particular, the $\np$-completeness of a variant of this 
problem 
is established as~\cite[Lemma~2.3]{fal-hem-hem:j:bribery}; the same
approach can be used to show the $\np$-completeness of 
Partition$'$.)
Our remaining hardness proofs are based on reductions from a 
restricted version of the well-known Exact-Cover-By-3-Sets problem.
This restricted version is still NP-complete~\cite{gar-joh:b:int}.

\begin{definition}
\label{def:x3c}
  An instance of X3C$\,'\!$ consists of a set $B =
  \{b_1, \ldots, b_{3t}\}$ and a family $\calS =
\{S_1,\allowbreak \ldots, S_n\}$
  of $3$-element subsets of $B$ such that
every element of $B$ occurs
in at least one and in at most three sets in ${\cal S}$.
We ask 
whether ${\cal S}$ contains an exact
cover for $B$, i.e., whether there exist
$t$ sets in  ${\cal S}$ whose union is $B$.
\end{definition}

\section{Results}
We now present our results. In
Section~\ref{ss:scoring-protocols} we focus on 
fixed numbers of
candidates in scoring protocols 
and (weak)Condorcet-consistent
rules. Then in Sections~\ref{ss:t-approval} and~\ref{ss:approx} we
consider case of an unbounded number of candidates, for $t$-approval and
$t$-veto.

\subsection{Scoring Protocols and Manipulation Versus  Control}\label{ss:scoring-protocols}

It is well-known that weighted manipulation of scoring protocols is
always hard, unless the scoring protocol is in effect
plurality or triviality~\cite{hem-hem:j:dichotomy}.  In contrast,
weighted voter control is easy for $m$-candidate $t$-approval.

\begin{theorem}
\label{t:t-approval-in-P}
For all $m$ and $t$, WCCAV and WCCDV for $m$-candidate $t$-approval are in $\p$.
\end{theorem}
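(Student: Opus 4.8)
The plan is to exploit that with $m$ and $t$ held fixed, a $t$-approval vote is nothing more than a choice of which $t$ of the $m$ candidates to approve, so there are only a constant number (at most $\binom{m}{t}$) of distinct vote \emph{types}. An optimal control action is then essentially determined by how \emph{many} voters of each type are added or deleted, and there are only polynomially many such count-vectors, so we can afford to try them all.

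First I would record two reduction observations. For WCCDV, deleting a voter who approves $p$ never helps: such a deletion lowers $\score(p)$ by the voter's weight, and for every rival $c$ it changes $\score(p)-\score(c)$ by either $-\omega(v)$ (if $c$ is not approved by $v$) or $0$ (if $c$ is approved by $v$) --- so the deletion is at best neutral for every constraint ``$\score(p)\ge\score(c)$'' and merely consumes budget. Hence we may assume every deleted voter is one that does not approve $p$. Symmetrically, for WCCAV, adding a voter who does not approve $p$ never helps, so we may assume every added voter is an unregistered voter that approves $p$ (if there are no such unregistered voters, the only feasible action is the empty one, and we simply test whether $p$ already wins).

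Second, I would observe that once the \emph{number} of voters to delete (resp., add) of a given type $\tau$ is fixed, it is optimal to take the heaviest such voters. For deletion, the only effect of deleting weight $y_\tau$ of type $\tau$ on a rival $c$ is to lower $\score(c)$ by $y_\tau$ when $c\in\tau$ (and do nothing otherwise), so maximizing $y_\tau$ weakly helps every constraint simultaneously. For addition, adding weight $x_\tau$ of a $p$-approving type $\tau$ raises $\score(p)$ by $x_\tau$ and raises $\score(c)$ by $x_\tau$ exactly for the rivals $c\in\tau$; writing out the constraint gives $\score_V(p)-\score_V(c)+\sum_{\tau:\,c\notin\tau}x_\tau\ge 0$, which is again monotone nondecreasing in each $x_\tau$, so for fixed counts we should add the heaviest voters of each type. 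The algorithm is then brute force over count-vectors: group the eligible voters by type, say type $\tau$ contains $n_\tau$ of them with $\sum_\tau n_\tau\le n$; enumerate all integer vectors $(d_\tau)_\tau$ with $0\le d_\tau\le n_\tau$, of which there are at most $\prod_\tau(n_\tau+1)\le (n+1)^{\binom{m}{t}}$, a polynomial in $n$ since the exponent is a constant; for each vector with $\sum_\tau d_\tau\le k$, realize it by taking the heaviest voters of each type, recompute the at most $m$ scores, and accept if $p$ is a winner; accept overall iff some vector works. This runs in polynomial time, giving WCCAV and WCCDV in $\p$.

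The step I expect to be the main (if modest) obstacle is the optimality-of-heaviest-voters argument under weights: one must verify that within a type only the total deleted/added weight matters and that pushing that weight in the ``good'' direction is simultaneously monotone for all of the $\le m-1$ rival constraints, which is what makes the reduction to a constant number of integer count-variables valid. Once that is in hand, the polynomial bound on the number of count-vectors is immediate from $m$ and $t$ being constants. (Since for fixed $m$ the system $t$-veto is literally $(m-t)$-approval, the same argument also disposes of $t$-veto; and the destructive variants WDCAV/WDCDV are handled identically, merely replacing the final test ``$p$ is a winner'' by ``$p$ is not a winner'' and flipping the two reduction observations accordingly.)
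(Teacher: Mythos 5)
Your proposal is correct and follows essentially the same route as the paper's proof: restrict to voters approving $p$ (for adding) or not approving $p$ (for deleting), observe that within each of the constantly many approval types only the count matters and the heaviest voters are optimal, and brute-force over the polynomially many count-vectors. Your explicit monotonicity check justifying the "heaviest voters" step is a welcome elaboration of what the paper merely asserts, but the argument is the same.
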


\begin{proof}
Let $(C, V, W, p, k)$ be an instance
of WCCAV for $m$-candidate $t$-approval.
We can assume that we add only voters who approve of $p$.
We can also assume that we add the heaviest voters with a particular 
set of approvals,
i.e., if we add $\ell$ voters approving $p, c_1, \ldots, c_{t - 1}$,
we can assume that we added the $\ell$ heaviest voters approving
$p, c_1, \ldots, c_{t - 1}$.  Since there are only
${m-1 \choose t-1}$---which is a constant---different sets 
of approvals to consider, it
suffices to try all sequences of nonnegative integers
$k_1, \ldots, k_{m-1 \choose t-1}$ whose sum is at most $k$,
and for each such sequence 
to check whether adding the heaviest $k_i$ voters of the
$i$th approval collection
makes $p$ a winner.

The same argument works for WCCDV\@.  Here, we delete only voters that do
not approve of $p$, and again we delete the heaviest voters for each 
approval collection.
\end{proof}

One might think that the argument above works for any scoring
protocol, but this is not the case.  For example, consider the
3-candidate Borda instance where $V$ consists of one weight-1 voter $b
> p > a$ and $W$ consists of a weight-2 and a weight-1 voter with
preference order $a > p > b$.  Then adding the weight-1 voter makes
$p$ a winner, but adding the weight-2 voter does not.  And, in fact,
we have the following result.\footnote{We mention in passing that
  an analogue of this theorem in the model in
  which we are bounding the total weight of votes that can be
  added/deleted was obtained by Russell~\cite{rus:t:borda}.}

\begin{theorem}\label{t:bounded-borda}
WCCAV and WCCDV for Borda are $\np$-complete.  This result holds
even when restricted to a fixed number $m \geq 3$ of candidates.
\end{theorem}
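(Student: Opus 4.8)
The plan is to prove both claims — membership in $\np$ and $\np$-hardness — with the hardness part being the substantive one, and to do it for a fixed number of candidates, since that is the stronger statement and it immediately yields the unbounded-candidate case as well. Membership in $\np$ is routine: a solution is a subcollection $W' \subseteq W$ (for WCCAV) or $V' \subseteq V$ (for WCCDV) of size at most $k$, which is a polynomial-size certificate, and checking whether $p$ wins the resulting Borda election is a polynomial-time computation (compute each candidate's weighted Borda score and compare). So the work is entirely in the reductions.

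For the hardness, I would reduce from Partition (the plain version suffices, though one may prefer Partition$'$ for cleaner bookkeeping). Fix $m = 3$ candidates $\{p, a, b\}$; once the $m = 3$ case is done, padding with $m - 3$ hopeless extra candidates placed at the bottom of every vote handles all larger fixed $m$. The idea, suggested by the worked $3$-candidate Borda example just before the theorem, is that under Borda a weight-$w$ voter with order $a > p > b$ contributes $2w$ to $a$, $w$ to $p$, $0$ to $b$, so such a voter shifts the $a$-versus-$p$ gap by $w$ per unit weight while a symmetric voter $b > p > a$ shifts the $b$-versus-$p$ gap. Given a Partition instance $(k_1,\dots,k_t)$ with sum $2K$, for WCCAV I would let $W$ contain, for each $i$, a voter of weight $k_i$ with preference $a > p > b$ and a voter of weight $k_i$ with preference $b > p > a$, set up the registered voters $V$ so that $p$ currently trails both $a$ and $b$ by exactly $K$ (in the appropriate Borda-score units, using a few fixed-weight voters), and set the budget $k = t$. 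Adding a weight-$k_i$ "$a > p > b$" voter reduces $a$'s lead over $p$; adding a weight-$k_i$ "$b > p > a$" voter reduces $b$'s lead over $p$. With budget exactly $t$ one is forced to add exactly one of the two weight-$k_i$ voters for each $i$ (adding fewer than $t$ cannot close both gaps; the numbers should be chosen so that omitting any index leaves a gap open), and then the set $I$ of indices for which the "$a$-side" voter was chosen must satisfy $\sum_{i \in I} k_i = K$ and $\sum_{i \notin I} k_i = K$ simultaneously — exactly a Partition solution. One has to tune the constant-weight registered voters and the exact win condition (weak win, i.e. $p$ need only tie) so the arithmetic forces equality rather than inequality; using Partition$'$, whose target subset has prescribed cardinality $t/2$ and whose elements are not too small, makes it easier to rule out degenerate solutions that add voters on only one side.

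For WCCDV the construction is dual: put the $2t$ weight-$k_i$ voters (one "$a > p > b$" and one "$b > p > a$" per index) into the \emph{registered} collection $V$ together with fixed-weight filler voters arranged so that $p$ trails $a$ and $b$ each by exactly $K$ when all voters are present, and ask whether deleting at most $k = t$ voters can make $p$ a winner. Deleting a weight-$k_i$ "$a > p > b$" voter helps $p$ against $a$; deleting the companion helps $p$ against $b$. Again the budget $t$ forces deleting exactly one voter per index, and the partition of indices into "deleted-from-$a$-side" versus "deleted-from-$b$-side" must split the weights as $K$ and $K$, giving the Partition solution; conversely a Partition solution yields a valid deletion set.

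The main obstacle I anticipate is the arithmetic fine-tuning: making the initial deficits, the filler voters' weights, and the exact/weak-winner condition mesh so that (i) a Partition solution maps to a genuine control solution with $p$ winning (not merely "almost" winning), (ii) any control solution of size at most $k$ actually uses its whole budget and in the balanced one-per-index pattern, and (iii) the third candidate's score (and, in the padded case, the extra candidates' scores) never becomes an issue. Choosing $m = 3$ keeps the Borda vectors simple — $(2,1,0)$ — and symmetric between $a$ and $b$, which is what makes the Partition encoding work; the bulk of the remaining write-up is verifying these numeric conditions, which I would organize as a forward implication (Partition solution $\Rightarrow$ control solution) and a converse (control solution $\Rightarrow$ Partition solution), each a short direct calculation.
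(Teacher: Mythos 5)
There is a genuine gap, and it is structural rather than the ``arithmetic fine-tuning'' you flag as the main obstacle. In your WCCAV construction you assert that adding a weight-$k_i$ voter with order $a > p > b$ \emph{reduces} $a$'s lead over $p$; under the $3$-candidate Borda vector $(2,1,0)$ it does the opposite: that voter gives $a$ an extra $2k_i$ points and $p$ an extra $k_i$ points, so $a$'s lead over $p$ \emph{grows} by $k_i$ while $b$'s lead shrinks by $k_i$. Consequently, if you add $a$-side voters of total weight $L_a$ and $b$-side voters of total weight $L_b$, the gap of $a$ over $p$ becomes $K + L_a - L_b$ and the gap of $b$ over $p$ becomes $K + L_b - L_a$; their sum is invariant at $2K > 0$, so at least one gap stays positive no matter what you add. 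Your symmetric design (both initial deficits equal to $K$, one voter of each orientation per index) therefore can never make $p$ a winner, and the forward direction of your reduction fails outright. The same invariance kills the dual WCCDV construction: deleting an $a > p > b$ voter narrows $a$'s gap but widens $b$'s by the same amount.

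The fix is to make the initial position \emph{asymmetric}, which is what the paper does: a single registered voter of weight $K$ voting $b > p > a$ (so $p$ starts $K$ behind $b$ but $K$ ahead of $a$), and $t$ unregistered voters of weights $k_1,\ldots,k_t$ all voting $a > p > b$, with addition limit $t$. Adding total weight $L$ of these gives $p$ a margin of $K - L$ over $a$ and a deficit of $K - L$ to $b$, so $p$ wins iff $L = K$ exactly --- no cardinality forcing, no Partition$'$, and no per-index pairing is needed. The deletion case reuses the identical election with all voters registered and deletion limit $t$ (one cannot delete the lone $b > p > a$ voter without handing the election to $a$). Your $\np$-membership argument and the idea of padding with bottom-ranked dummies for $m > 3$ are fine, but the core gadget needs to be replaced along these lines.
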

\begin{proof}
We reduce from Partition.  Given a sequence $k_1, \ldots, k_t$ of
positive integers that sum to $2K$, construct an election with one
registered voter of weight $K$ voting $b > p > a > \cdots$, and 
$t$ unregistered voters with weights $k_1, \ldots, k_t$ voting $a > p > b > \cdots$.
Set the addition limit to $t$. 
It is easy to see that 
for $p$ to become a winner, $b$'s score (relative to $p$) needs to go down 
by at least $K$, while $a$'s score (relative to $p$) should not go up by
more than $K$.  It follows that $k_1, \ldots, k_t$  has a partition
if and only if $p$ can be made a winner.

We use the same construction for the deleting voters case.  Now, all
voters are registered and the deletion limit is $t$. Since we can't
delete all voters, and since our goal is to make $p$ a winner, we
can't delete the one voter voting $b > p > a > \cdots$.  The rest of the argument
is identical to the adding voters case.
\end{proof}

Interestingly, it is possible to extend the above proof to work for
all scoring protocols other than $t$-approval (the main idea stays the
same, but the technical details are more involved). And so, regarding
the complexity of WCCAV and WCCDV for scoring protocols with a fixed
number of candidates, the cases of Theorem~\ref{t:t-approval-in-P} are
the only P cases (assuming P $\neq$ NP).

\begin{theorem}\label{t:avdv-scoring-protocols}
  For each scoring protocol $(\alpha_1, \ldots, \alpha_m)$, if there
  exists an $i$, $1 < i < m$, such that $\alpha_1 > \alpha_i >
  \alpha_m$, then WCCAV and WCCDV for $(\alpha_1, \ldots, \alpha_m)$
  are $\np$-complete.
\end{theorem}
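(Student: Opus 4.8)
The plan is to generalize the Partition reduction from Theorem~\ref{t:bounded-borda}. The key observation is that what made the Borda argument work was having a scoring position (the top one, value $\alpha_1 = m-1$) that is strictly above an "intermediate" position and strictly above the bottom, so that shifting a weighted voter's ballot between two configurations changes the relative score of two candidates by an amount proportional to the voter's weight, and this change can be tuned to encode "did we pick this number into the set $I$ or not." In the Borda case the two rival candidates $a$ and $b$ occupied positions $1$ and $3$ in each unregistered ballot with $p$ sandwiched at position $2$; here we have less freedom about where the relevant gaps sit, which is exactly why "the technical details are more involved." So first I would fix the hypothesized index $i$ with $\alpha_1 > \alpha_i > \alpha_m$, and set $\delta_1 = \alpha_1 - \alpha_i > 0$ and $\delta_2 = \alpha_i - \alpha_m > 0$; these are the two "lever arms" available.

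Next I would build the WCCAV instance from an instance of Partition$'$ (the restricted variant defined in the excerpt — it gives us $t$ even and each $k_j$ at least $\frac{1}{t+1}$ of the total, which is what lets us control scores and force $|I| = t/2$). For each $j$ I would create an unregistered voter $v_j$ of weight $k_j$ whose ballot places $p$ at position $i$, a "high rival" $a$ at position $1$, and a "low rival" $b$ at position $m$, with the remaining candidates filling the other positions in some fixed order; the effect of adding $v_j$ is to move $a$ ahead of $p$ by $\delta_1 k_j$ and to let $p$ pull ahead of $b$ by $\delta_2 k_j$ (per unit of the relevant gap). I would then add a small number of registered voters (with appropriate weights) to set the initial scores of $a$, $b$, and $p$ so that: (i)~$p$ beats $b$ only if the total weight of the added voters is at least half of $\sum_j k_j$, and (ii)~$p$ beats $a$ only if that total weight is at most half of $\sum_j k_j$. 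Combined with a cardinality constraint on the number added (forced via the bound $k = t/2$ together with condition~(b) of Partition$'$, which makes every $k_j$ "large" so that taking more than $t/2$ of them already overshoots), this makes $p$ a winner exactly when the chosen indices form an exact half-sum partition. Membership in $\np$ is immediate since winner determination for a fixed-candidate scoring protocol is polynomial-time. For WCCDV I would reuse the same election with all voters registered, deletion bound $t/2$, and a dummy voter (or padding in the registered block) that cannot profitably be deleted, exactly mirroring the Borda proof's final paragraph.

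The main obstacle is arranging the bookkeeping for the candidates other than $p$, $a$, and $b$, and for positions other than $1$, $i$, $m$: the scoring vector may have many distinct values, and whatever order we fix for the "filler" candidates on the $v_j$ ballots contributes to their scores in a way that scales with $\sum_{j \in I} k_j$, so we must ensure none of these fillers can overtake $p$ in any scenario consistent with a successful partition. I expect to handle this by giving the filler candidates a large negative head start via the registered voters — since $m$ is a fixed constant, there are only constantly many scoring values and constantly many fillers, so a uniform polynomially-bounded offset suffices — and by being slightly careful that the same offsets simultaneously enforce conditions~(i) and~(ii) above. A secondary subtlety is that if $\alpha_1 = \alpha_i$ fails only because $i=1$, or $\alpha_i = \alpha_m$ fails only because $i = m$, the hypothesis $1 < i < m$ together with $\alpha_1 > \alpha_i > \alpha_m$ rules these out, so the two lever arms $\delta_1, \delta_2$ are genuinely positive and the weights actually matter — this is the place where the theorem's hypothesis is used, and it is why $t$-approval (where one of the two gaps is always zero) escapes.
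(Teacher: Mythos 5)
Your overall strategy is the same as the paper's: reduce from Partition using three distinct score values as ``lever arms,'' with a triangle of candidates $p$, $a$, $b$ and filler candidates that must be suppressed. Two points, however, need more than you give them.

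First, the filler suppression. You cannot literally give fillers ``a large negative head start'': scoring values are nonnegative, so every registered vote you add to push a filler down also hands points to $p$, $a$, and $b$ in whatever positions they occupy, and you must do this without perturbing the delicate $p$-vs-$a$ and $p$-vs-$b$ balances that encode the partition. The paper's fix is to symmetrize: for each filler $c$ it adds six blocks of votes, one per permutation of $(p,a,b)$ over the three top-valued positions, with $c$ in the bottom position. This gives $p$, $a$, $b$ identical totals from these votes while burying each filler. You correctly identify the obstacle, but ``a uniform polynomially-bounded offset suffices'' is not yet an argument; some such symmetrization is the missing construction.

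Second, and more seriously, the WCCDV case does not ``exactly mirror the Borda proof's final paragraph.'' In the Borda reduction there is a single registered voter, so it is trivial that deletions must come from the Partition gadget. In your construction the registered block necessarily contains many filler-suppression voters, and an adversary may delete up to $t/2$ of \emph{those} instead, which unbalances the symmetrized blocks and shifts the relative scores of $p$, $a$, and $b$. Ruling this out is where the paper does its real work: it scales all gadget weights by a large (but polynomial) factor $T \approx \frac{t}{2}(t+1)\frac{\gamma_1}{\gamma_1-\gamma_2}$, uses the Partition$'$ lower bound $k_j \geq \frac{2K}{t+1}$ to lower-bound the weight removable from the Partition voters, and then shows by an explicit inequality that deleting at most $t/2$ weight-$K$ registered voters cannot close the remaining $a$-over-$p$ gap. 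Without some quantitative argument of this kind, your deleting-voters direction has a genuine hole. (Your use of Partition$'$ with bound $t/2$ for the adding-voters case is fine and in fact slightly cleaner than the paper, which uses plain Partition there.)
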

\begin{proof}
  Let $\alpha = (\alpha_1, \ldots, \alpha_m)$ be a scoring protocol
  such that there is an $i$ such that $\alpha_1 > \alpha_i >
  \alpha_m$.  Let $x$ be the third largest value in the set
  $\{\alpha_1, \ldots, \alpha_m\}$. We will show that WCCAV and WCCDV
  are $\np$-complete for scoring protocol $\beta = (\beta_1, \ldots,
  \beta_m) = (\alpha_1 - x, \ldots, \alpha_m-x)$.  While formally we
  have defined scoring protocols to contain only nonnegative values,
  using $\beta$ simplifies our construction and does not affect the
  correctness of the proof.
  To further simplify notation, given some candidates $x_1, \ldots,
  x_\ell$, by $F[ x_1 = \beta_{i_1}, x_2 = \beta_{i_2}, \ldots, x_\ell
  = \beta_{i_\ell}]$ we mean a fixed preference order that ensures,
  under $\beta$, that each $x_j$, $1 \leq j \leq \ell$, is ranked at a
  position that gives $\beta_{i_j}$ points. (The candidates not
  mentioned in the $F[\ldots]$ notation are ranked arbitrarily.)
  We let $\gamma_1$, $\gamma_2$, and $\gamma_3$ be the three highest
  values in the set $\{ \beta_1, \ldots, \beta_m\}$.  Clearly,
  $\beta_1 = \gamma_1 > \gamma_2 > \gamma_3 = 0$.
   
  We give a reduction from Partition to $\beta$-WCCAV (the membership
  of $\beta$-WCCAV in $\np$ is clear); let $(k_1, \ldots, k_t)$ be an
  instance of Partition, i.e., a sequence of positive integers that
  sum to $2K$.  We form an election $E = (C,V)$ where $C =
  \{p,a,b,c_4, \ldots, c_m\}$ and where the collection $V$ contains
  the following three groups of voters (for the WCCAV part of the proof
  below, we set $T=1$; for the WCCDV part of the proof we will use the
  same construction but with a larger value of $T$):
  \begin{enumerate}
  \item A group of $T$ voters, each with weight $K$ and preference order $F[b =
    \gamma_1, a = \gamma_2, p = 0]$.
  \item A group of $T$ voters, each with weight $K$ and preference order $F[p =
    \gamma_1, b = \gamma_2, a = 0]$.
  \item For each $c_i \in C$, there are $6$ collections of $2T$ voters, one collection for each
    permutation $(x,y,z)$ of $(p,a,b)$; the voters in each collection have weight $K$ and
    preference order $F[x = \beta_1, y = \beta_2, z = \beta_3, c_i =
    \beta_m]$.
  \end{enumerate}

  Let $M$ be the number of points that each of $a$, $b$, and $p$
  receive from the third group of voters (each of these candidates
  receives the same number of points from these voters). For each $c_i
  \in C$ and each $x \in \{p,a,b\}$, $x$ receives at least
  $4TK\gamma_1$ points more than $c_i$ from the voters in the third
  group (in each vote in the third group, $x$ receives at least as
  many points as $c_i$, and there are two collections of $2T$ voters
  where $x$ receives $\gamma_1$ points and $c_i$ receives $\gamma_m
  \leq 0$ points). Thus it holds that our candidates have the
  following scores:
  \begin{enumerate}
  \item $p$ has $M + TK\gamma_1$ points,
  \item $a$ has $M + TK\gamma_2$ points, 
  \item $b$ has $M + TK(\gamma_1+\gamma_2)$ points, and
  \item each candidate $c_i \in C$ has at most $M - 2TK\gamma_1$
    points.
  \end{enumerate}
  As a result, $b$ is the unique winner.  There are $t$ unregistered
  voters with weights $Tk_1, \ldots, Tk_t$, each with preference order
  $F = [a = \gamma_1, p = \gamma_2, b = 0]$. We set the addition limit
  to be $t$.  It is clear that irrespective of which voters are added,
  none of the candidates in $\{c_4, \ldots, c_m\}$ becomes a
  winner.

  If there is a subcollection of $(k_1, \ldots, k_t)$ that sums to
  $K$, then adding corresponding unregistered voters to the election
  ensures that all three of $p$, $a$, and $b$ are winners.  On the
  other hand, assume that there are unregistered voters of total
  weight $TL$, whose addition to the election ensures that $p$ is
  among the winners. For $p$ to have score at least as high as $b$, we must have
  that $L \geq K$. However, for $a$ not to have score higher than $p$,
  it must be that $L \leq K$. This means that $L = K$.  Thus it is
  possible to ensure that $p$ is a winner of the election by adding at
  most $t$ unregistered voters if and only if there is a subcollection
  of $(k_1, \ldots, k_t)$ that sums to $K$.  
  And, completing the proof, we note that 
  the reduction can be carried out in
  polynomial time.

  Let us now move on to the case of WCCDV\@. We will use the same
  construction, but with the following modifications:
  \begin{enumerate}
  \item Our reduction is now from Partition$'$. Thus without loss of 
    generality we can
    assume that $t$ is an even number and that for each $i$, $1 \leq i
    \leq t$, it holds that $k_i \geq \frac{1}{1+t}2K$.
  \item We set $T = \left \lceil
    \frac{t}{2}(t+1)\frac{\gamma_1}{\gamma_1-\gamma_2}\right\rceil+1$ (the
    reasons for this choice of $T$ will become apparent in the course
    of the proof; intuitively it is convenient to think of $T$ as of a
    large value that, nonetheless, is polynomially bounded with
    respect to $t$).
  \item We include the unregistered voters as ``the fourth group of
    voters.''
  \item We set the deletion limit to $\frac{t}{2}$.
  \end{enumerate}

  By the same reasoning as in the WCCAV case, it is easy to see that
  if there is a size-$\frac{t}{2}$ subcollection of $k_1, \ldots,
  k_t$ that sums to $K$, then deleting the corresponding voters
  ensures that $p$ is among the winners (together with $a$ and $b$).
  We now show that if there is a way to delete up to $\frac{t}{2}$
  voters to ensure that $p$ is among the winners, then the deleted
  voters must come from the fourth group, must have total weight $K$,
  and there must be exactly $\frac{t}{2}$ of them.  For the sake of
  contradiction, let us assume that it is possible to ensure $p$'s
  victory by deleting up to $\frac{t}{2}$ voters, of whom fewer than
  $\frac{t}{2}$ come from the fourth group. Let $s$ be the number of
  deleted voters from the fourth group ($s < \frac{t}{2}$) and let
  $x$ be a real number such that $xTK$ is their total weight. We have
  that $xTK$ is at most:
  \[ 
  xTK \leq 2TK - \frac{t - s}{1+t}(2TK) \leq 2TK\left(1 -
    \frac{\frac{t}{2}+1}{1+t}\right) = TK\frac{t}{1+t}.
  \]
  That is, we have $0 \leq x \leq \frac{t}{1+t}$.  Prior to deleting
  any voters, $a$ has $TK(\gamma_1-\gamma_2)$ points more than
  $p$. After deleting the $s$ voters from the fourth group, this
  difference decreases to $TK(1-x)(\gamma_1 - \gamma_2)$. If we
  additionally delete up to $\frac{t}{2}$ voters from the first three
  groups of voters, each with weight $K$, then the difference between the scores
  of $a$ and $p$ decreases, at most, to the following value (note that in
  each deleted vote both $a$ and $p$ are ranked at positions where
  they receive $\gamma_1$, $\gamma_2$ or $0$ points):
  \[
  TK(1-x)(\gamma_1 - \gamma_2) -\frac{t}{2}K\gamma_1 \geq
  TK\frac{1}{t+1}(\gamma_1-\gamma_2) - \frac{t}{2}K\gamma_1 =
  K\left(\frac{T(\gamma_1-\gamma_2)}{t+1} -
    \frac{\frac{t}{2}(t+1)\gamma_1}{t+1}\right) > 0.
  \]
  The final inequality follows by our choice of $T$. The above
  calculation shows that if there is a way to ensure $p$'s victory by
  deleting up to $\frac{t}{2}$ voters then it requires deleting
  exactly $\frac{t}{2}$ voters from the fourth group. The same
  reasoning as in the case of WCCAV shows that these $\frac{t}{2}$
  deleted voters must correspond to a size-$\frac{t}{2}$ subcollection of
  $(k_1, \ldots, k_t)$ that sums to $K$.
\end{proof}
As a side comment, we mention that WDCAV and
WDCDV for scoring protocols (that is, the destructive variants of
WCCAV and WCCDV) have simple polynomial-time algorithms: It suffices
to loop through all candidates $c$, $c \neq p$, and greedily
add/delete voters to boost the score of $c$ relative to $p$ as much as
possible.

Combining Theorems~\ref{t:t-approval-in-P}
and~\ref{t:avdv-scoring-protocols}, 
we obtain the following corollary, which
we contrast with an analogous result for
WCM~\cite{hem-hem:j:dichotomy}.

\begin{corollary}
  For each scoring protocol $(\alpha_1, \ldots, \alpha_m)$ the
  problems WCCAV
  and WCCDV
  are $\np$-complete if $\| \{\alpha_1, \ldots, \alpha_m\}\| \geq 3$
  and are in $\p$ otherwise.
\end{corollary}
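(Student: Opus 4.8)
The plan is to obtain this corollary purely by combining Theorems~\ref{t:t-approval-in-P} and~\ref{t:avdv-scoring-protocols}, splitting on the number of distinct values appearing in $\alpha = (\alpha_1,\ldots,\alpha_m)$. Throughout I would use that a scoring protocol is nonincreasing, so $\alpha_1 = \max_i \alpha_i$ and $\alpha_m = \min_i \alpha_i$.

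In the case $\|\{\alpha_1,\ldots,\alpha_m\}\| \geq 3$, I would let $v$ be the second-largest value occurring in $\alpha$; then $\alpha_1 > v > \alpha_m$, and taking $i$ to be the least index with $\alpha_i = v$ gives $1 < i < m$ (since $\alpha_i \neq \alpha_1$ forces $i>1$ and $\alpha_i \neq \alpha_m$ forces $i<m$) together with $\alpha_1 > \alpha_i > \alpha_m$. Theorem~\ref{t:avdv-scoring-protocols} then immediately yields $\np$-completeness of WCCAV and WCCDV for $\alpha$.

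In the case $\|\{\alpha_1,\ldots,\alpha_m\}\| \leq 2$, I would argue that $\alpha$ induces the same winner function as some $t$-approval rule (or is trivial). If only one value occurs, every candidate always gets the same score, so $p$ always wins and both problems are in $\p$. If exactly two values occur, then, being nonincreasing, $\alpha = (c_1,\ldots,c_1,c_2,\ldots,c_2)$ with $t$ copies of $c_1$ and $m-t$ of $c_2$, where $c_1 > c_2 \geq 0$ and $1 \leq t < m$. The point is that passing from $\alpha_j$ to $(\alpha_j-c_2)/(c_1-c_2)$ does not alter the winner set of any election: subtracting the constant $c_2$ lowers every candidate's score by $c_2$ times the total voter weight (each voter ranks each candidate in exactly one position), and dividing by the positive constant $c_1-c_2$ rescales all scores by the same positive factor, so the set of score-maximizers is unchanged. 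The transformed vector is the $t$-approval vector $(1,\ldots,1,0,\ldots,0)$, so $\alpha$-WCCAV and $\alpha$-WCCDV are literally the problems $t$-approval-WCCAV and $t$-approval-WCCDV, which are in $\p$ by Theorem~\ref{t:t-approval-in-P}.

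I do not expect a genuine obstacle: the corollary just repackages the two theorems. The only steps warranting a little care are (a)~checking, in the three-value case, that the intermediate value sits at an index strictly between $1$ and $m$ so that Theorem~\ref{t:avdv-scoring-protocols} literally applies, and (b)~justifying, in the two-value case, that an affine rescaling of the scoring vector by a positive multiplicative constant plus an additive constant leaves every election's winner set fixed, so that the control problems coincide exactly (not merely up to reduction) with those for $t$-approval.
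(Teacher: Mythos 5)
Your proposal is correct and matches the paper's (essentially unstated) argument: the paper derives this corollary simply by "combining" Theorems~\ref{t:t-approval-in-P} and~\ref{t:avdv-scoring-protocols}, and your case split on the number of distinct values, together with the observation that any two-valued (or one-valued) nonincreasing scoring vector is winner-equivalent to $t$-approval (or triviality) via a positive affine rescaling, is exactly the intended gluing. The two details you flag---that the intermediate value sits at an index strictly between $1$ and $m$, and that affine rescaling preserves winner sets---are handled correctly.
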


\begin{theorem}[Hemaspaandra and Hemaspaandra~\cite{hem-hem:j:dichotomy}]
  For each scoring protocols $(\alpha_1, \ldots, \alpha_m)$, $m \geq
  2$, WCM is $\np$-complete if $\alpha_2 > \alpha_m$ and is in 
  $\p\!$ otherwise.
\end{theorem}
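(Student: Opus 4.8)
\quad The plan is to treat the two halves of the dichotomy separately. Membership of $R$-WCM in $\np$ is immediate: guess the $n$ manipulator preference orders (a string of size $\bigoh(nm)$, polynomial in the input, which already lists $k_1,\dots,k_n$), then compute every candidate's score and check whether $p$ is a winner.

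For the easy direction, suppose $\alpha_2=\alpha_m$; by monotonicity this means $\alpha=(\alpha_1,\alpha_2,\alpha_2,\dots,\alpha_2)$ with $\alpha_1\ge\alpha_2$. From any one manipulator $w_i$, the candidate ranked first receives $\alpha_1\omega(w_i)$ points and every other candidate receives exactly $\alpha_2\omega(w_i)$ points. Hence ranking $p$ first in $w_i$'s vote simultaneously gives $p$ the largest contribution it can get from $w_i$ and gives every competitor the smallest contribution it can get from $w_i$, so the ``all manipulators rank $p$ first'' profile weakly dominates every other manipulator profile with respect to the manipulators' goal. (When $\alpha_1=\alpha_2$ the rule is constant and $p$ is always a winner; this case is subsumed.) The polynomial-time algorithm is therefore: set every manipulator to rank $p$ first, compute the scores, and accept iff $p$ is a winner.

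For the hard direction, assume $\alpha_2>\alpha_m$; note this forces $m\ge 3$. Subtracting $\alpha_m$ from every coordinate shifts all candidates' scores by the same amount and so does not change the winner set, so we may assume $\alpha_m=0$, hence $\alpha_1\ge\alpha_2>0=\alpha_m$. We reduce from Partition. Given $(k_1,\dots,k_t)$ with $\sum_i k_i=2K$, we first multiply every $k_i$ by a fixed constant $c$ depending only on $\alpha_1,\alpha_2$ (chosen below); this preserves yes/no-ness. We then build an election over $C=\{p,a,b,c_4,\dots,c_m\}$ with $t$ manipulators of weights $k_1,\dots,k_t$, together with $\bigoh(1)$ groups of registered voters engineered so that, before manipulation, $a$ and $b$ are tied, lead $p$ by exactly $D:=K(2\alpha_1-\alpha_2)$ points, and each $c_i$ trails $p$ by a large margin; the registered groups are ``balanced'' pairs such as $a>b>p>c_4>\cdots$ together with $b>a>p>c_4>\cdots$, and $p>a>b>c_4>\cdots$ together with $p>b>a>c_4>\cdots$, and $c$ is picked precisely so the required lead $D$ is realizable with positive integer weights. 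A manipulator can only help $p$ by ranking $p$ first ($\alpha_1$ points), placing one of $a,b$ second ($\alpha_2$ points) and the other last ($0$ points), and filling the remaining positions with $c_4,\dots,c_m$. So after manipulation $\score_E(p)=\score_0(p)+2K\alpha_1$ and, writing $S$ for the total weight of the manipulators who rank $a$ above $b$, $\score_E(a)=\score_0(p)+2K\alpha_1+(S-K)\alpha_2$ and $\score_E(b)=\score_0(p)+2K\alpha_1+(K-S)\alpha_2$. Since $\alpha_2>0$, $\score_E(p)\ge\score_E(a)$ iff $S\le K$ and $\score_E(p)\ge\score_E(b)$ iff $S\ge K$, while the $c_i$ stay below $p$ no matter how the manipulators vote. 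Hence $p$ can be made a winner iff some subcollection of the manipulator weights sums to exactly $K$, i.e., iff the Partition instance is positive; the reduction is plainly polynomial-time, which with the $\np$ membership above yields $\np$-completeness.

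The step I expect to be the real obstacle is the bookkeeping hidden in the third paragraph: for a general nonincreasing $\alpha$ with $\alpha_2>\alpha_m$ (rather than the clean Borda vector of Theorem~\ref{t:bounded-borda}), one must (i)~choose $c$ and the $\bigoh(1)$ registered-voter groups so that $a$ and $b$ are genuinely tied and lead $p$ by \emph{exactly} $D=K(2\alpha_1-\alpha_2)$ with positive integer weights; (ii)~verify the $c_i$ can never catch $p$, which is a short but case-dependent calculation, since if $\alpha_1=\alpha_2=\alpha_3$ then a dummy sitting at position~$3$ gains points at the same rate as $p$ and one needs a fixed initial head start of $p$ over each $c_i$; and (iii)~ensure that this head start does not blur the ``$S=K$'' equivalence, which is arranged by folding a further factor of, say, $2$ into $c$ so that $S$ and $K$ are always even and hence differ by at least $2$ when unequal. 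Everything else is a direct adaptation of the Partition argument already used for Borda.
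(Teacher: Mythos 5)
First, a contextual note: the paper does not prove this statement itself --- it is quoted as a known theorem of Hemaspaandra and Hemaspaandra~\cite{hem-hem:j:dichotomy} --- so there is no in-paper proof to compare against; I can only assess your argument on its own terms. Your $\np$-membership argument and the easy ($\alpha_2=\alpha_m$) direction are fine.

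The hard direction has a genuine gap, and it is not the bookkeeping issue you flag in your last paragraph. The fatal step is the sentence ``A manipulator can only help $p$ by ranking $p$ first, placing one of $a,b$ second and the other last.'' In WCM the manipulators choose their entire orders, so nothing forces them to spend position $2$ on one of $a,b$: a manipulator of weight $w$ maximizes its contribution to $2\,\score(p)-\score(a)-\score(b)$ by ranking $p$ first and burying \emph{both} $a$ and $b$ in the last two positions, contributing $w(2\alpha_1-\alpha_{m-1}-\alpha_m)$, which strictly exceeds your assumed $w(2\alpha_1-\alpha_2-\alpha_m)$ whenever $\alpha_{m-1}<\alpha_2$. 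Concretely, for $4$-candidate Borda $(3,2,1,0)$ your construction sets $a$ and $b$ each $D=4K$ ahead of $p$; letting every manipulator vote $p>c_4>a>b$ yields $\score_E(p)-\score_E(a)=6K-2K-4K=0$ and $\score_E(p)-\score_E(b)=6K-0-4K>0$, so $p$ is a winner for \emph{every} Partition instance and the reduction does not compute the right answer. (Your construction does work for $m=3$, where positions $2$ and $m$ are the only options for $a$ and $b$ --- which is exactly the Borda case of Theorem~\ref{t:bounded-borda} in this paper; and the paper's own Theorem~\ref{t:avdv-scoring-protocols} avoids the issue only because in \emph{control} the added/deleted votes have fixed preference orders, so the attacker cannot re-route $a$ and $b$ to the bottom.)

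Repairing this is not a matter of a constant factor $c$ or of the dummies' head starts: one must locate the positions of the scoring vector at which the manipulators are actually forced to trade off $a$ against $b$, and if $\alpha_{m-1}=\alpha_m$ the last two positions provide no such trade-off at all, so the tension has to be created higher up while simultaneously preventing the manipulators from evading it by burying both blockers. This is precisely why the Hemaspaandra--Hemaspaandra dichotomy proof is substantially more involved than a direct adaptation of the Borda/Partition argument; your proposal as written establishes hardness only for vectors with $\alpha_{m-1}=\alpha_2$ (in particular $m=3$), not for the full class $\alpha_2>\alpha_m$.
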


We see that for scoring protocols with a fixed number $m$ of
candidates, either WCM is harder than WCCAV and WCCDV (for the case of
$t$-approval with $2 \leq t < m$), or the complexity of WCM, WCCAV, and
WCCDV is the same ($\p$-membership for plurality and triviality, and
$\np$-completeness for the remaining cases).  For other voting rules,
it is also possible that WCM is easier than WCCAV and WCCDV\@.

\begin{theorem}\label{t:bounded-weakcondorcet}
For every weakCondorcet-consistent election system and for every
Condorcet-consistent election system,
WCCAV
and WCCDV are $\np$-hard.  This result holds
even when restricted to a fixed number $m \geq 3$ of candidates.
\end{theorem}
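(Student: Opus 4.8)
The plan is to reduce from Partition, to use only three ``active'' candidates $p$, $a$, $b$, and to pad up to any desired $m\geq 3$ with dummy candidates $c_4,\dots,c_m$ that every voter ranks below all of $p$, $a$, $b$; such dummies lose every pairwise contest and so never affect who is, or is not, a (weak) Condorcet winner. Since every weakCondorcet-consistent system is Condorcet-consistent, it suffices to establish the claim for an arbitrary Condorcet-consistent rule $R$, and the only property of $R$ I will use is the following consequence of Condorcet-consistency: if some candidate $c$ is the \emph{strict} Condorcet winner of an election then $R$ makes $c$ the sole winner (so, in particular, $p$ is an $R$-winner if and only if $p$ is that strict Condorcet winner). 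I do not aim for $\np$-completeness: a Condorcet-consistent rule need not admit a polynomial-time winner-determination procedure, so membership in $\np$ is not claimed.

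For $R$-WCCAV I would proceed as follows. Given a Partition instance $(k_1,\dots,k_t)$ with $\sum_i k_i = 2K$, build a bounded number of weighted registered voters (a standard weighted-voter construction in the style of McGarvey's theorem) whose induced pairwise margins $N_E(p,a)-N_E(a,p)$, $N_E(p,b)-N_E(b,p)$, and $N_E(a,b)-N_E(b,a)$ equal, say, $2K+1$, $-2K+1$, and $-2K-1$; with these margins $b$ is the strict Condorcet winner of $(C,V)$. Let the unregistered voters $W$ be $t$ voters, each with preference order $a>p>b$, of weights $2k_1,\dots,2k_t$, and set the addition limit to $k=t$. Adding to $V$ a subcollection of $W$ of total weight $w$ shifts the three margins by $-w$, $+w$, $+w$, respectively. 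A direct check of the resulting margins shows that $p$ is the strict Condorcet winner exactly when $w=2K$, that $b$ remains the strict Condorcet winner for every $w<2K$ (in particular for $w=0$), and that $a$ becomes the strict Condorcet winner for every $w>2K$; so no majority cycle ever occurs. Because the unregistered weights are the $2k_i$, one can add a subcollection of total weight exactly $2K$ if and only if some subset of $\{k_1,\dots,k_t\}$ sums to $K$; hence, by the property of $R$ stated above, $p$ can be made an $R$-winner if and only if the Partition instance is positive, and the reduction is plainly polynomial-time.

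For $R$-WCCDV the skeleton is the same, but now every voter is registered, so I must rule out ``cheating'' deletions. I would introduce a scaling parameter $T$ polynomial in $t$, take a \emph{main group} of $t$ registered voters with preference $a>p>b$ and weights $Tk_1,\dots,Tk_t$, and add a small \emph{anchor} consisting of a couple of heavy registered voters, each ranking $p$ above $a$, whose only function is to tune the starting margins so that $p$ becomes the strict Condorcet winner exactly when one deletes from the main group a subcollection of total weight $TK$ (equivalently, a subset of the $k_i$ summing to $K$), with $a$ or $b$ remaining the strict Condorcet winner for every other deletion from the main group. The deletion limit would be $k=t$. The heart of the matter---and the step I expect to be the main obstacle---is arguing that deleting anchor voters never helps: the starting $p$-versus-$a$ margin is negative, so ``$p$ beats $a$'' is the binding requirement for $p$ to be a Condorcet winner, and since every anchor voter ranks $p$ above $a$, deleting an anchor voter only makes that margin more negative; a short case analysis over how many anchor voters a candidate deletion includes then shows that no deletion touching an anchor voter can make $p$ the Condorcet winner, and (as in the adding case) that no majority cycle arises in any negative-instance scenario. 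With this in hand, $p$ can be made an $R$-winner if and only if the Partition instance is positive, completing the reduction. (A reduction from Partition$'$ with deletion limit $\frac{t}{2}$ works equally well and may streamline parts of this case analysis.)
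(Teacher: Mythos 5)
Your proposal is correct and follows essentially the same route as the paper's proof: a reduction from Partition with three active candidates (plus dummies ranked last), registered voters making $b$ the strict Condorcet winner, and $a>p>b$ voters of weights $2k_i$ so that $p$ becomes the strict Condorcet winner exactly when the added/retained weight is $2K$, with $a$ or $b$ the strict Condorcet winner in every other case. The WCCDV argument that deleting voters who rank $p$ above $a$ can never help is likewise the same key observation the paper uses.
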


\begin{proof}
To show that WCCAV is $\np$-hard, we reduce from Partition.
Given a sequence $k_1, \ldots, k_t$ of
positive integers that sum to $2K$, construct an election with
two registered voters, one voter with weight 1 voting $p > a > b > \cdots$ and
one voter with weight $2K$ voting $b > p > a > \cdots$, and 
$t$ unregistered voters with weights $2k_1, \ldots, 2k_t$ voting
$a > p > b > \cdots$.  Set the addition limit to $t$.
Suppose we add unregistered voters to the election with a total
vote weight equal to $2L$.
\begin{itemize}
\item If $L < K$, then $b$ is the Condorcet winner, and thus the unique winner
of the election.
\item If $L > K$, then $a$ is the Condorcet winner, and thus the unique winner
of the election.
\item If $L = K$, then $p$ is the Condorcet winner, and thus the unique winner
of the election.
\end{itemize}
The WCCDV case uses the same construction.  Now, all
voters are registered and the deletion limit is $t$. 
Since we can delete at most $t$ of our $t+2$ voters, and
since our goal is to make $p$ a winner, we
can't delete the sole voter voting $b > p > a$, since then $a$ would
be the Condorcet winner.  The rest of the argument is similar to the
adding voters case.
\end{proof}

Let Condorcet be the election system whose winner set is exactly the set
of Condorcet winners.
Let weakCondorcet be the election system whose winner set is exactly the set
of weak Condorcet winners.

\begin{corollary}
For Condorcet and weakCondorcet, 
WCM is in $\p\!$ and WCCAV and WCCDV are $\np$-complete. 
This result holds even when restricted to a fixed number $m \geq 3$
of candidates.
\end{corollary}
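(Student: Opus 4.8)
The plan is to treat the three problems separately; all three are short given what is already available. For the two control problems, I would first note that the system Condorcet is itself Condorcet-consistent and the system weakCondorcet is itself weakCondorcet-consistent, so $\np$-hardness of WCCAV and WCCDV for both systems---including the restriction to any fixed $m \geq 3$ candidates---is immediate from Theorem~\ref{t:bounded-weakcondorcet}. To upgrade $\np$-hardness to $\np$-completeness, I would observe that winner determination for Condorcet and for weakCondorcet is in $\p$ (from $(C,V)$ compute all pairwise weighted tallies $N_{(C,V)}(c,d)$ and check whether the candidate in question strictly beats, respectively ties-or-beats, every other candidate on a majority), and that a solution to a WCCAV instance is simply a size-at-most-$k$ subcollection $W'$ of $W$ (and, for WCCDV, a size-at-most-$k$ subcollection $V'$ of $V$); this is a polynomial-size, polynomial-time-verifiable certificate, so both problems are in $\np$, hence $\np$-complete.

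For WCM, I would give a direct greedy polynomial-time algorithm. Given $(C,V)$, preferred candidate $p$, and manipulator weights $k_1, \ldots, k_n$, make every manipulator rank $p$ first. This is without loss of generality: whether $p$ can be made a Condorcet or weak Condorcet winner depends only on the pairwise comparisons that involve $p$, and putting $p$ on top of every manipulator's vote simultaneously maximizes, for every $d \in C - \{p\}$, the total weight of voters preferring $p$ to $d$; how a manipulator orders the remaining candidates is irrelevant. Writing $W = \sum_{i=1}^{n} k_i$ and letting $\|V\|$ be the total weight of $V$, the algorithm then accepts iff, for every $d \in C - \{p\}$, $N_{(C,V)}(p,d) + W > \tfrac{1}{2}(\|V\| + W)$ in the Condorcet case, and iff $N_{(C,V)}(p,d) + W \geq \tfrac{1}{2}(\|V\| + W)$ in the weakCondorcet case. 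Since the winner set under Condorcet is exactly $\{p\}$ precisely when $p$ is a Condorcet winner (a Condorcet winner, when one exists, is automatically unique), and $p$ lies in the weakCondorcet winner set precisely when $p$ is a weak Condorcet winner, these checks are correct; they plainly run in polynomial time and do not depend on $m$ being unbounded, so WCM for both systems is in $\p$ for every number of candidates, in particular for every fixed $m$.

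The steps are routine and I do not expect a genuine obstacle; the one place to be careful is the justification that ``all manipulators rank $p$ first'' is optimal and that, for Condorcet, the sole-winner requirement introduces no additional tie-breaking subtlety---both of which follow directly from the definitions.
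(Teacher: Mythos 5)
Your proposal is correct and follows essentially the same route as the paper: NP-hardness of WCCAV/WCCDV comes directly from Theorem~\ref{t:bounded-weakcondorcet} (with NP membership being routine since winner determination is in $\p$), and WCM is decided by having all manipulators rank $p$ first and checking whether $p$ is then a (weak) Condorcet winner. Your added justifications---that ranking $p$ first simultaneously maximizes all pairwise tallies $N(p,d)$ and that uniqueness of a Condorcet winner removes any sole-winner subtlety---are exactly the details the paper leaves implicit.
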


\begin{proof}
It is immediate that WCM for Condorcet and weakCondorcet are in P.
To see if we have a ``yes''-instance of WCM, it suffices 
to check whether 
letting all the manipulators rank $p$ (the preferred candidate)
first and ranking all the remaining candidates in some arbitrary
order ensures $p$'s victory. $\np$-completeness of WCCAV and WCCDV
follows directly from Theorem~\ref{t:bounded-weakcondorcet}.
\end{proof}

Condorcet and weakCondorcet do not always have winners.  For those
who prefer their voting systems to always have at least one winner, we
note that WCM for $3$-candidate Llull is in
P~\cite{fal-hem-sch:c:copeland-ties-matter}.
\begin{corollary}\label{cor:llull}
For 3-candidate Llull, WCM is in $\p\!$ and WCCAV and WCCDV are $\np$-complete.
\end{corollary}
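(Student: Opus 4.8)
The plan is to handle the three assertions separately, leaning almost entirely on results already in hand. That WCM for $3$-candidate Llull lies in $\p$ is exactly the theorem of Faliszewski, Hemaspaandra, and Schnoor~\cite{fal-hem-sch:c:copeland-ties-matter}, so here I would simply invoke it rather than reprove it.

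For WCCAV and WCCDV, membership in $\np$ is straightforward: Llull is Copeland$^1$, so its winner-determination problem is solvable in polynomial time, and hence a nondeterministic machine can guess the subcollection $W'$ (respectively $V'$) of voters to add (respectively delete), check that it consists of at most $k$ voters, and verify in polynomial time that $p$ is a Llull winner of the resulting election.

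For the lower bound, the key observation is that Llull is weakCondorcet-consistent, as already noted in the Preliminaries. Theorem~\ref{t:bounded-weakcondorcet} establishes that WCCAV and WCCDV are $\np$-hard for every weakCondorcet-consistent system, and moreover that this already holds for a fixed number $m \geq 3$ of candidates; indeed its reduction from Partition uses only the three candidates $p$, $a$, and $b$ (the ``$\cdots$'' tail being empty when $m = 3$). Instantiating that theorem at $m = 3$ for the system Llull therefore yields $\np$-hardness of both $3$-candidate Llull WCCAV and $3$-candidate Llull WCCDV; combined with the $\np$ upper bound this gives $\np$-completeness.

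There is no genuine obstacle here: the corollary is essentially a repackaging of Theorem~\ref{t:bounded-weakcondorcet} together with the cited WCM membership result. The only point deserving a moment's care is confirming that the hardness construction of Theorem~\ref{t:bounded-weakcondorcet} survives the restriction to exactly three candidates — which it does, since that construction never needs a fourth candidate — so that it applies to $3$-candidate Llull and not merely to Llull with unboundedly many candidates.
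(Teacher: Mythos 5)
Your proposal is correct and follows exactly the route the paper intends: the corollary is a repackaging of Theorem~\ref{t:bounded-weakcondorcet} (applied to Llull via its weakCondorcet-consistency, with the three-candidate instantiation of the Partition reduction) together with the cited $\p$ result for WCM and the routine $\np$ upper bound from polynomial-time Llull winner determination. Nothing differs in substance from the paper's treatment.
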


The main results of this section are also presented in Table~\ref{tab:bounded}
of Section~\ref{ss:conclusions}.

\subsection{\boldmath{\large$t$}-Approval and \boldmath{\large$t$}-Veto with an Unbounded Number of Candidates}\label{ss:t-approval}

Let us now look at the cases of $t$-approval and $t$-veto
rules, for an unbounded number of candidates.
The reason we focus on these
is that 
these are the most interesting families of scoring protocols 
whose complexity has not already been resolved
in the previous section.
The reason we say that is that Theorem~\ref{t:avdv-scoring-protocols}
shows that whenever we have at least three distinct values in a 
scoring vector, we have NP-completeness.  And so any family
that at even one number of candidates has three distinct values in 
its scoring vector is NP-hard for 
WCCAV and WCCDV\@.  Thus the really interesting 
cases are indeed $t$-approval and $t$-veto.

Our starting
point here is the work of Lin~\cite{lin:thesis:elections}, which
showed that for $t \geq 4$, WCCAV for $t$-approval and WCCDV for
$t$-veto are $\np$-complete, and that for $t \geq 3$, WCCDV for
$t$-approval and WCCAV for $t$-veto are $\np$-complete.  These results
hold even for the unweighted case.  It is also known that the
remaining unweighted cases are in
P~\cite{bar-tov-tri:j:control,lin:thesis:elections} and that WCCAV and
WCCDV for plurality and veto are in P~\cite{lin:thesis:elections}.
In this section, we look at and solve the remaining open cases,
WCCAV for $2$-approval, $3$-approval, and $2$-veto, and
WCCDV for $2$-approval,  $2$-veto, and $3$-veto. We start by
showing that $2$-approval-WCCAV is in $\p$.

\begin{theorem}
\label{t:WCCAV-2approval-in-P}
WCCAV for 2-approval is in $\p$.
\end{theorem}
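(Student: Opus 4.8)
My plan is to reduce $2$-approval-WCCAV to a polynomial-size family of simple greedy subproblems. First I would make the standard simplifications: adding an unregistered voter who does not approve $p$ only raises two scores other than $p$'s, so we may assume every added voter approves $p$, and under $2$-approval such a voter approves $p$ together with exactly one other candidate. For $c \in C - \{p\}$ let $W_c$ be the collection of unregistered voters approving exactly $\{p,c\}$, and let $\Delta_c = \score_{(C,V)}(c) - \score_{(C,V)}(p)$; call $c$ \emph{dangerous} if $\Delta_c > 0$. Since adding a weight-$w$ voter approving $\{p,c\}$ raises $\score(p)$ and $\score(c)$ each by $w$, a collection $A$ of added voters (all approving $p$) makes $p$ a winner if and only if for every dangerous $d$ the total weight of the voters of $A$ not paired with $d$ is at least $\Delta_d$ (for non-dangerous $c$ this is automatic). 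If no candidate is dangerous we accept immediately; otherwise the task is to decide whether the minimum size of such an $A$ is at most $k$.

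The core of the plan is to \emph{guess the total weight} $Z$ of the added collection. Once $Z$ is fixed, the coupled conditions decouple into per-candidate caps: writing $Z_c$ for the weight added through $c$, it suffices (and, modulo overshooting $Z$, which only relaxes the requirements, is necessary) to achieve $\sum_c Z_c \ge Z$ with $Z_d \le Z - \Delta_d$ for each dangerous $d$. Within a fixed type we would always add the heaviest available voters, so, letting $\hat w_c(j)$ denote the weight of the $j$ heaviest voters of $W_c$, the $Z$-restricted problem is: choose $n_c$ for each $c$ with $\sum_c \hat w_c(n_c) \ge Z$ and $\hat w_d(n_d) \le Z - \Delta_d$ for dangerous $d$, minimizing $\sum_c n_c$. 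This is exactly ``reach a weight target using as few items as possible under per-category weight caps,'' which the obvious greedy solves: process the eligible voters by non-increasing weight, taking each one whose addition keeps its type within its cap, stopping when the target is met; since the marginal weight contributed inside any type is non-increasing, always grabbing the globally heaviest still-eligible voter is optimal. If even taking all eligible voters fails to reach $Z$, that guess is declared hopeless.

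The only real obstacle is showing that polynomially many guesses for $Z$ suffice, and here I claim the values $Z \in \{\Delta_d + \hat w_d(j) \mid d \text{ dangerous},\ 0 \le j \le |W_d|\}$ are enough. Given any optimal $A^\ast$ — which we may assume takes the heaviest voters of each type, so $Z^\ast_d = \hat w_d(n^\ast_d)$ and its total weight is $Z^\ast = \sum_c Z^\ast_c$ — set $Z^\diamond = \max_{d\text{ dangerous}}(\Delta_d + Z^\ast_d)$. Feasibility of $A^\ast$ gives $Z^\ast \ge \Delta_d + Z^\ast_d$ for every dangerous $d$, hence $Z^\ast \ge Z^\diamond$, while by definition $Z^\diamond \ge \Delta_d + Z^\ast_d$, i.e.\ $A^\ast$ respects every $Z^\diamond$-cap; so $A^\ast$ is feasible for the $Z^\diamond$-restricted problem and the greedy run for $Z^\diamond$ returns a collection no larger than $A^\ast$. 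Conversely, any collection the greedy returns for any tried $Z$ genuinely makes $p$ a winner, by the characterization above. Hence the minimum, over the $O(|W|+|C|)$ tried values of $Z$, of the greedy's output equals the true optimum (and is ``hopeless'' exactly when $p$ cannot be made a winner at all, matching the earlier observation about $t$-approval WCCAV), and we accept iff this minimum is at most $k$. Everything except this bound on the relevant values of $Z$ is routine bookkeeping together with the one-line exchange argument for the greedy, so the bound on the guesses is the step I expect to be the crux.
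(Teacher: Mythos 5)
Your overall strategy---guess the total added weight $Z$, which decouples the coupled winning conditions into per-candidate weight caps, and then attack each capped instance greedily---is genuinely different from the paper's proof, which instead runs an iterative pruning loop on $W$ (repeatedly discarding, for each candidate $c$, all but the few heaviest voters approving $c$ once it is forced that at least a certain number of added voters must disapprove $c$) and then simply adds the $k$ heaviest survivors. Your reduction of the winning condition to ``total weight at least $Z$ and type-$d$ weight at most $Z-\Delta_d$,'' your identification of the polynomially many candidate values of $Z$, and the sandwich $Z^\ast \ge Z^\diamond \ge \Delta_d + Z^\ast_d$ are all correct, and as a bonus your algorithm would compute the exact optimum number of additions rather than merely deciding the threshold question.

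However, one step is justified by a claim that is false: that ``the obvious greedy solves'' the capped subproblem, i.e., that processing eligible voters by non-increasing weight and taking each one that fits its cap is optimal for ``reach a weight target using as few items as possible under per-category weight caps.'' It is not. Take a single category with cap $10$ containing items of weights $6,5,5$ and target $10$: the greedy takes the $6$, must then skip both $5$'s, and declares the instance hopeless, whereas $5+5$ meets both the cap and the target. ``Marginal weight inside a type is non-increasing'' does not make the globally-heaviest-that-fits rule optimal, because a cap can force an optimal solution off the heaviest prefix of a type. Your argument is repairable, but only by using more than you invoke: for the one guess that matters, $Z^\diamond$, you have already shown that every cap satisfies $Z^\diamond-\Delta_d\ge \hat w_d(n^\ast_d)$, and this is exactly what excludes the pathology. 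Indeed, when the greedy considers the $j$th heaviest voter of type $d$ with $j\le n^\ast_d$, its current type-$d$ load is at most $\hat w_d(j-1)$, so accepting the voter yields at most $\hat w_d(j)\le \hat w_d(n^\ast_d)\le Z^\diamond-\Delta_d$; hence the greedy never skips a voter of $A^\ast$ that it reaches before stopping (in particular it cannot wrongly declare $Z^\diamond$ hopeless). Combined with the standard counting argument---every voter the greedy has selected is at least as heavy as every not-yet-considered voter of $A^\ast$, so if the greedy had selected more than $\|A^\ast\|$ voters its running total would already have reached $Z^\diamond$ one step earlier---this yields the bound $\le\|A^\ast\|$ that your third paragraph needs. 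You should state and prove this; as written, the correctness of the crucial guess rests on a general assertion about the capped greedy that is simply not true.
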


\begin{proof}
We claim that Algorithm~\ref{alg:2approval-wccav} solves
$2$-approval-WCCAV in polynomial time. (In this algorithm and the proof of 
correctness, whenever we speak of the 
$r$ heaviest voters in voter set $X$, we mean the
$\min(r,\|X\|)$ heaviest voters in $X$.)
\begin{algorithm}[t]
\SetKw{KwAnd}{and}
\SetKw{KwReject}{reject}
\SetKw{KwAccept}{accept}
\SetCommentSty{textit}
\dontprintsemicolon
\SetSideCommentRight

   \small
   \SetAlCapFnt{\small}

  \KwIn{$(C, V, W, p, k)$}
  \ForAll{$c \in C -\{p\}$}
    {let $s_c = score_{(C,V)}(c) - score_{(C,V)}(p)$.}
  Delete from $W$ all voters that do not approve of $p$.\\
  \Repeat{no more changes.}
     {\ForAll{$c \in C -\{p\}$}
       {\lIf{the sum of the weights of the $k$ heaviest voters in $W$ that do not approve of $c$ is less than $s_c$}
         {\KwReject \\ \tcp{It is impossible to get $score(c) \leq score(p)$ by adding 
less than or equal to $k$
voters from $W$.}}}
     \ForAll{$c \in C -\{p\}$ \KwAnd $\ell \in \{1,\ldots, k-1\}$}
       {\If{the sum of the weights of the $k-\ell$ heaviest voters
           in $W$ that do not approve of $c$ is less than $s_c$}
         {delete from $W$ all voters approving $c$ except for the $\ell-1$ heaviest such voters.
\\           \tcp{We need to add at least $k-\ell+1$ voters that do not approve of $c$,
                 and so we can add at most $\ell-1$ voters approving $c$.}}}}
   \lIf{$\|W\| \geq k$}{\KwAccept}  \tcp{We can make $p$ a winner by adding the $k$ heaviest voters from $W$.}
   \If{$\|W\| < k$}
       {\lIf{adding all of $W$ will make $p$ a winner}{\KwAccept}\lElse{\KwReject}}
\caption{\label{alg:2approval-wccav}$2$-approval-WCCAV}
\end{algorithm}
It is easy to see that we never reject incorrectly in the repeat-until,
assuming that we don't incorrectly delete voters from $W$.
It is also easy to see that if we add $r$ voters approving $\{p,c\}$, we may
assume that we add the $r$ heaviest voters approving $\{p,c\}$ (this is also 
crucial in the proof of Theorem~\ref{t:t-approval-in-P}),
and so we never delete voters incorrectly in the
second for loop in the repeat-until.

If we get through the repeat-until without rejecting,
and we have fewer than $k$ voters
left in $W$, then adding all of $W$ is the best we can do (since all
voters in $W$ approve $p$).

Finally, if we get through the repeat-until, and we have at least $k$ voters
left in $W$, then adding the $k$ heaviest voters from $W$ will make $p$
a winner.
Why?  Let $c$ be a candidate in $C - \{p\}$.  Let $r$ be the number
of voters from $W$ that are added and that approve of $c$.
Since we made it through the repeat-until, we
know that
[the sum of the weights of the $k$ heaviest voters in $W$ that do not approve
 of $c$] is at least $s_c$.
We will show that after adding the voters, 
$score(c) - score(p) \leq 0$, which implies that $p$ is a winner.
If $r = 0$, $score(c) - score(p)$ =
$s_c$ - [the sum of the weights of the $k$ heaviest voters in $W$] $\leq 0$.
If $r > 0$, %
then [the sum of the weights of the $k-r$ heaviest voters
in $W$ that do not approve of $c$]  is at least $s_c$ (for otherwise
we would have at most $r - 1$ voters approving $c$ left in $W$).
And so $score(c) - score(p)$ = $s_c$ - [the sum of the weights of the $k-r$
heaviest voters in $W$ that do not approve of $c$] $\leq 0$.
\end{proof}

\begin{theorem}\label{t:2-veto}
WCCDV for 2-veto is in $\p$.
\end{theorem}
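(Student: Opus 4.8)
The plan is to recast $2$-veto WCCDV as a question about weighted multigraphs, peel off a couple of easy structural reductions, and then solve a small ``token-allocation'' subproblem by a guess-and-greedy search.

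First I would reformulate. Associate with each voter of a $2$-veto election $(C,V)$ the edge on vertex set $C$ joining the two candidates it vetoes; this yields a weighted multigraph $G$ on $C$ (parallel edges allowed) in which $\score(c)=W-\deg_G(c)$, where $W$ is the total voter weight and $\deg_G$ is the weighted degree. Deleting a voter set $D$ lowers the quantity playing the role of ``$W$'' by the same amount for every candidate, so $p$ wins $(C,V-D)$ iff $p$ has minimum weighted degree in $G-D$. Thus the task is: delete at most $k$ edges of $G$ so that $p$ attains the minimum weighted degree. Next come two reductions. Deleting an edge not incident to $p$ only lowers the degrees of its (non-$p$) endpoints and leaves every other degree alone, so it never helps any $p$-versus-$c$ comparison; being budget-bounded, we may assume all deleted edges lie in $E(p)$, the set of $p$-incident edges (and I write $E(p,c)$ for those joining $p$ to $c$). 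And, exactly in the spirit of the ``heaviest voters with a given approval set'' argument in the proof of Theorem~\ref{t:t-approval-in-P}, once we decide to delete $n_x$ of the edges joining $p$ to a neighbor $x$ we may take the $n_x$ heaviest such edges. Writing $d_c:=\deg_G(p)-\deg_G(c)$, only the \emph{dangerous} candidates (those with $d_c>0$) constrain us; and if $\delta_x=g_x(n_x)$ denotes the deleted $p$-$x$ weight, where $g_x(n)$ is the sum of the $n$ heaviest $p$-$x$ edge weights (a concave, nondecreasing function) and $\delta_c:=0$ for non-neighbors, then ``$p$ ends with minimum degree'' is precisely the system $\sum_{x\ne c}\delta_x\ge d_c$ over dangerous $c$. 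Since increasing any $\delta_x$ only helps, deleting more $p$-incident edges never hurts, so we may assume we delete exactly $\min(k,|E(p)|)$ of them; the case $|E(p)|\le k$ is disposed of by just testing whether deleting all of $E(p)$ makes $p$ win.

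For the core I would argue as follows. Rewriting each constraint as $T\ge d_c+\delta_c$ with $T=\sum_x\delta_x$, observe that if $c^\ast$ is a dangerous neighbor maximizing $d_c+\delta_c$ over dangerous neighbors, then every constraint follows from the single inequality $\sum_{x\ne c^\ast}\delta_x\ge d_{c^\ast}$ together with $T\ge d_c$ for the dangerous non-neighbors. This suggests: for each dangerous neighbor $c^\ast$ and each $n_{c^\ast}\in\{0,\dots,\min(k,|E(p)|)\}$, fix $\delta_{c^\ast}=g_{c^\ast}(n_{c^\ast})$, impose on every other dangerous neighbor $x$ the cap $\delta_x\le\delta_{c^\ast}+d_{c^\ast}-d_x$ (forbidding $x$ outright when this is negative), then greedily delete the remaining $\min(k,|E(p)|)-n_{c^\ast}$ edges by repeatedly taking the heaviest still-allowed $p$-incident edge, and accept if the resulting $\sum_{x\ne c^\ast}\delta_x$ clears the appropriate threshold; one also runs the degenerate variant in which no $c^\ast$ is named (delete only edges to non-dangerous neighbors, maximize total weight, compare to $\max_c d_c$), plus the ``delete all of $E(p)$'' test from above. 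Soundness is easy: the caps force $c^\ast$ to realize $\max_c(d_c+\delta_c)$ for the constructed deletion, so the one checked inequality really does imply all the constraints. For completeness, take an optimal deletion $D$ (normalized so that within each neighbor it uses the heaviest edges), let $c^\ast$ maximize $d_c+w(D\cap E(p,c))$ over dangerous neighbors, and set $n_{c^\ast}=|D\cap E(p,c^\ast)|$; then $D$'s deletions outside $E(p,c^\ast)$ obey all the caps, so the greedy step does at least as well as $D$ on $\sum_{x\ne c^\ast}\delta_x$, and the algorithm accepts.

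The hardest part, I expect, is making the greedy step fully rigorous in tandem with the ``heaviest-within-each-neighbor'' normalization: one must show that, with each capped neighbor's contribution known to consist of its heaviest edges, ``repeatedly take the heaviest allowed edge'' truly maximizes $\sum_{x\ne c^\ast}\delta_x$, and that an optimal deletion's behavior outside $E(p,c^\ast)$ can be matched this way. This calls for an exchange argument, and the interaction of the global budget $k$ with the per-neighbor weight caps needs care. Everything else --- the graph picture, restricting to $p$-incident heaviest-first deletions, and collapsing all the constraints to the single one at $c^\ast$ --- is routine.
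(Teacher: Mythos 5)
Your argument is correct in its essentials, but it takes a genuinely different route from the paper. The paper never gives a direct algorithm for $2$-veto-WCCDV at all: it proves Theorem~\ref{thm:reduction}, a score-implementation reduction from $t$-veto-WCCDV to $t$-approval-WCCAV (built by padding with dummy candidates and auxiliary votes so that $t$-veto scores are realized by $t$-approval votes), and then invokes the fixed-point pruning algorithm of Theorem~\ref{t:WCCAV-2approval-in-P}, which repeatedly discards unregistered voters that provably cannot appear in a solution and finally adds the $k$ heaviest survivors. Your approach instead works directly on the deletion problem via the multigraph picture, reduces to $p$-incident, heaviest-within-each-neighbor deletions, collapses the constraint system to a single binding inequality at the maximizer $c^\ast$ of $d_c+\delta_c$, and solves the residual token allocation by guessing $(c^\ast, n_{c^\ast})$ and running a cap-respecting greedy; the greedy's optimality is the standard exchange argument for maximizing a sum of concave marginal-gain sequences under per-group count caps and a global cardinality budget, so the part you flag as delicate does go through. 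The trade-off: the paper's reduction is reusable (it also transports NP-hardness and approximation guarantees between the $t$-veto and $t$-approval families for all $t$), whereas your argument is self-contained and arguably more transparent about \emph{why} $2$-veto is easy --- each deleted vote touches $p$ and exactly one other candidate, which is precisely what makes the single-constraint collapse possible and what breaks for $t\geq 3$. Two small patches you should make explicit: for a wrong guess the cap $\delta_{c^\ast}+d_{c^\ast}-d_x$ can be negative even at $\delta_x=0$, in which case $c^\ast$ is not actually the maximizer and your ``one inequality implies all'' soundness claim does not apply to that branch --- the clean fix is to verify the constructed deletion directly against the original election before accepting; and the threshold you test must incorporate the dangerous non-neighbors' constraints $\sum_x \delta_x \ge d_c$, not just $d_{c^\ast}$.
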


Instead of proving this theorem directly, we show a more general
relation between the complexity of $t$-approval/$t$-veto WCCAV
and WCCDV.

\begin{theorem}\label{thm:reduction}
  For each fixed $t$, it holds that $t$-veto-WCCDV
  ($t$-approval-WCCDV) polynomial-time many-one reduces to $t$-approval-WCCAV
  ($t$-veto-WCCAV).  
\end{theorem}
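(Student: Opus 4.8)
The plan is to give a single polynomial-time many-one reduction that converts a $t$-veto-WCCDV instance into a $t$-approval-WCCAV instance (the $t$-approval-WCCDV to $t$-veto-WCCAV direction being entirely symmetric, obtained by swapping the roles of ``approve'' and ``veto'' throughout). The guiding intuition is that deleting a voter who does not help $p$ is, score-wise, very much like adding a voter who does help $p$: in both operations we want to move weight off of $p$'s rivals without moving weight off of $p$. So given a $t$-veto-WCCDV instance $(C,V,p,k)$, I would first pass to the equivalent $t$-approval description of each vote (recording, for the fixed candidate set $C$ with $\|C\|=m$, the $(m-t)$-subset of approved candidates, equivalently the $t$-subset of vetoed candidates). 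The deletions we would ever want to perform are only of voters who veto $p$ (deleting a voter who approves $p$ can only hurt), so the ``useful'' registered voters are exactly those vetoing $p$; call this sub-collection $V_{\mathrm{bad}}$ and let $V_{\mathrm{good}} = V - V_{\mathrm{bad}}$ be the rest, which will stay untouched.

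Next I would build the target $t$-approval-WCCAV instance. Keep the same candidate set $C$ and preferred candidate $p$, keep the same bound $k$. The key step is the translation of votes: a registered voter $v \in V_{\mathrm{bad}}$ that vetoes $p$ (under $t$-veto) should be ``mirrored'' by an unregistered voter $\widehat v$ in the new $t$-approval election whose approval set is the complement of $v$'s, within $C$ — i.e., $\widehat v$ approves $p$, and deleting $v$ has exactly the same effect on all pairwise score differences $\score(c)-\score(p)$ as adding $\widehat v$. Concretely, since every candidate $c\in C$ gets, per vote, either $0$ or (the appropriate) $1$ point, one checks that for every rival $c$ the quantity $\score(c)-\score(p)$ changes by the same signed amount (weighted by $\omega(v)=\omega(\widehat v)$) under ``delete $v$ from the $t$-veto election'' as under ``add $\widehat v$ to the $t$-approval election.'' To make the starting scores match, I would set the initial $t$-approval election's registered voter collection so that, for every $c\neq p$, $\score(c)-\score(p)$ equals its value in the original $t$-veto election $(C,V)$; since both rules are scoring rules and only differences matter, this can be arranged — for instance by letting the new registered collection be exactly (the $t$-approval re-encoding of) $V_{\mathrm{good}}$ together with $V_{\mathrm{bad}}$ itself re-encoded as $t$-approval votes, and then noting that adding $\widehat v$ and never touching the copy of $v$ exactly reproduces the effect of having deleted $v$ from the $t$-veto side. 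Finally the unregistered pool $W$ of the new instance is $\{\widehat v : v\in V_{\mathrm{bad}}\}$, with weights inherited, and the addition bound is $k$.

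The correctness argument then runs as follows. If in the original instance deleting a size-$\le k$ sub-collection $V'\subseteq V_{\mathrm{bad}}$ makes $p$ a $t$-veto winner, then adding $W' = \{\widehat v : v\in V'\}$ (size $\le k$) makes $p$ a $t$-approval winner of the new instance, because for every rival $c$ the resulting $\score(c)-\score(p)$ is identical on the two sides; and conversely any successful addition of $W'\subseteq W$ of size $\le k$ corresponds to deleting the matching $V'\subseteq V_{\mathrm{bad}}$. One also has to dispatch the easy observation (already used for Theorem~\ref{t:t-approval-in-P}) that restricting attention to deletions from $V_{\mathrm{bad}}$ is without loss of generality. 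I expect the main obstacle to be purely bookkeeping rather than conceptual: making the ``mirror'' map precise when $t$ is not small, verifying in full that it is a score-difference isometry for every rival simultaneously, and handling the boundary oddities (what if $m < t$, or $m \le t$ so that $t$-veto degenerates, or some candidate is vetoed/approved by nobody) so that the re-encoding between the $t$-veto and $t$-approval views is literally well-defined. Since $t$ is fixed, all of this is manifestly polynomial-time, so no complexity subtlety arises; the only real work is checking the arithmetic of the correspondence.
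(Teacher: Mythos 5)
Your core idea is the same as the paper's: deleting a $t$-veto vote $v$ that vetoes $p$ has exactly the same effect on every difference $\score(c)-\score(p)$ as adding a $t$-approval vote $\widehat v$ of the same weight that approves precisely the $t$ candidates $v$ vetoes (and symmetrically for the other direction). That mirror map is correct, and the unregistered pool you build from it is fine, since each $\widehat v$ approves exactly $t$ candidates and so is a legitimate $t$-approval vote.

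The genuine gap is in the step you dismiss as bookkeeping: constructing the \emph{registered} part of the target instance. The target must be a $t$-approval election, so every registered vote must approve exactly $t$ of the candidates, whereas each original $t$-veto vote approves $m-t$ of them. Your proposed fix --- ``let the new registered collection be $V$ re-encoded as $t$-approval votes'' --- is therefore not defined when $m\neq 2t$, and the fallback claim that ``since both rules are scoring rules and only differences matter, this can be arranged'' is false over the original candidate set: for example, with $4$ candidates and $2$-approval the total number of approvals cast is always even, so the gap pattern in which every rival leads the least-approved candidate by exactly $1$ is unrealizable (the paper makes exactly this observation in its appendix). This is why the paper's proof explicitly flags ``how to implement $t$-veto scores with $t$-approval votes'' as the main work: it adds dummy candidates $D$ (and, per original vote, further dummies $C_i$), builds padding collections $V_0$ and $V_i$ of genuine $t$-approval votes so that each original candidate's $t$-approval score in the new election equals its old $t$-veto score plus a fixed offset, and then verifies that the dummies' scores stay low enough that they can never become winners. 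Your reduction needs this construction (or some substitute for it) to be a reduction at all; without it the target instance cannot even be written down. The same issue recurs, with a different padding gadget, in the $t$-approval-WCCDV to $t$-veto-WCCAV direction, which is why the paper does not treat the two directions as literally symmetric.
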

\begin{proof}
  We first give a reduction from $t$-veto-WCCDV to $t$-approval-WCCAV\@.
  The idea is that deleting a $t$-veto vote $v$ from $t$-veto election
  $(C,V)$ is equivalent, in terms of net effect on the scores, to
  adding a $t$-approval vote $v'$ to this election, where $v'$
  approves exactly of the $t$ candidates that $v$ disapproves of.  The
  problem with this approach is that we are to reduce $t$-veto-WCCDV
  to $t$-\emph{approval}-WCCAV and thus we have to show how to
  implement $t$-veto scores with $t$-approval votes.

  Let $(C,V,p,k)$ be an instance of $t$-veto-WCCDV, where $V =
  (v_1, \ldots, v_n)$. Let $m = \|C\|$. Let $\omega_{\max}$ be the
  highest weight of a vote in $V$.  We set $D$ to be a set of up to
  $t-1$ new candidates, such that $\|C\|+\|D\|$ 
  is a multiple of $t$.  We set $V_0$ to be a collection of
  $\frac{\|C\|+\|D\|}{t}$ $t$-approval votes,
  where each vote has weight $\omega_{\max}$ and each candidate in
  $C \cup D$ is approved in exactly one of the votes.
  For each vote $v_i$ in $V$ we create a set
  $C_i = \{c_i^1, \ldots, c_i^{(t-1)(m-t)}\}$ of candidates and we create a collection
  of voters $V_i = (v_i^1, \ldots, v^{m-t}_i)$. Each voter $v_i^j$, $1
  \leq j \leq m-t$, has weight $\omega(v_i)$ and approves of the
  $j$th candidate approved by $v$ and of the $t-1$ candidates 
  $c_i^{(j-1)(t-1)+1}, \ldots,
  c_i^{j(t-1)}$.

  We form an election $E' = (C',V')$, where $C' = C \cup D \cup
  \bigcup_{i=1}^{n}C_i$ and $V' = V_0 + V_1 + \cdots + V_n$.
  For each candidate $c$, let
  $s_c$ be $c$'s $t$-veto score in $(C,V)$; it is easy to see that
  $c$'s $t$-approval score in $E'$ is
  $\omega_{\max}+s_c$.
  Furthermore, each candidate $c \in C' - C$ has $t$-approval
  score at most $\omega_{\max}$ in $E'$.

  We form an instance $(C',V',W,p,k)$ of $t$-approval-WCCAV, where $W
  = (w_1, \ldots, w_n)$, and for each $i$, $1 \leq i \leq n$,
  $\omega(w_i) = \omega(v_i)$, and $w_i$ approves exactly of those
  candidates that $v_i$ disapproves of. It is easy to see that adding
  voter $w_i$ to $t$-approval election $(C',V')$ has the same net
  effect on the scores of the candidates in $C$ as does deleting $v_i$
  from $t$-veto election $(C,V)$.\medskip

  Let us now give a reduction from $t$-approval-WCCDV to
  $t$-veto-WCCAV\@.  The idea is the same as in the previous reduction;
  the main part of the proof is to show how to implement $t$-approval
  scores with $t$-veto votes.  Let $(C,V,p,k)$ be an instance of
  $t$-approval-WCCDV, where $V = (v_1, \ldots, v_n)$. Let $m = \|C\|$
  and let $\omega_{\max}$ be the highest weight of a vote in $V$. We
  set $D$ to be a set of candidates such that $t \leq \|D\| \leq 2t-1$ and $\|C\| +
  \|D\| = s\cdot t$ for some integer $s$, $s \geq 3$ (note that for
  our setting to not be trivial it must be that $m > t$).
  We set $V_0$ to be a collection of $4n(s-2)$ votes, each with weight
  $\omega_{\max}$; each candidate from $C$ is approved in all these
  votes whereas each candidate from $D$ is disapproved in at least
  half of them (since $t \leq \|D\| \leq 2t-1$, it is easy to
  construct such votes).  For each vote $v_i$ in $V$, we create
  a collection $V_i$ of $(s-1)$ votes satisfying the following
  requirements: (a)~each candidate approved in $v_i$ is also approved
  in each of the votes in $V_i$, and (b)~each candidate not approved
  in $v_i$, is approved in exactly $(s-2)$ votes in $V_i$.  (Such
  votes are easy to construct: We always place the top $t$ candidates
  from $v_i$ in the top $t$ positions of the vote; for the remaining
  positions, in the first vote we place the candidates in some
  arbitrary, easily computable order, and in each following vote we
  shift these candidates cyclically by $t$ positions with respect to
  the previous vote.)  Each vote in $V_i$ has weight $\omega(v_i)$.

  We form an election $E' = (C', V')$, where $C' = C \cup D$ and $V' =
  V_0 + V_1 + \cdots + V_n$. For each candidate $c$, let
  $s_c$ be $c$'s $t$-approval score in $(C,V)$; it is easy to see that
  $c$'s $t$-veto score in $E'$ is $4n(s-2)\omega_{\max} + (s-2)(\sum_{i=1}^n\omega(v_i)) +
  s_c$. Furthermore, each candidate from $D$ has $t$-veto score at most
  $3n(s-2)\omega_{\max}$ in $E'$.

  We form an instance $(C',V', W, p, k)$ of $t$-veto-WCCAV, where $W =
  (w_1, \ldots, w_n)$, and for each $i$, $1 \leq i \leq n$,
  $\omega(w_i) = \omega(v_i)$, and $w_i$ disapproves of exactly those
  candidates that $v_i$ approves of. It is easy to see that adding
  voter $w_i$ to $t$-veto election $(C',V')$ has the same net effect
  on the scores of candidates in $C$ as deleting voter $v_i$ from
$t$-approval election
  $(C,V)$ has.  Furthermore, since each candidate in $D$ has at least
  $n\omega_{\max}$ fewer points than each candidate in $C$, the fact
  that adding $w_i$ increases scores of candidates in $D$ does not
  affect the correctness of our reduction\@.
\end{proof}

All other remaining cases
(WCCDV for 2-approval, WCCAV for 3-approval, WCCAV
for 2-veto, and WCCDV for 3-veto) are $\np$-complete.   
Interestingly, in contrast
to many other $\np$-complete weighted election problems, 
we need only
a very limited set of weights to make the reductions work.  

\begin{theorem}\label{t:hardness}
WCCAV for 2-veto and 3-approval and
WCCDV for 2-approval and 3-veto are $\np$-complete.
\end{theorem}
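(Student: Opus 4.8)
By Theorem~\ref{thm:reduction}, $2$-approval-WCCDV reduces to $2$-veto-WCCAV and $3$-approval-WCCDV reduces to $3$-veto-WCCAV, so it suffices to prove $\np$-completeness for WCCDV for $2$-approval and $3$-veto, and for WCCAV for $2$-veto and $3$-approval. Membership in $\np$ is immediate in each case (guess the added/deleted subcollection, compute scores). The plan is to handle the four hardness claims by reductions from the restricted exact cover problem X3C$'$ of Definition~\ref{def:x3c}, exploiting the feature that in X3C$'$ every element of $B$ lies in between one and three sets; this bounded multiplicity is exactly what lets us get away with a very limited weight set (the authors flag this as noteworthy), since each element-candidate's "deficit" relative to $p$ is a small constant.

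For $\mathbf{3}$\textbf{-veto-WCCDV}: take candidate set $C = B \cup \{p\}$ plus a few padding candidates, one voter per set $S_j \in \calS$ that vetoes (disapproves of) the three elements of $S_j$ and approves everything else, together with padding voters arranged so that, before deletion, each $b_i \in B$ trails $p$ by an amount proportional to (number of sets containing $b_i$), and the padding candidates trail $p$ by a lot; set the deletion limit to $t$. Deleting a $\calS_j$-voter raises each of the three elements in $S_j$ by the vote's weight; one checks that $p$ becomes a winner by deleting $\le t$ voters iff one can pick $t$ set-voters whose disapproved-triples exactly cover $B$ — i.e., iff an exact cover exists. The same gadget, read through Theorem~\ref{thm:reduction}, gives $2$-approval-WCCDV; alternatively one builds it directly (candidates $B \cup \{p\}$, each set-voter approves its three elements plus $p$'s "anti-approval" is arranged via padding). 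For $\mathbf{3}$\textbf{-approval-WCCAV}: put $p$ and the elements of $B$ among the candidates, make $p$ currently trailing, and supply unregistered set-voters each approving $p$ together with the three elements of some $S_j$; adding such a voter boosts $p$ and its three elements equally, so (with the limit set to $t$ and with element-candidates' current surpluses over $p$ tuned to the set-multiplicities) $p$ can be made a winner by adding $\le t$ of them iff the chosen triples cover $B$. The $2$-veto-WCCAV case follows from this via Theorem~\ref{thm:reduction}, or is built directly in the dual way (unregistered voters each disapprove the complement data).

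The main obstacle is the bookkeeping to make each element-candidate's score relative to $p$ come out so that the \emph{only} way to equalize is an exact cover, rather than, say, covering some element twice while leaving another uncovered: because an element can sit in up to three sets, picking overlapping sets must be made strictly worse than a clean cover. I expect to handle this by choosing padding so that $b_i$ starts exactly $w$ points below $p$ where $w$ is the common weight of the set-voters, so $b_i$ needs to be "hit" at least once and any budget spent on a redundant hit is a wasted deletion/addition that then fails to rescue some other $b_i$ from its deficit; the cardinality constraint (exactly $t$ moves, $3t$ elements, triples) then forces a partition. A secondary nuisance is padding candidates: they must be kept safely below $p$ no matter which $\le t$ voters are added or removed, which is arranged by giving them a large constant deficit via extra registered voters that no optimal strategy would touch (for WCCDV, one makes deleting a padding-helping voter strictly counterproductive, exactly as in the proof of Theorem~\ref{t:bounded-borda}). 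Finally one verifies all these elections and weight assignments are polynomial-time computable, completing the reductions.
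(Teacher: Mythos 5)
Your high-level architecture---membership in $\np$ is easy, reduce from X3C$'$, prove some cases directly and transfer the rest via Theorem~\ref{thm:reduction}---matches the paper's, but you apply the reduction theorem to pairs it does not connect. Theorem~\ref{thm:reduction} sends $t$-approval-WCCDV to $t$-veto-WCCAV and $t$-veto-WCCDV to $t$-approval-WCCAV; it never changes $t$ and only goes from deletion problems to addition problems. So the correct pairings are $2$-approval-WCCDV $\Rightarrow$ $2$-veto-WCCAV and $3$-veto-WCCDV $\Rightarrow$ $3$-approval-WCCAV (which is exactly how the paper proceeds: two direct reductions, for WCCDV under $2$-approval and under $3$-veto). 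Your claims that the $3$-veto-WCCDV gadget ``read through Theorem~\ref{thm:reduction}'' yields $2$-approval-WCCDV, and that $2$-veto-WCCAV ``follows from'' $3$-approval-WCCAV, are not instances of that theorem. That forces you back onto your direct constructions, and those contain the real gap.

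The crux you are missing is vote arithmetic. A $2$-approval vote approves exactly two candidates, so ``each set-voter approves its three elements'' is impossible; likewise your $3$-approval-WCCAV voter approving ``$p$ together with the three elements of some $S_j$'' approves four candidates. Encoding a $3$-element set when each vote can only touch two (or, net of $p$, two) candidates is precisely the difficulty, and the paper resolves it by introducing auxiliary candidates $s_i,s_i'$ for each set and splitting each set into four voters (one weight-$2$ voter approving $\{s_i,s_i'\}$ and three weight-$1$ voters each approving a set-candidate and one element), with deletion budget $n+2t$ rather than $t$; the exact cover is then forced by arguing which $s_i,s_i'$ pairs can be lowered with the remaining $n-t$ deletions. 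Nothing in your plan supplies this idea. Two further problems: (i) your $3$-veto-WCCDV gadget is oriented backwards---if every $b_i$ ``trails $p$'' before deletion then $p$ already wins, and deleting a voter that approves $p$ but vetoes the elements of $S_j$ \emph{raises} those elements relative to $p$; the paper instead makes every set-voter veto $p$ and counts vetoes. (ii) Unweighted $2$-approval-WCCDV and unweighted $3$-veto-WCCDV are in $\p$, so any correct reduction must make the weights do real work (the paper uses weights $\{1,2\}$ and $\{1,3\}$, with the heavier voters forcing the $n-t$ ``non-cover'' deletions); your sketch never identifies where distinct weights are essential, and as written it would seem to go through unweighted, which cannot be right unless $\p=\np$.
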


\begin{proof}
Membership in NP is immediate, so it suffices to prove NP-hardness.
We will first give the proof for WCCDV for 2-approval.
By Theorem~\ref{thm:reduction}
this also immediately gives the result for WCCAV for 
2-veto.
We will reduce from X3C$'$ from Definition~\ref{def:x3c}.
Let $B = \{b_1, ..., b_{3t}\}$ and let
${\cal S} = \{S_1, ..., S_n\}$ be
a family of 3-element subsets of $B$ such that
every element of $B$ occurs
in at least one and in at most three sets in ${\cal S}$.
We construct the following
instance $(C,V,p,k)$ of WCCDV for 2-approval. We set
$C = \{p\} \cup \{b_j \ | \ 1 \leq j \leq 3t\}
\cup \{s_i, s'_i \ | \ 1 \leq i \leq n\}
\cup \{d_0, d_1, \ldots, d_{3t}\}$
($d_0, d_1, \ldots, d_{3t}$ are dummy candidates that are used for padding).
For $1 \leq j \leq 3t$, 
let $\ell_j$ be the number of sets in ${\cal S}$ that contain $b_j$.
Note that $1 \leq \ell_j \leq 3$.
$V$ consists of the following voters:

\begin{center}
\begin{tabular}{ccl}
weight & preference order& \\
2 &  $s_i > s_{i\phantom{_1}}' > \cdots$ &     \multirow{4}{*}{$\left.\rule{0cm}{0.9cm}\right\}$ for all $1 \leq i \leq n$ and $S_i = \{b_{i_1}, b_{i_2}, b_{i_3}\}$}\\
1 &  $s_i > b_{i_1} > \cdots$\\
1 &  $s_i > b_{i_2} > \cdots$\\
1 &  $s'_i > b_{i_3} > \cdots$\\
2 &  $p\phantom{_i} > d_{0} > \cdots$\\
$3 - \ell_j$ &
$b_j > d_j > \cdots$ & for all $1 \leq j \leq 3t$ such that
$\ell_j < 3$.
\end{tabular}
\end{center}

Note that $\score(s_i) = 4$, $\score(s'_i) = 3$, $\score(b_j) = 3$,
$\score(p) = 2$, and $\score(d_j) \leq 2$.  We set $k = n + 2t$ and
we claim that ${\cal S}$ contains an exact cover if and only if
$p$ can become a winner after deleting at most $n + 2t$ voters.

$(\Rightarrow)$:
Delete the $(n-t)$ weight-2 voters corresponding to the sets not in the cover
and delete the $3t$ weight-1 voters corresponding to the sets in the cover.
Then the score of $p$ does not change, the score of each $s_i$ decreases by 2,
the score of each $s'_i$ decreases by at least 1,
and the score of each $b_j$ decreases by 1.
So, $p$ is a winner.

$(\Leftarrow)$:  We need to delete $3t$ voters to decrease the
score of every $b_j$ voter by 1.  After deleting these $3t$ voters, there
are at most $t$ values of $i$, $1 \leq i \leq n$, such that
the score of $s_i$ and the score of $s'_i$ are at most 2.

If there are exactly $t$ values of $i$, $1 \leq i \leq n$, such that
the score of $s_i$ and the score of $s'_i$ are at most 2, then these
$t$ values of $i$ correspond to a cover.  If there are less than
$t$ values of $i$, $1 \leq i \leq n$, such that
the score of $s_i$ and the score of $s'_i$ are at most 2, then 
the remaining voters that are deleted, and there are at most
$n-t$ of them, need to decrease the score of $s_i$ and/or $s'_i$ for
more than $n-t$ values of $i$, $1 \leq i \leq n$.  But that is not possible,
since there is no voter that approves of both $s_i$ or $s'_i$ and 
$s_{j}$ or $s'_j$ for $i \neq j$.

Note that this construction uses only weights 1 and 2. 
In fact, we can establish NP-completeness for WCCDV for 2-approval 
for every set of allowed weights of size at least two
(note that if the set of weights has size one, the problem
is in P, since this is in essence the unweighted
case~\cite{lin:thesis:elections}).
Since the reductions of Theorem~\ref{thm:reduction} do not change the
set of voter weights, we have the same result for
WCCAV for 2-veto.

So, suppose
our weight set contains $w_1$ and $w_2$, $w_2 > w_1 > 0$.
We modify the construction above as follows.
We keep the same set of candidates and we change the voters as follows.

\begin{center}
\begin{tabular}{cccl}
\# & weight & preference order& \\
1 &  $w_2$ &  $s_i > s_{i\phantom{_1}}' > \cdots$ & \multirow{4}{*}{$\left.\rule{0cm}{0.9cm}\right\}$ for all $1 \leq i \leq n$ and $S_i = \{b_{i_1}, b_{i_2}, b_{i_3}\}$}\\
1 & $w_1$ &  $s_i > b_{i_1} > \cdots$\\
1 & $w_1$ &  $s_i > b_{i_2} > \cdots$\\
1 & $w_1$ &  $s'_i > b_{i_3} > \cdots$\\
2 & $w_1$ &  $p\phantom{_i'} > d_0 > \cdots$ & if $w_2 \leq 2w_1$\\
1 & $w_2$ & $p\phantom{_i'} > d_0 > \cdots$ & if $w_2 > 2w_1$\\
$\ell - \ell_j$ & $w_1$ &  $b_j > d_j > \cdots$ & for all $1 \leq j \leq 3t$.
\end{tabular}
\end{center}

Here, $\ell$ is the smallest integer such that $\ell w_1 > \max(2w_1, w_2)$.
Note that $\ell \geq 3$ and so $\ell - \ell_j$ is never negative.
Note that $\score(s_i) = w_2 + 2w_1$, $\score(s'_i) = w_2 + w_1$,
$\score(b_j) = \ell w_1$, $\score(p) = \max(2w_1,w_2)$,
and $\score(d_j) \leq \max(2w_1,w_2)$.
The same argument as above shows that
${\cal S}$ contains an exact cover if and only if
$p$ can become a winner after deleting at most $n + 2t$ voters.

We now turn to the proof for WCCDV for 3-veto.   Our construction
will use only weights 1 and 3.
Since the reductions of Theorem~\ref{thm:reduction} do not change the
set of voter weights, weights 1 and~3 also suffice to get
NP-completeness for WCCAV for 3-approval.
Given the instance of X3C$'$ described above, we construct the following
instance $(C,V,p,k)$ of WCCDV for 3-veto. We set
$C = \{p\} \cup B \cup \{s_i \ | \ 1 \leq i \leq n\} \cup \{r,d, d' \}$
($d$ and $d'$ are dummy candidates that are
used for padding)
and $V$ consists of the following voters:

\begin{center}
\begin{tabular}{cccl}
\# & weight & preference order& \\
1 & 3 &  $\cdots > p > s_i\phantom{'} > r\phantom{_{i_1}}$ &  \multirow{4}{*}{$\left.\rule{0cm}{0.9cm}\right\}$ for all $1 \leq i \leq n$ and $S_i = \{b_{i_1}, b_{i_2}, b_{i_3}\}$}\\
1 & 1 &  $\cdots > p > s_i\phantom{'} > b_{i_1}$\\
1 & 1 &  $\cdots > p > s_i\phantom{'} > b_{i_2}$\\
1 & 1 &  $\cdots > p > s_i\phantom{'} > b_{i_3}$\\
$3n-3t$ & 1 & $\cdots > d > d'\phantom{_i} > r\phantom{_{i_1}}$\\
$3n-3$ & 1 & $\cdots > d > d'\phantom{_i} > s_{i\phantom{_1}}$ & for all $1 \leq i \leq n$\\
$3n+1-\ell_j$ & 1 & $\cdots > d > d'\phantom{_i} > b_{j\phantom{_1}}$ & for all $1 \leq j \leq 3t$.
\end{tabular}
\end{center}

It is more convenient to count the number of vetoes for each candidate than to count
the number of approvals.   Note that
$\vetoes(s_i) = 3n+3$, $\vetoes(b_j) = 3n+1$, $\vetoes(r) = 6n-3t$, 
$\vetoes(p) = 6n$, and $\vetoes(d) = \vetoes(d') \geq 3n$.
We claim that ${\cal S}$ contains an exact cover if and only if
$p$ can become a winner (i.e., have a lowest number of vetoes)
after deleting at most $n + 2t$ voters.

$(\Rightarrow)$:
Delete the $(n-t)$ weight-3 voters corresponding to the sets not in the cover
and delete the $3t$ weight-1 voters that veto $p$ and that
correspond to the sets in the cover.  
Then $\vetoes(s_i) = \vetoes(b_j) = \vetoes(r) = \vetoes(p) = 3n$ and 
$\vetoes(d) = \vetoes(d') \geq 3n$.  
So, $p$ is a winner.

$(\Leftarrow)$:
We can assume that we delete only voters that veto $p$. 
Suppose we delete $k_1$ weight-1 voters and $k_2$ weight-3 voters,
$k_1 + k_2 \leq n + 2t$. 
After this deletion, $\vetoes(p) = 6n - k_1 - 3k_2$,
$\vetoes(r) = 6n - 3t - 3k_2$, and $\vetoes(b_j) \leq 3n+1$.
In order for $p$ to be a winner, we need
$\vetoes(p) \leq \vetoes(r)$.  This implies that $k_1 \geq 3t$.
We also need $\vetoes(p) - \vetoes(b_j) \leq 0$.  Since
$\vetoes(p) - \vetoes(b_j) \geq  6n - k_1 - 3k_2 - (3n + 1) \geq
6n - (n+2t-k_2) - 3k_2 - 3n - 1  = 2n - 2t - 2k_2 - 1$, it follows
that $k_2 \geq n-t$.
So we delete $3t$ weight-1 votes and $n-t$ weight-3 votes, and after
deleting these voters $\vetoes(p) = 3n$.  In order for $p$ to
be a winner, we can delete
at most one veto for each $b_j$ and at most three vetoes for each $s_i$.  
This implies that the set of deleted weight-1 voters corresponds to 
a cover.   
\end{proof}

\subsection{Approximation and Greedy Algorithms}\label{ss:approx}
When problems are computationally difficult, such as being
NP-complete, it is natural to wonder whether good polynomial-time
approximation algorithms exist.  So, motivated by the NP-completeness
results discussed earlier in this paper for most cases of WCCAV/WCCDV
for $t$-approval and $t$-veto, this section studies greedy and other
approximation algorithms for those problems.  (Recall that WCCAV is
NP-complete for $t$-approval, $t \geq 3$, and for $t$-veto, $t \geq
2$, and WCCDV is NP-complete for $t$-approval, $t \geq 2$, and for
$t$-veto, $t \geq 3$.)  Although we are primarily interested in
constructing good approximation algorithms, we are also interested in
cases where particular greedy strategies can be shown to fail to
provide good approximation algorithms, as doing so helps one eliminate
such approaches from consideration and sheds light on the 
approach's limits of applicability.
First, we will establish a connection to the weighted
multicover problem, and we will use it to obtain approximation results.
Then we will obtain an  approximation algorithm
that will work by direct action on our problem.  
Table~\ref{tab:approx}
in Section~\ref{ss:conclusions}
summarizes our results on approximation
algorithms for $t$-approval/$t$-veto WCCAV/WCCDV\@.\medskip

\subsubsection{A Weighted Multicover Approach}
Let us first consider the extent to which known algorithms for
the Set-Cover family of problems apply to our setting. Specifically,
we will use the following multicover problem.

\begin{definition}
  An instance of Weighted Multicover (WMC) consists of a set
  $B = \{b_1, \ldots, b_m\}$, a sequence $r = (r_1, \ldots, r_m)$ of
  nonnegative integers (covering requirements), a collection $\calS =
  (S_1, \ldots, S_n)$ of subsets of $B$, and a sequence $\omega =
  (\omega_1, \ldots, \omega_n)$ of positive integers (weights of the
  sets in $\calS$). The goal is to find a minimum-cardinality
  set $I \subseteq \{1,
  \ldots, n\}$ such that for
  each $b_j \in B$ it holds that 
$r_j \leq \sum\limits_{i\in I \land b_j \in S_i}\omega_i$, 
   or to declare that no such set exists.
\end{definition}
That is, given a WMC instance we seek a smallest collection of subsets
from $\calS$ that satisfies the covering requirements of the elements
of $B$ (keeping in mind that a set of weight $\omega$ covers each of
its elements $\omega$ times).  WMC is an extension of 
Set-Cover with unit costs.
We will not define here the problem known as 
covering integer programming (CIP)
(see~\cite{kol-you:j:cip}).  However, that problem will be 
quite important to us here.  The reason is that we observe
that 
WMC is a special case of 
CIP
(with multiplicity constraints but) without packing constraints;
footnote~\ref{footnote:ky} below
is in effect describing how to embed our problem in 
that problem.
An approximation algorithm of 
Kolliopoulos and Young~\cite{kol-you:j:cip} for 
CIP
(with multiplicity constraints but) without packing constraints,
applied to the special case of WMC, gives the following 
result.\footnote{\label{footnote:ky}%
This follows from the sentence
starting ``Our second algorithm finds a solution'' on 
page~496 of \cite{kol-you:j:cip} (which itself follows 
from their Theorem~8), 
keeping in mind that 
we have none of their so-called packing constraints, and so 
we may take it that what they call $\epsilon$ is one and the matrix
and vector they call $B$ and $b$ won't be a factor here.
Their vector $a$ corresponds to our $r_j$'s; the element in
the $j$th row and $i$th column of their matrix $A$ will for us be 
set to 
$\omega_i$ if $S_i$ contains $b_j$ and $0$ otherwise; 
we set their 
cost vector $c$ to be a vector of all $1$'s;
we set their 
multiplicity vector $d$ to be a vector of all $1$'s;
their vector $x$ corresponds to the characteristic function of our $I$;
and 
their $\alpha$ will be 
Theorem~\ref{thm:ky}'s bound $t$ on 
the number of elements of $B$ contained in any $S_i$.}

\begin{theorem}[Kolliopoulos and Young~\cite{kol-you:j:cip}]\label{thm:ky}
   There is a polynomial-time algorithm that when given an instance of WMC
   in which each set contains at most $t$ elements gives an
   $\bigoh(\log t)$-approximation.
\end{theorem}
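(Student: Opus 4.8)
The plan is to realize WMC as precisely the special case of covering integer programming (CIP) ``with multiplicity constraints but without packing constraints'' that the preceding footnote sets up, and then simply invoke the Kolliopoulos--Young algorithm. Concretely, given a WMC instance with $B = \{b_1,\ldots,b_m\}$, covering requirements $r = (r_1,\ldots,r_m)$, sets $\calS = (S_1,\ldots,S_n)$, and weights $\omega=(\omega_1,\ldots,\omega_n)$, I would write the integer program: minimize $\sum_{i=1}^{n} x_i$ subject to $\sum_{i \,:\, b_j \in S_i} \omega_i x_i \ge r_j$ for every $j$, together with $0 \le x_i \le 1$ and $x_i$ integral. This is exactly a CIP whose constraint matrix $A$ has entry $\omega_i$ in row $j$, column $i$ when $b_j \in S_i$ and $0$ otherwise, whose right-hand side is $(r_1,\ldots,r_m)$, whose cost vector is all ones, and whose multiplicity vector is all ones, and which has no packing constraints. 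The structural point that makes the bound come out as claimed is that column $i$ of $A$ has at most $|S_i| \le t$ nonzero entries, so the per-column sparsity parameter driving the Kolliopoulos--Young guarantee is at most $t$.

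Next I would invoke their ``second algorithm'' (which, as the footnote notes, follows from their Theorem~8). Since there are no packing constraints, their parameter $\epsilon$ may be taken to be $1$ and their packing matrix/vector play no role, so their guarantee collapses to an $\bigoh(\log \alpha)$-approximation where $\alpha$ is the bound on the number of elements of $B$ in any $S_i$, i.e., $\alpha = t$. A feasible $0/1$ solution $x$ to this CIP is precisely the characteristic vector of a set $I \subseteq \{1,\ldots,n\}$ meeting all covering requirements, and minimizing $\sum_i x_i$ is minimizing $\|I\|$; hence an $\bigoh(\log t)$-approximate CIP solution is an $\bigoh(\log t)$-approximate WMC solution. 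For the ``declare that no such set exists'' clause, I would observe that the WMC instance is feasible if and only if $I = \{1,\ldots,n\}$ satisfies every requirement, i.e., $r_j \le \sum_{i \,:\, b_j \in S_i}\omega_i$ for all $j$ (enlarging $I$ can only increase coverage), which is checkable in polynomial time; the reduction itself and the Kolliopoulos--Young algorithm are polynomial-time, so the overall procedure is as well.

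The step I expect to require the most care is not any calculation but the bookkeeping of the dictionary between our notation and theirs: I would need to verify that the quantity appearing in the Kolliopoulos--Young approximation ratio that we are instantiating is genuinely the per-column nonzero count (equivalently $\max_i |S_i|$), and not, say, the number of rows $m$ or the largest $\ell_1$-norm of a column of $A$ (which would bring in the magnitudes of the $\omega_i$). This is exactly what the footnote's correspondence---mapping their $a$ to our $r_j$'s, their $A$ to the $\omega_i$-or-$0$ matrix, their $c$ and $d$ to all-ones vectors, and their $\alpha$ to our $t$---is designed to pin down; once that identification is confirmed, the rest is routine.
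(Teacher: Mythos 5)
Your proposal matches the paper's own treatment: the paper proves this theorem precisely by the footnote's embedding of WMC into covering integer programming without packing constraints (matrix $A$ with entries $\omega_i$ or $0$, all-ones cost and multiplicity vectors, $\epsilon=1$, their $\alpha$ equal to the bound $t$ on set sizes) and an appeal to Kolliopoulos and Young's second algorithm / Theorem~8, which is exactly the dictionary you set up. Your added remarks on the trivial feasibility check and on confirming that the approximation ratio is governed by the per-column nonzero count are sensible but do not change the argument.
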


For $t$-approval both WCCAV and WCCDV naturally translate
to equivalent WMC instances.
We consider WCCAV first. Let $(C,V,W, p, k)$ be an instance of
$t$-approval-WCCAV, where $W = (w_1, \ldots, w_n)$ is the collection
of voters that we may add. We assume without loss of generality 
that each voter in $W$
ranks $p$ among its top $t$ candidates (i.e., approves of $p$).  

We form an instance $(B,r,\calS,\omega)$ of WMC as follows. We set
$B = C - \{p\}$.
For each $c \in B$, we set its covering requirement to be $r_c =
\score_{(C,V)}(c) \ominus
\score_{(C,V)}(p)$, where $i \ominus j =_{def} \max(0, i-j)$. 
For each vote $w \in W$, let
$S_w$ be the set of candidates that $w$ does not approve of.
By our
assumption regarding each voter ranking $p$ among its top $t$
candidates, no $S_w$ contains $p$. We set $\calS = (S_{w_1},
\ldots, S_{w_n})$ and we set $\omega = (\omega(w_1), \ldots,
\omega(w_n))$.  It is easy to see that a set $I \subseteq \{1,
\ldots, n\}$ is a solution to this instance of WMC (that is, $I$
satisfies all covering requirements) if and only if 
adding the voters $\{w_i
\mid i \in I\}$ to the election $(C,V)$ ensures that $p$ is a
winner. The reason for this is the following: If we add voter $w_i$
to the election then for each candidate $c \in
S_{w_i}$, the difference between the score of $c$ and the score of $p$
decreases by $\omega(w_i)$, and for each candidate $c \not\in
S_{w_i}$ this difference does not change. The covering
requirements are set to guarantee that $p$'s score will match or
exceed the scores of all candidates in the election.

We stress that in the above construction we did not assume $t$ to be a
constant. Indeed, the construction applies to $t$-veto just as well as
to $t$-approval. 
So using Theorem~\ref{thm:ky} we obtain the following result.

\begin{theorem}\label{cor:wccav-log}
There is a polynomial-time $\bigoh(\log m)$-approximation
  algorithm for $t$-approval-WCCAV\@. 
There is a polynomial-time 
algorithm that when given an instance of $t$-veto-WCCAV
($t \in \mathbb{N}$) gives an 
$\bigoh(\log t)$-approximation.
\end{theorem}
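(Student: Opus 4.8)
The plan is to combine the WMC translation constructed just above with the Kolliopoulos--Young approximation guarantee of Theorem~\ref{thm:ky}. Recall that we showed how to turn an instance $(C,V,W,p,k)$ of $t$-approval-WCCAV (respectively, $t$-veto-WCCAV) into an instance $(B,r,\calS,\omega)$ of WMC with $B = C-\{p\}$, one set $S_w$ per unregistered voter $w$ (the candidates that $w$ does not approve of), and $\omega_w = \omega(w)$, in such a way that a set $I$ of indices is a WMC solution if and only if adding exactly the voters $\{w_i \mid i \in I\}$ makes $p$ a winner. Crucially, this correspondence preserves cardinality, so any $h$-approximation for the resulting WMC instance is immediately an $h$-approximation for the original control instance.

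First I would pin down the set sizes, since that is what controls the ratio in Theorem~\ref{thm:ky}. In the $t$-approval case every added voter approves of exactly $t$ candidates (one of which is $p$), so each $S_w$ has exactly $\|C\|-t < m$ elements; hence Theorem~\ref{thm:ky} yields an $\bigoh(\log m)$-approximation. In the $t$-veto case every added voter disapproves of exactly $t$ candidates, and after the harmless normalization that we add only voters who approve of $p$ none of those $t$ candidates is $p$, so each $S_w$ has exactly $t$ elements; hence Theorem~\ref{thm:ky} yields an $\bigoh(\log t)$-approximation, uniformly in $t$.

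The one point that needs care is feasibility: the algorithm of Theorem~\ref{thm:ky} is only claimed to produce a good solution when some solution exists, whereas an arbitrary WCCAV instance may be hopeless. But here this is easy to dispatch, using the observation from the Preliminaries: for $t$-approval-WCCAV and $t$-veto-WCCAV, $p$ can be made a winner by adding voters if and only if $p$ is already a winner once we add every unregistered voter that approves of $p$. So I would first discard all unregistered voters that do not approve of $p$ (this is without loss of generality, since such a voter can only raise some rival's score relative to $p$ and never lowers any rival's relative score), then check in polynomial time whether adding all the survivors makes $p$ a winner; if not, declare failure, and if so, we know the associated WMC instance has a solution and may safely invoke Theorem~\ref{thm:ky}.

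I do not expect a genuine obstacle; the real work was already done in setting up the WMC reduction, and the statement is essentially a two-line corollary once the set-size bookkeeping and the feasibility pre-check are in place. The only mild subtlety worth stating explicitly is to make sure the $t$-veto reduction really produces sets of size $t$ rather than $m-t$: this is exactly why $S_w$ is defined as the \emph{disapproved} set rather than the approved set, and why we insist that each added voter approve of $p$, so that $p \notin S_w$ and $|S_w| = t$.
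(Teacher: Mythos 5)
Your proposal is correct and follows essentially the same route as the paper: translate to WMC, note that the sets have size at most $m$ for $t$-approval and exactly $t$ for $t$-veto, and invoke Theorem~\ref{thm:ky}. The feasibility pre-check you add is handled in the paper's Preliminaries rather than in the proof itself, but it is the same observation.
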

\begin{proof}
  It suffices to use the reduction of $t$-approval/$t$-veto to WMC and
  apply the algorithm from Theorem~\ref{thm:ky}. For the case of
  $t$-approval, the reduction guarantees that each set in the WMC
  instance contains at most $m$ elements.  For the case of
  $t$-veto, each of these sets contains at most $t$ elements.
\end{proof}

We can obtain analogous results for the case of $t$-approval/$t$-veto
and WCCDV\@. One can either provide a direct reduction from these
problems to WMC or notice that the reductions given in the proof of
Theorem~\ref{thm:reduction} maintain approximation properties.

\begin{theorem}\label{cor:wccdv-log}
There is a polynomial-time 
algorithm that when given an instance of $t$-approval-WCCDV
($t \in \mathbb{N}$) gives an 
$\bigoh(\log t)$-approximation.
There
  is a polynomial-time 
$\bigoh(\log m)$-approximation algorithm for $t$-veto-WCCDV\@.
\end{theorem}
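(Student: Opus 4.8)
The plan is to parallel the treatment of WCCAV from Theorem~\ref{cor:wccav-log}, reducing each instance of the relevant deletion problem to an equivalent Weighted Multicover instance and then invoking Theorem~\ref{thm:ky}. First I would handle $t$-approval-WCCDV. Given an instance $(C,V,p,k)$ of $t$-approval-WCCDV, we may assume (as in the proof of Theorem~\ref{t:t-approval-in-P}) that any voter we delete disapproves of $p$. For each candidate $c \in C - \{p\}$ with $\score_{(C,V)}(c) > \score_{(C,V)}(p)$, deleting a voter who approves of $c$ (but not $p$) drops the difference $\score(c) - \score(p)$ by that voter's weight, and deleting a voter who disapproves of both does nothing useful. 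So I would set $B$ to be the set of candidates that currently beat $p$, with covering requirement $r_c = \score_{(C,V)}(c) \ominus \score_{(C,V)}(p)$; for each deletable voter $v$ (one disapproving of $p$), the set $S_v$ is the set of candidates in $B$ that $v$ approves of, with weight $\omega(v)$. Then a subcollection of voters whose deletion makes $p$ a winner corresponds exactly to a feasible multicover. The key point for the approximation ratio: each $S_v$ contains at most $t$ elements, since a $t$-approval voter approves of exactly $t$ candidates, hence at most $t$ candidates in $B$. Applying Theorem~\ref{thm:ky} yields the $\bigoh(\log t)$-approximation. As the theorem statement notes, one could alternatively observe that the reduction from $t$-approval-WCCDV to $t$-veto-WCCAV in the proof of Theorem~\ref{thm:reduction} preserves the number of voters added/deleted and hence the approximation ratio, and then invoke Theorem~\ref{cor:wccav-log}; but the direct reduction is cleaner and gives the bound in terms of $t$ directly.

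For $t$-veto-WCCDV the roles of approval and disapproval swap, which is why the bound becomes $\bigoh(\log m)$ rather than $\bigoh(\log t)$. Here a deletable voter should be one who disapproves of $p$ is no longer the right restriction; instead, under $t$-veto, deleting a voter raises $\score(c)$ for every candidate $c$ that the voter vetoes and leaves the rest unchanged—so deleting a voter who does \emph{not} veto $p$ raises $\score(p)$ relative to everyone the voter vetoes but not relative to $p$ itself, which is exactly what we want. Thus I would restrict to voters who approve of $p$ (i.e., do not veto $p$); deleting such a voter $v$ decreases $\score(c) - \score(p)$ by $\omega(v)$ for each of the $t$ candidates $c$ that $v$ vetoes. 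Setting $B$ to be the candidates beating $p$, with the same covering requirements, and $S_v$ to be the vetoed candidates that lie in $B$, we again get an equivalent WMC instance. But now $|S_v| \le t$ is \emph{not} the useful bound in general—wait, it is $t$, so one would naively expect $\bigoh(\log t)$ again. The subtlety is that under $t$-veto with unbounded candidates we only have the generic bound $|S_v| \le \min(t, m-1) \le m - 1$ on the set size when we reduce via the $t$-approval--formatted WMC; more precisely, to match the phrasing of the theorem, the natural reduction for $t$-veto-WCCDV produces sets of size up to $m$ (because the cleanest reduction routes through $t$-approval-WCCAV on an enlarged candidate set, as in Theorem~\ref{thm:reduction}, where the approval sets—complements of $t$-veto sets—have size $m' - t$ which is $\Theta(m)$), giving $\bigoh(\log m)$. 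I would present it that way: apply the reduction of Theorem~\ref{thm:reduction} from $t$-approval-WCCDV... no—from $t$-veto-WCCDV to $t$-approval-WCCAV, note it preserves the solution size exactly, then apply Theorem~\ref{cor:wccav-log}, whose $t$-approval half gives $\bigoh(\log m')$ where $m'$ is polynomially related to $m$, hence $\bigoh(\log m)$.

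The main obstacle is bookkeeping the directionality of the reduction and making sure the approximation ratio is honestly stated for each of the four problem/rule pairs: $t$-approval-WCCDV gets $\bigoh(\log t)$ because the disapproval-at-$c$ structure is not what governs set size—rather, under $t$-approval a deleted voter approves exactly $t$ candidates, so the cover sets have size $\le t$; whereas $t$-veto-WCCDV only gets $\bigoh(\log m)$ because when we force the problem into WMC form (equivalently, into $t$-approval-WCCAV via Theorem~\ref{thm:reduction}) the relevant sets have size $\Theta(m)$. I expect the proof itself to be short: two or three sentences establishing the direct WMC reduction for $t$-approval-WCCDV with set size $\le t$, then one sentence observing that the Theorem~\ref{thm:reduction} reduction is approximation-preserving and combining it with Theorem~\ref{cor:wccav-log} to get the $t$-veto-WCCDV case. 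The only thing requiring care is verifying that ``delete only voters who (dis)approve of $p$'' is without loss of generality and that the covering requirements $\score(c) \ominus \score(p)$ exactly capture feasibility—both of which are immediate from the additive structure of scoring-rule scores and were already used implicitly in Theorems~\ref{t:t-approval-in-P} and~\ref{cor:wccav-log}.
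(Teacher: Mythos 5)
Your overall route is the same as the paper's: the paper disposes of this theorem in a single sentence, saying one can either give a direct reduction to WMC or observe that the reductions of Theorem~\ref{thm:reduction} preserve approximation properties and then invoke Theorem~\ref{cor:wccav-log}. Your treatment of $t$-approval-WCCDV (direct WMC reduction; only voters disapproving of $p$ are worth deleting; each such voter approves exactly $t$ candidates, so every cover set has size at most $t$; apply Theorem~\ref{thm:ky}) is correct and is exactly the intended direct argument.

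Your attempted direct reduction for $t$-veto-WCCDV, however, has the direction of the score change backwards, and this is a genuine error even though you end up abandoning that route. Under $t$-veto, deleting a voter $v$ lowers $\score(c)$ by $\omega(v)$ for every candidate $c$ that $v$ does \emph{not} veto, and leaves the vetoed candidates' scores unchanged. So deleting a voter who approves of $p$ and vetoes $c$ lowers $\score(p)$ while leaving $\score(c)$ alone: the gap $\score(c)-\score(p)$ \emph{increases}. The correct without-loss-of-generality restriction is the opposite of the one you state: delete only voters who \emph{veto} $p$. For such a voter the cover set $S_v$ is the set of candidates in $B$ that $v$ approves of, which has up to $m-t$ elements---and that, not any detour through Theorem~\ref{thm:reduction}, is the honest source of the $\bigoh(\log m)$ (rather than $\bigoh(\log t)$) bound for $t$-veto-WCCDV\@. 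Your puzzlement (``one would naively expect $\bigoh(\log t)$ again'') is the symptom of this sign error, not of any genuine obstruction. The fallback you settle on---reduce $t$-veto-WCCDV to $t$-approval-WCCAV via Theorem~\ref{thm:reduction} and apply Theorem~\ref{cor:wccav-log}---does work and is the paper's other suggested route, but one further point needs care there: the constructed instance has $m'=\Theta(ntm)$ candidates (not merely polynomial in $m$, as you assert), so $\bigoh(\log m')$ is not $\bigoh(\log m)$ on its face; one must additionally note that every candidate outside the original $C$ has zero covering requirement (such candidates never threaten $p$), so they can be dropped from the WMC instance and the effective set sizes are still $\bigoh(m)$.
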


\subsubsection{A Direct Approach}
Using algorithms for WMC, we were able to obtain relatively strong
algorithms for WCCAV/WCCDV under $t$-approval and $t$-veto. However,
with this approach we did not find approximation algorithms for
$t$-approval-WCCAV and $t$-veto-WCCDV whose approximation ratios do
not depend on the size of the election. In the following we will seek
direct algorithms for these problems.

We now show that a very simple greedy approach yields a
polynomial-time $t$-approximation algorithm for $t$-approval-WCCAV
and $t$-veto-WCCDV\@.
(Recall that this means that in cases when making $p$ win is possible,
the number of voters our algorithm adds/deletes to reach victory is never more
than $t$ times that of the optimal set of additions/deletions.)

Let GBW (greedy by weight) define the following very simple
algorithm for WCCAV\@.  (The votes are the weighted $t$-approval vectors
induced by the preferences of the voters.)  
(Pre)discard all
unregistered votes that do not approve of the preferred candidate $p$.
Order the (remaining) unregistered votes from heaviest to lightest,
breaking ties in voter weights in some simple, transparent way (for
concreteness, let us say by lexicographic order on the votes'
representations).  GBW goes through the unregistered votes in that
order, and as it reaches each vote it adds the vote exactly if the
vote disapproves of at least one candidate whose score (i.e., total
weight of approvals) is currently strictly greater than that of $p$.
It stops successfully when $p$ has become a winner and unsuccessfully
if before that happens the algorithm runs out of votes to consider.
The following result says that GBW is a $t$-approximation algorithm
for $t$-approval-WCCAV, and also 
for $t$-veto-WCCDV, using the obvious analogue of GBW for $t$-veto-WCCDV, 
which 
we will also call GBW.\footnote{For completeness and clarity, we describe 
what we mean by GBW for   
$t$-veto-WCCDV\@.
Order all votes that do not approve of $p$ from heaviest to lightest,
breaking ties in voter weights in some simple, transparent way (for
concreteness, let us say by lexicographic order on the votes'
representations).  GBW goes through these votes in that
order, and as it reaches each vote it removes the vote exactly if the
vote approves of at least one candidate whose score (i.e., total
weight of approvals) is currently strictly greater than that of $p$.
It stops successfully when $p$ has become a winner and unsuccessfully
if before that happens the algorithm runs out of such votes to consider.}

\begin{theorem} \label{thm:gbw}
  Let $t \geq 3$.  The polynomial-time
  greedy algorithm GBW is a $t$-approximation algorithm for
  $t$-approval-WCCAV and $t$-veto-WCCDV\@;  
  and there are instances in which GBW's
  approximation factor on each of these problems is no better than~$t$.
\end{theorem}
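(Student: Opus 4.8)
The plan is to prove the two halves of Theorem~\ref{thm:gbw} separately: first the upper bound (GBW is a $t$-approximation), then the matching lower bound (there are instances forcing ratio exactly $t$). By the reduction of Theorem~\ref{thm:reduction}, or by the obvious symmetry between approvals and vetoes, it suffices to argue carefully for $t$-approval-WCCAV and then note that the $t$-veto-WCCDV argument is identical after replacing ``approves'' by ``disapproves'' throughout; I would state this reduction explicitly at the start so the rest of the proof handles only the WCCAV case.

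For the upper bound, fix an instance on which success is possible (so by the observation in the Preliminaries, adding all $p$-approving unregistered votes already makes $p$ a winner, hence GBW does terminate successfully). Let $A$ be the multiset of votes GBW adds, let $\mathrm{OPT}$ be an optimal solution, and let $|\mathrm{OPT}| = \mathit{opt}$. The key structural fact is that whenever GBW adds a vote $w$, at that moment there is some candidate $c$ with $\score(c) > \score(p)$ and $w$ disapproves of $c$; I want to charge $w$ to such a $c$. The plan is a ``token''/``responsibility'' argument: for each candidate $c \neq p$, let $A_c \subseteq A$ be the votes GBW added that disapprove of $c$; since each added vote disapproves of at most $t-1$ non-$p$ candidates (it approves of $p$ and $t-1$ others among $m$ candidates, hence disapproves of the remaining $m-t$)--- wait, that is too many. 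The right accounting is the reverse: each added vote \emph{approves} of exactly $t$ candidates, one of which is $p$, so it approves of at most $t-1$ non-$p$ candidates; dually it is ``useful against'' the $m-t$ it disapproves. So the naive charge-to-disapproved-candidate bound gives $|A| \le (m-t)\cdot(\text{something})$, which is the weak bound we already have from WMC. To get the clean factor $t$, I instead charge each GBW-added vote to a vote in $\mathrm{OPT}$: order GBW's additions $w_1, w_2, \ldots$; when GBW adds $w_j$ because of candidate $c_j$ (still above $p$), the total weight of $\mathrm{OPT}$-votes disapproving $c_j$ must exceed the current surplus $\score(c_j) - \score(p)$, and since $w_j$ is the heaviest remaining $p$-approving vote not yet added, $\omega(w_j)$ is at least as large as the weight of any not-yet-added $\mathrm{OPT}$-vote; a careful amortized comparison shows that after GBW has added more than $t\cdot\mathit{opt}$ votes, the surplus of every candidate has been driven nonpositive, because the total ``knock-down capacity'' GBW has deployed against the worst candidate exceeds what $\mathrm{OPT}$ deployed. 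Making this amortization precise --- in particular handling the fact that a single heavy GBW vote can be useful against many candidates simultaneously, and that ties in weight are broken lexicographically --- is the step I expect to be the main obstacle, and it is where the hypothesis $t \ge 3$ and the ``heaviest first'' rule are genuinely used.

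For the lower bound, the plan is to exhibit, for each $t \ge 3$, a family of instances where $\mathrm{OPT}$ adds one heavy vote but GBW is fooled into adding $t$ light votes. The construction should have $p$ trailing a single candidate (or a small gadget of candidates) by a surplus $s$; provide one unregistered vote of weight $s$ that approves $p$ and disapproves exactly the trailing candidate (this is the optimum, size $1$); and provide $t$ unregistered votes each of weight slightly more than $s/t$ --- hence each individually heavier than nothing but, crucially, each heavier than the weight-$s$ optimum vote is \emph{not}, so I must instead make the $t$ light votes heavier than the single good vote so GBW examines them first. Concretely: the $t$ decoy votes each have weight $s$ as well (or $s$ plus a tiny lexicographically-earlier perturbation) and each knocks down a \emph{different} private candidate $c_1, \ldots, c_t$ whose surplus over $p$ is exactly $s$; GBW, seeing these first, adds all $t$ of them, each legitimately reducing some above-$p$ candidate, and only then would it reach the one vote that also would have sufficed against the ``real'' blocker --- but by arranging that the single optimal vote simultaneously disapproves all of $c_1,\ldots,c_t$ (possible only if $m-t \ge t$, i.e.\ $m \ge 2t$, which we are free to choose), $\mathrm{OPT} = 1$ while GBW outputs $t$. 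I would then verify that GBW's tie-breaking and stopping rule indeed force it through all $t$ decoys before success, completing the ratio-$t$ witness; the analogous $t$-veto-WCCDV witness is obtained by the same dualization.
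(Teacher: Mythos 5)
Your lower-bound construction is essentially the paper's: $t$ heavy ``decoy'' votes, each closing the gap of one private candidate $c_i$ while approving $p$ and the other $c_j$'s (so no gap is reopened), versus one lighter vote that disapproves all $t$ gapped candidates at once; this is correct and works, though for the $t$-veto-WCCDV witness the paper does \emph{not} get away with ``the same dualization'' --- it builds a separate instance and needs an auxiliary realizability tool, because in the deletion setting the gaps must be produced by actual registered votes that coexist with the deletable ones, and one must check GBW is not tempted by those padding votes.

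The genuine gap is in the upper bound, and you have flagged it yourself: ``making this amortization precise \ldots\ is the step I expect to be the main obstacle.'' The charging scheme you sketch does not go through as stated. It is not true that the vote GBW adds at step $j$ is at least as heavy as every not-yet-added OPT vote (GBW skips votes that are currently useless, and OPT may contain votes GBW skipped or has not reached), and the claim that after $t\cdot\mathit{opt}$ additions ``the total knock-down capacity GBW has deployed against the worst candidate exceeds what OPT deployed'' is precisely the assertion that needs proof --- GBW's additions may systematically \emph{approve} the worst candidate and deploy nothing against it. The missing idea is the right lower bound on OPT: for each gapped candidate $d$, let $i_d$ be the number of votes needed to close $d$'s gap even using the heaviest unregistered votes disapproving $d$; then $\mathit{opt}\ge\max_d i_d$, and one shows GBW terminates within $t\cdot\max_d i_d$ additions. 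That last step is not an amortized weight comparison but a nested pigeonhole argument: if some $z$ still has a gap after $t\cdot\max_d i_d$ additions, then fewer than $\max_d i_d$ of the added votes disapprove $z$ (else GBW would have added the $i_z$ heaviest such votes and closed $z$'s gap), so at least $(t-1)\max_d i_d+1$ of them approve $z$; the last of these was triggered by some still-gapped $y$, and iterating the same dichotomy on $y$, then on the candidate triggering the last vote approving both $z$ and $y$, and so on through at most $t-1$ levels (each level pinning down one more approved candidate of the votes in question, which have only $t$ approval slots including $p$) yields a contradiction. Without this, or a worked-out substitute, the first half of the theorem is unproved.
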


We prove Theorem~\ref{thm:gbw}'s upper and lower bound parts 
separately, through the following two lemmas from which the theorem 
immediately follows.  

\begin{lemma}\label{t:GBW-t-deferred-av-dv}
  Let $t \geq 3$.  
  There are instances on which 
  the polynomial-time
  greedy algorithm GBW 
  has an approximation factor on 
  $t$-approval-WCCAV 
  no better than~$t$.
  There are instances on which
  the polynomial-time
  greedy algorithm GBW 
  has an approximation factor on 
  $t$-veto-WCCDV 
  no better than~$t$.
\end{lemma}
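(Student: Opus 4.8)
The plan is to exhibit, for each $t\ge 3$, one family of $t$-approval-WCCAV instances and one family of $t$-veto-WCCDV instances on which GBW is driven to add (respectively delete) exactly $t$ votes even though a single well-chosen vote already suffices. Both families are built around $t$ ``threshold'' candidates $c_1,\dots,c_t$, each of which leads $p$ by exactly one point in the starting election.

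For $t$-approval-WCCAV, take $C=\{p,c_1,\dots,c_t,d_1,\dots,d_{t-1}\}$ (so $m=2t$) and pick registered voters so that $\score(c_j)=\score(p)+1$ for every $j$ and no dummy $d_i$ outscores $p$. The unregistered pool consists of (i)~$t$ ``decoy'' votes $u_1,\dots,u_t$ of weight $2$, where $u_j$ approves exactly $\{p\}\cup\{c_i: i\ne j\}$ (this exhausts the $t$ approval slots, so no decoy touches a dummy), and (ii)~one ``good'' vote $v^{*}$ of weight $1$ approving exactly $\{p,d_1,\dots,d_{t-1}\}$. Adding $v^{*}$ alone raises $\score(p)$ by one and raises no $\score(c_j)$, so all candidates tie and $p$ becomes a (co-)winner; hence $\mathrm{OPT}=1$. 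But GBW scans the heavier votes first, i.e.\ the decoys, and it adds a vote iff that vote disapproves of some candidate currently outscoring $p$. When GBW reaches a decoy $u_j$, every decoy it has already scanned has been added (by this rule, since any unscanned threshold candidate still leads $p$ by one), and each such earlier decoy boosts $c_j$ and $p$ equally, so $c_j$ still leads $p$ by one and GBW adds $u_j$; this raises $\score(p)$ and every $\score(c_i)$ with $i\ne j$ by $2$ while leaving $\score(c_j)$ fixed, so afterwards $c_j$ trails $p$ while every not-yet-scanned threshold candidate still leads $p$ by one. Thus $p$ becomes a winner only after all $t$ decoys have been added; GBW adds $t$ votes and halts before ever reaching $v^{*}$, so the ratio is $t$. (Since all decoys have equal weight, their tie-break order is irrelevant: by the argument above GBW adds every decoy it encounters, in whatever order it encounters them.)

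For $t$-veto-WCCDV I would use the ``dual'' instance. I again keep threshold candidates $c_1,\dots,c_t$ leading $p$ by exactly one, now alongside $t$ dummies. The registered voters include: (i)~$t$ decoys $u_1,\dots,u_t$ of weight $2$, where $u_j$ vetoes exactly $\{p\}\cup\{c_i:i\ne j\}$ (so deleting $u_j$ drops $\score(c_j)$ by $2$ and leaves every other $\score(c_i)$ unchanged); (ii)~one good vote $v^{*}$ of weight $1$ that vetoes $p$ and $t-1$ dummies and approves all of $c_1,\dots,c_t$ (so deleting $v^{*}$ drops every $\score(c_j)$ by one); and (iii)~two batches of weight-$1$ ``calibration'' votes, all of which approve $p$ and hence are never candidates for deletion --- one batch vetoing all of $c_1,\dots,c_t$ (used to pull each threshold candidate's lead over $p$ down to exactly one) and one batch approving all of $c_1,\dots,c_t$ while vetoing all the dummies (used, in sufficient quantity, to keep every dummy safely below $p$). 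Deleting $v^{*}$ alone ties every $c_j$ with $p$, so $\mathrm{OPT}=1$; meanwhile GBW, which scans the $p$-vetoing votes heaviest first, reaches the decoys before $v^{*}$ and, by exactly the stepwise argument above, is forced to delete all $t$ of them before $p$ becomes a winner, yielding ratio $t$ once more.

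The step I expect to be fiddliest is the score calibration in the $t$-veto instance: because a decoy $u_j$ must veto $p$ together with all $t-1$ of the other threshold candidates, it is forced to approve every dummy, so the decoys (and the first batch of calibration votes) dump weight onto the dummies, and one must size the second batch of calibration votes large enough (yet still polynomially bounded) that every dummy stays strictly below $p$ both initially and after GBW's deletions, while simultaneously keeping each $\score(c_j)$ exactly one above $\score(p)$. Checking that all these inequalities can hold at once is a routine but slightly delicate counting exercise; for $t$-approval-WCCAV no such issue arises, since there each decoy spends all its approvals on $p$ and the $c_i$'s and never affects the dummies.
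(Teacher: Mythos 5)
Your construction is correct and is essentially the paper's own: $t$ heavy decoy votes, each repairing the gap of only one threshold candidate, versus a single lighter vote that repairs all $t$ gaps at once, with realizability on the veto side handled by padding votes that approve $p$ (the paper does the same via a small gap-realization tool). The one detail you leave implicit on the approval side---registered voters giving each $c_j$ a lead of exactly $1$ and each dummy a lead of $0$---is realized by a single weight-$1$ vote approving exactly $\{c_1,\dots,c_t\}$, and your veto-side calibration does close (e.g., $2$ votes in the first batch and $2t+2$ in the second suffice).
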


\begin{lemma} \label{lemma:upper}
  Let $t \geq 3$.  The polynomial-time
  greedy algorithm GBW is a $t$-approximation algorithm for
  $t$-approval-WCCAV and $t$-veto-WCCDV\@.
\end{lemma}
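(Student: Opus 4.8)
The plan is to analyze GBW for $t$-approval-WCCAV directly (the $t$-veto-WCCDV case follows by the same argument, working with vetoes in place of approvals, exactly as GBW's definition is transported in the footnote). Fix an instance on which success is possible, let $\mathit{OPT}$ be an optimal solution of size $\mathit{opt}$, and let $G$ be the set of votes GBW adds, of size $g$. Since success is possible, and since GBW only declines to add a vote when that vote would not decrease the score-surplus of any currently-leading candidate, the first thing I would check is that GBW actually succeeds whenever success is possible; this follows because, as long as $p$ is not yet a winner, there is some candidate $c$ with $\score(c)>\score(p)$, and among the remaining unregistered votes at least one disapproves of $c$ (otherwise $p$ could never catch $c$, contradicting that success is possible with the full remaining pool). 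So GBW terminates successfully with some set $G$.

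The heart of the argument is to bound $g$ in terms of $\mathit{opt}$. First I would observe that GBW adds votes in nonincreasing order of weight, and that $\mathit{OPT}$ consists entirely of votes that approve $p$ (we may assume this without loss of generality, as in Theorem~\ref{t:t-approval-in-P}) and disapprove at least $t-1\geq 2$ candidates each. The key accounting idea: when GBW adds a vote $w$ because it disapproves some currently-leading candidate, charge $w$ to that candidate. I want to show that for each candidate $c\neq p$, GBW adds at most $t-1$ votes that get charged to $c$ \emph{beyond} what $\mathit{OPT}$'s contribution to covering $c$ would ``pay for.'' More precisely, since $G$ is processed heaviest-first, at the moment GBW considers the $j$th vote it adds, every vote it has so far \emph{declined} is heavier than every vote it will later add; I would use this to argue that the total weight GBW has put toward reducing $c$'s surplus, once GBW adds a vote charged to $c$, is at least as large as the weight $\mathit{OPT}$ puts toward $c$ using its $\mathit{opt}$ heaviest-or-better votes --- so GBW never needs ``more weight'' than $\mathit{OPT}$ to knock down any single candidate, only possibly more \emph{votes}, and the per-candidate vote overhead is bounded by $t-1$ because each added vote disapproves of (hence simultaneously helps) up to $t$ candidates while $\mathit{OPT}$'s votes do too. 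Summing the charges and comparing to the $t$-fold double-counting ($\sum_{c} |\mathit{OPT}\cap\{\text{votes disapproving }c\}| \leq t\cdot\mathit{opt}$ since each vote disapproves at most $t$... wait, $m-t$... here we use $t$-approval, so each vote disapproves $m-t$ candidates, but each vote \emph{approves} exactly $t$, and the relevant bound is that each of GBW's added votes is charged to the candidates it disapproves; I would instead count via approvals and the $t$-element sets of Theorem~\ref{thm:ky}'s WMC reduction) yields $g \leq t\cdot\mathit{opt}$.

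I expect the main obstacle to be making the ``heaviest-first implies GBW never wastes weight'' exchange argument fully rigorous, specifically handling the interaction between the greedy ordering and the fact that a single added vote can simultaneously be ``the reason'' for progress on several candidates: one must be careful that a vote charged to $c$ is not over-counted against other candidates, and that the running surpluses are tracked correctly as votes are added. I would set this up by fixing, at the end of GBW's run, for each $c\neq p$ the last GBW-added vote that strictly decreased $c$'s surplus, and partitioning $G$ by these ``witness'' votes; bounding each block by comparing against the restriction of $\mathit{OPT}$ to votes disapproving $c$ and using the weight-monotonicity of the greedy order to show each block has size at most $t\cdot|\{w\in\mathit{OPT}: c\in S_w\}|/(\text{something})$, then summing. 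Once the per-candidate blocks are controlled, the final inequality $g\leq t\cdot\mathit{opt}$ is a short double-counting step, and polynomial running time is immediate since GBW does one pass with a trivial per-step check.
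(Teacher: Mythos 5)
There is a genuine gap, and it is exactly at the point where you yourself hesitate. Your plan is to charge each added vote to a candidate and close with a double-counting step of the form $\sum_{c}\|\{w\in \mathit{OPT} : w \text{ disapproves } c\}\| \leq t\cdot \mathit{opt}$. For $t$-approval-WCCAV that inequality is false: an added $t$-approval vote disapproves $m-t$ candidates, not $t$, so each $\mathit{OPT}$ vote can appear in up to $m-t$ of the per-candidate sets and the count only gives a factor of $m-t$. (This is precisely why the paper's weighted-multicover route yields only an $\bigoh(\log m)$-approximation for $t$-approval-WCCAV: the sets $S_w$ in that reduction have size $m-t$, not $t$.) Your proposed repair---``count via approvals and the $t$-element sets of the WMC reduction''---does not exist for this case, and the asserted per-candidate overhead of $t-1$ ``beyond what $\mathit{OPT}$ pays for'' is never justified; even if it held, summing it over all candidates with a gap gives $\mathit{opt}+(t-1)\cdot(\text{number of gapped candidates})$, which is not $t\cdot\mathit{opt}$. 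The simple gap-counting argument you are reaching for is the one the paper uses for $t$-\emph{veto}-WCCAV (Theorem~\ref{t:FORMERLY-no-constant}), where each vote really does help at most $t$ gaps; it does not transfer to $t$-approval-WCCAV.

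The paper's proof uses a different benchmark and a different combinatorial mechanism. It lower-bounds $\mathit{opt}$ by $\max_d i_d$, where $i_d$ is the number of voters needed to close $d$'s gap when taking the heaviest unregistered voters disapproving $d$, and then shows GBW stops within $t\cdot\max_d i_d$ additions. The structural fact exploited is the one you did not use: each added vote approves only $t$ candidates (one of which is $p$), so among any $t\cdot\max_d i_d$ added votes, if some candidate $z$ still has a gap then by pigeonhole many of them approve $z$; iterating this pigeonhole inside the set of votes approving $z$ (the nested Cases~2, 2a, 2b) eventually isolates a block of $\max_d i_d$ added votes that all approve exactly the same $t-1$ non-$p$ candidates, hence all disapprove whatever candidate $w$ made the last of them eligible---but then $w$'s gap would already have been closed, a contradiction. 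Some argument of this shape (or another idea exploiting that each vote has only $t-1$ non-$p$ approval slots) is needed; your proposal as written does not supply it. Your handling of the veto/deletion case is fine in spirit (the paper routes it through Theorem~\ref{thm:reduction}), and your observation that GBW succeeds whenever success is possible is correct.
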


The 
proof
of our lower-bound claim, Lemma~\ref{t:GBW-t-deferred-av-dv},
consists of a somewhat 
detailed pair of constructions, and is of less interest 
than 
the upper-bound part of Theorem~\ref{thm:gbw}, 
namely Lemma~\ref{lemma:upper}.
We thus 
defer to the appendix the proof 
of Lemma~\ref{t:GBW-t-deferred-av-dv}.

\begin{proof}[{Proof of Lemma~\protect\ref{lemma:upper}}]

  Let us now prove the two claims that GBW is a $t$-approximation algorithm.
  We will prove the result for $t=3$ and WCCAV, 
  but it will be immediately clear
  that our proof straightforwardly generalizes to all greater~$t$;
  and the WCCDV case follows using Theorem~\ref{thm:reduction}.

  Clearly GBW is a polynomial-time algorithm.  
  Consider a given input instance of $t$-approval-WCCAV, with
  preferred candidate $p$.  Without loss of generality, 
  assume all unregistered voters
  approve of $p$.  We will say a candidate ``has a gap'' (under the
  current set of registered voters and whatever unregistered voters
  have already been added) if that candidate has strictly more weight
  of approvals than $p$ does.  For each candidate $d$ who has a gap,
  $d \neq p$, define $i_d$ to be the minimum number of unregistered
  voters one has to add to remove $d$'s gap; that is, if one went from
  heaviest to lightest among the unregistered voters, adding in turn
  each that disapproved of $d$, $i_d$ is the number of voters one would add
  before $d$ no longer had a gap.  If for any candidate $d$ it holds
  that no integer realizes $i_d$, then control is impossible using
  the unregistered voter set.  Clearly, any successful addition of
  voters must add at least $\max_d i_d$ voters (the max throughout
  this proof is over all candidates initially having a gap).

  Let us henceforth assume that control is possible in the input
  case.  We will show that 
  after having added at most 
  $3 \cdot \max_d i_d$ voters GBW will have
  made $p$ a winner, and so GBW is a 3-approximation algorithm.  
  By way of contradiction,
  suppose that after 
  $3 \cdot \max_d i_d$ additions some candidate, $z$, still has a gap.

  \emph{Case 1 [In at least $\max_d i_d$ of the first $3 \cdot \max_d
    i_d$ votes added by GBW, $z$ is not approved].}  Since for the
  last one of these to be added $z$ must still have had a gap before
  the addition, each earlier vote considered that disapproved $z$ had
  a gap for $z$ when it was considered and so would have been added
  when reached.  So, keeping in mind that $i_z \leq \max_d i_d$, we in
  fact must have added the $i_z$ heaviest voters disapproving of $z$,
  and so contrary to the assumption, $z$ no longer has a gap after
  these additions.
  
  \emph{Case 2 [Case 1 does not hold].}  So $z$ is approved in at
  least $1 + 2\cdot \max_d i_d$ of the added votes.  What made 
  the final one of the added votes, call it $v'$, 
  eligible for addition?  It must be
  that some candidate, say $y$, still had a gap just before 
  $v'$ was added.

  \emph{Case 2a [$y$ is disapproved in at least $\max_d i_d$ of the
    $2\cdot \max_d i_d$ votes added before $v'$ that approved $z$].}
  Then, since until $y$'s gap was removed no unregistered voters
  disapproving of $y$ would be excluded by GBW, $y$'s $i_y$ heaviest
  voters will have been added.  So contrary to Case $2$'s assumption,
  $y$ does not have a gap when we get to adding vote $v'$.

  \emph{Case 2b [Case 2 holds but Case 2a does not].}  Then $y$ is 
  approved in at least $1+\max_d i_d$ of the $2\cdot \max_d i_d$ votes before
  $v'$ that GBW added that approve $z$.  So we have $1+\max_d i_d$ votes
  added approving of exactly $z$ and $y$.  But then who made the 
  last of \emph{those} $1+\max_d i_d$ votes, call it $v''$, eligible
  to be added?  It must be that some candidate $w$ had a gap 
  up through $v''$.  
  But at the moment before adding $v''$ we would have added
  $\max_d i_d \geq i_w$ votes approving exactly $z$ and $y$ and 
  so disapproving $w$, and since $w$ allegedly still had a gap, 
  we while doing so under GBW would have in fact added the 
  $i_w$ heaviest voters disapproving of $w$, and so 
  $w$'s gap would have been removed before $v''$, so
  contrary to our assumption $w$ was not the gap that made $v''$ 
  eligible.
\end{proof}

One might naturally wonder how GBW performs on $t$-veto-WCCAV and
$t$-approval-WCCDV\@.  
By an argument far
easier than that used in the above proof of 
Lemma~\ref{lemma:upper}, 
in both of these cases
GBW provides a $t$-approximation algorithm.

\begin{theorem}\label{t:FORMERLY-no-constant} 
GBW is a $t$-approximation 
algorithm for t-veto-WCCAV\@.
GBW is a $t$-approximation 
algorithm for t-approval-WCCDV\@.
\end{theorem}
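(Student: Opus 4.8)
The plan is to prove the two statements in parallel, since they are dual. First I would record what GBW is here. For $t$-veto-WCCAV: (pre)discard every unregistered vote that vetoes $p$, sort the rest heaviest-first (ties broken lexicographically), walk through them adding the current vote exactly when it vetoes some candidate whose current score strictly exceeds $p$'s, and stop as soon as $p$ is a winner (unsuccessfully if the list runs out first). For $t$-approval-WCCDV: do the analogous thing with the registered votes that do not approve $p$, \emph{deleting} the current vote exactly when it \emph{approves} some candidate whose current score strictly exceeds $p$'s. I will carry out the argument for $t$-veto-WCCAV; $t$-approval-WCCDV is the same text with ``vetoes'' and ``approves'' interchanged and ``adds'' replaced by ``deletes.''

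The reason this is ``far easier'' than Lemma~\ref{lemma:upper} is a counting asymmetry: in $t$-veto-WCCAV a vote that approves $p$ changes the $p$-relative score (i.e., $\mathit{score}(c)-\mathit{score}(p)$) of exactly the $t$ candidates it vetoes and of no other candidate; dually, in $t$-approval-WCCDV a deleted vote disapproving $p$ touches exactly its $t$ approved candidates. In the hard cases each move instead touches $m-t$ candidates, which is precisely why the naive charging below would only give an $(m-t)$-approximation there. \textbf{Step 1 (reductions and monotonicity).} Argue it is without loss of generality, both for GBW and for an optimal solution, to use only votes that approve $p$: adding a vote that vetoes $p$ leaves $\mathit{score}(p)$ and the $p$-relative scores of the $t-1$ other vetoed candidates unchanged while strictly raising the $p$-relative scores of the $m-t$ approved candidates, so it is never helpful. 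Hence every GBW move is a $p$-approving vote, each such move leaves the $p$-relative score of every non-vetoed candidate unchanged and decreases it (by the vote's weight) for every vetoed candidate, so all $p$-relative scores are nonincreasing through GBW's run; in particular, once a candidate's score drops to $\mathit{score}(p)$ or below it stays there.

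\textbf{Step 2 (lower-bounding $\mathrm{OPT}$).} For each $d\neq p$ with initial $p$-relative score $g_d>0$ (its ``gap''), let $i_d$ be the least number of unregistered $d$-vetoing votes whose total weight reaches $g_d$ when they are taken heaviest-first (and $i_d=0$ otherwise). Since only $d$-vetoing votes shrink $d$'s gap, and any vote-set of total weight at least $g_d$ has at least $i_d$ members, every solution contains at least $i_d$ votes vetoing $d$. Each vote of an optimal solution vetoes exactly $t$ candidates, all $\neq p$, so recounting from the votes' side gives $\sum_d i_d \le t\cdot\mathrm{OPT}$. \textbf{Step 3 (upper-bounding GBW).} Charge each vote GBW adds to, say, the lexicographically first candidate it vetoes that has a gap at the moment it is added (one exists, by the GBW rule). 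Because GBW never skips a vote that vetoes a currently-gapped candidate, and --- by Step~1 --- $d$'s gap can be closed only by votes vetoing it, the $d$-vetoing votes GBW adds while $d$ still has a gap are exactly the $i_d$ heaviest $d$-vetoing votes; hence at most $i_d$ votes are charged to $d$, and GBW adds at most $\sum_d i_d \le t\cdot\mathrm{OPT}$ votes. The same observation --- if GBW exhausts its list then it has added \emph{all} $d$-vetoing votes, whose total weight then dominates that of the $d$-vetoing votes in any solution --- shows GBW succeeds whenever control is possible. That finishes $t$-veto-WCCAV, and the dual argument finishes $t$-approval-WCCDV.

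I do not expect a genuine obstacle; this really is the easy case. The only spots needing care are the two ``without loss of generality'' reductions (that discarding the $p$-vetoing, resp.\ $p$-disapproving, votes costs an optimal solution nothing, not merely GBW), the elementary but load-bearing fact used in Step~2 that a vote-set of total weight at least $g_d$ must contain at least $i_d$ votes, and the monotonicity observation of Step~1, which is exactly what makes the Step~3 charging clean.
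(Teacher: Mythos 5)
Your proof is correct and is essentially the argument the paper gives: the same quantities $i_d$, the same chain $\mathrm{GBW}\le\sum_d i_d\le t\cdot\mathrm{OPT}$, with the lower bound coming from the fact that each added vote in an optimal solution vetoes (resp.\ approves) at most $t$ gapped candidates. The only cosmetic difference is that you handle $t$-approval-WCCDV by running the dual argument directly, whereas the paper dispatches it via the reduction of Theorem~\ref{thm:reduction}; your Steps 1--3 also spell out the charging and the ``success whenever possible'' point more explicitly than the paper does, which does no harm.
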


\begin{proof}
Consider $t$-veto-WCCAV\@.  
Let $p$ be the preferred candidate.
For each candidate 
$d$
with an initial positive 
``gap'' 
relative to the preferred candidate $p$
(i.e., a surplus over $p$ in total weight of approvals),
let $i_d$ be as defined in 
the proof of Lemma~\ref{lemma:upper}.
(Recall that 
$i_d$ is the number of votes we would need to add to remove 
the surplus of $d$ over $p$ 
if we took the unregistered votes, discarded all
that didn't simultaneously approve $p$ and disapprove $d$,
and then added those one at a time from heaviest to lightest until
the gap was removed.)  Clearly, $\sum i_d$, where the sum is taken over 
those candidates with an initial surplus relative to $p$, 
is an upper bound on the 
number of votes added by GBW.  This is 
true since GBW works by adding 
extra votes from heaviest to lightest, restricted to those vetoing 
a candidate who at that point has 
a positive gap relative to $p$;  so under GBW each gap will be closed by the 
largest weight votes that address it.
On the other hand, 
in any overall optimal solution
$i_d$ is a lower bound 
on the smallest number of votes from 
that solution's added-vote set that 
would suffice 
to remove $d$'s positive gap (since it takes $i_d$ even if we use the 
heaviest votes addressing the gap).
In the overall optimal solution 
each 
added vote 
narrows at most $t$ gaps.
So GBW's solution
uses at worst $t$ times as many added votes as does the optimal solution.

The claim for 
$t$-approval-WCCDV follows by 
Theorem~\ref{thm:reduction}.  
\end{proof}

This result replaces
a flawed claim in an earlier version of this paper that GBW 
and some of its cousins do not
provide $\bigoh(1)$ approximations for these problems.  
Of course, 
having a $t$-approximation for these two problems is not 
wildly exciting, since for these problems the multicover-based
approach from earlier in this section showed that for some 
function 
$f(t)$, with $f(t) = \bigoh(\log t)$, we even have 
$f(t)$-approximation
algorithms for these problems.  However, if the constant of
the ``big oh'' of that other algorithm is large, it is possible 
that for sufficiently small values of $t$ 
the above approach may give a better 
approximation.  Also, we feel that it is interesting to learn about 
the behavior of explicit heuristics, especially attractive 
approaches such as greedy algorithms.

\section{Related Work}
\label{sec:related-work}
The study of the complexity of (unweighted) electoral control was
initiated by Bartholdi, Tovey, and Trick~\cite{bar-tov-tri:j:control},
who considered constructive control by adding/deleting/partitioning
candidates/voters under the plurality rule and under the Condorcet
rule (that is, the rule that chooses Condorcet winner whenever there
is one, and has no winners otherwise).  The various types of control
model at least some of the flavor of actions that occur in the real
world, such as voter suppression and targeted get-out-the-vote drives
(see the survey of Faliszewski, Hemaspaandra, and
Hemaspaandra~\cite{fal-hem-hem:j:cacm-survey} for more examples and
discussions).  A major motivation for the study of control was to
obtain ``complexity barrier'' results, that is, results that show that
detecting opportunities for various control attacks is computationally
difficult.  In particular, Bartholdi, Tovey, and Trick focused on
$\np$-hardness as the measure of computational difficulty.

This research direction was continued by Hemaspaandra, Hemaspaandra,
and Rothe~\cite{hem-hem-rot:j:destructive-control}, who were 
the first to study
destructive control attacks on elections.
Since then, many authors have studied electoral control in
many varied settings and under many different rules; we refer the
reader to the survey~\cite{fal-hem-hem:j:cacm-survey}.
Some recent research, not covered in that survey, includes
complexity-of-control results for
the $t$-approval family of rules~\cite{lin:thesis:elections}, 
for Bucklin's rule (and for fallback, its extension for truncated
votes)~\cite{erd-rot:c:fallback,erd-pir-rot:c:voter-partition-in-bucklin-and-fallback-voting},
for maximin~\cite{fal-hem-hem:j:multimode}, 
for range voting~\cite{men:jtoappear:range-voting},
and
for Schultze's rule and the ranked pairs
rule~\cite{par-xia:strategic-schultze-ranked-pairs}.  
In the present paper, we compare control and manipulation.  
The recent paper~\cite{fit-hem-hem:ctoappear:control-manipulation}
studies settings in which both control and manipulation 
are occurring. 
Researchers have, in the quite different setting 
of electing members to fill a fixed-size, multimember panel,
defined variants of control 
that have coexisting 
constructive and destructive
aspects~\cite{mei-pro-ros-zoh:j:multiwinner}.  There is 
also work analyzing counting
variants of control~\cite{fal-woj:c:counting-control}, where the goal
is not only to decide if a given control attack is possible, but also
to count the number of ways in which this attack can be carried out.

The complexity-barrier research line turned out to be very
successful. For most voting rules that were considered, a significant
number of control attacks are $\np$-hard. Indeed, it is even possible
to construct an artificial election system resistant to all types of
control attacks~\cite{hem-hem-rot:j:hybrid}.  However, there are also
a number of results that suggest that in practice the complexity
barrier might not be as strong as one might at first think.  For
example, Faliszewski et
al.~\cite{fal-hem-hem-rot:j:single-peaked-preferences} and Brandt et
al.~\cite{bra-bri-hem-hem:c:sp2} have shown that if the votes are
restricted to being single-peaked, then many control problems that are
known to be $\np$-complete become polynomial-time solvable. Indeed,
this often holds even if elections are just nearly
single-peaked~\cite{fal-hem-hem:c:nearly-sp}, as many real-world
elections seem to be.  Similarly, some initial experimental results of
Rothe and Schend~\cite{rot-sch:empirical-control}---published very
recently---suggest that, at least under certain distributions and
settings, some $\np$-hard control problems can be solved in practice
on many instances.
As part of a different line of research,
Xia~\cite{xia:c:vote-operations} has studied the asymptotic behavior
of the number of voters that have to be added to/deleted from a
randomly constructed election in a successful control action.

There are a number of other problems that 
involving changing the structure of elections.
These problems include
candidate
cloning~\cite{elk-fal-sli:j:cloning,elk-fal-sli:c:decloning} (where it
is possible to replace a given candidate $c$ with a number of its
clones), or the possible winner problem when new alternatives
join~\cite{che-lan-mau-mon:c:possible-winners-adding,lan-mon-xia:c:new-alternatives-new-results}
(where some additional, not yet ranked candidates can be
introduced). This last problem is also related to the possible winner
problem with truncated ballots~\cite{bau-fal-lan-rot:c:lazy-voters}.

The only papers that directly raise the issue of weighted control are,
to the best of our knowledge, the theses of Russell~\cite{rus:t:borda}
and Lin~\cite{lin:thesis:elections}. However, we also mention the
papers of Baumeister et al.~\cite{bau-roo-rot-sch-xia:weighted-pw},
and of Perek et al.~\cite{per-fal-ros-pin:c:losing-voters}, where the
authors, in effect, consider problems of affecting the result of an
election through picking the weights of the voters. 
(The paper of Perek
et al.\ motivates its study differently, but in effect studies a
constrained variant of choosing voter weights.) Their problems are
similar to, though different from, simultaneous (multimode) addition
and deletion of voters~\cite{fal-hem-hem:j:multimode}.

This paper has given $f(\cdot)$-approximation 
results for weighted election 
control problems.
Elkind and
Faliszewski~\cite{elk-fal:c:shift-bribery} have 
given a 2-approximation 
algorithm for a weighted, bribery-related case.

\section{Conclusions}\label{ss:conclusions}

\begin{table}[tp]
\begin{center}
\setlength{\tabcolsep}{10pt}
\begin{tabular}{p{3.5cm}|cc|c}
  \multicolumn{1}{c|}{} & WCCAV & WCCDV & WCM\\
\hline&&&\\[-0.5em]
  Plurality & $\p$ (Thm.~\ref{t:t-approval-in-P}) & $\p$ (Thm.~\ref{t:t-approval-in-P})& $\p$~\cite{con-lan-san:j:when-hard-to-manipulate} \\[5pt]
  {$t$-approval, $2 \leq t < m$} & $\p$ (Thm.~\ref{t:t-approval-in-P}) & $\p$ (Thm.~\ref{t:t-approval-in-P}) & $\npcs$~\cite{hem-hem:j:dichotomy}\\[5pt]
  {Borda}          & $\npcs$~(Thm.~\ref{t:bounded-borda}) & $\npcs$~(Thm.~\ref{t:bounded-borda}) & $\npcs$~\cite{hem-hem:j:dichotomy}\\[8pt]
\raggedright 
$\alpha = (\alpha_1, \ldots, \alpha_m)$, $\|\{\alpha_1, \ldots, \alpha_m\}\| \geq 3$ & 
\multirow{2}*{$\npcs$~(Thm.~\ref{t:avdv-scoring-protocols})} & \multirow{2}{*}{$\npcs$~(Thm.~\ref{t:avdv-scoring-protocols})} & \multirow{2}*{$\npcs$~\cite{hem-hem:j:dichotomy}} \\[25pt]
  Llull (3 candidates) & $\npcs$~(Cor.~\ref{cor:llull}) & $\npcs$~(Cor.~\ref{cor:llull}) & $\p$~\cite{fal-hem-sch:c:copeland-ties-matter}\\[10pt]
\parbox[c]{3.5cm}{(weak)Condorcet-consistent rules}                    
& {$\np$-hard}  (Thm.~\ref{t:bounded-weakcondorcet})
& {$\np$-hard}  (Thm.~\ref{t:bounded-weakcondorcet})
& \parbox[c]{0.8in}{various \mbox{complexities}}%
\end{tabular}
\caption{\label{tab:bounded}Our results for the complexity of control by adding/deleting voters in
  weighted elections for 
any fixed number of candidates,
${m \geq 3}$, compared to the complexity of weighted coalitional manipulation. 
}
\end{center}
\end{table}

\begin{table}[tp]
\begin{center}
\setlength{\tabcolsep}{15pt}
\begin{tabular}{l|cc}
  \multicolumn{1}{c|}{} & WCCAV & WCCDV \\
\hline&&\\[-0.5em]
  $t$-approval && \\
  \quad $t = 2$ & $\p$ (Thm.~\ref{t:WCCAV-2approval-in-P}) & $\npc$ (Thm.~\ref{t:hardness}) \\
  \quad $t = 3$ & $\npc$ (Thm.~\ref{t:hardness}) & $\npc$~\cite{lin:thesis:elections}\\
  \quad $t \geq 4$ & $\npc$~\cite{lin:thesis:elections}& $\npc$~\cite{lin:thesis:elections} \\[10pt]
  $t$-veto && \\
  \quad $t = 2$ & $\npc$ (Thm.~\ref{t:hardness}) & $\p$ (Thm.~\ref{t:2-veto}) \\
  \quad $t = 3$ & $\npc$~\cite{lin:thesis:elections} & $\npc$ (Thm.~\ref{t:hardness}) \\
  \quad $t \geq 4$ & $\npc$~\cite{lin:thesis:elections} & $\npc$~\cite{lin:thesis:elections} 
\end{tabular}
\caption{\label{tab:unbounded}The complexity of control by adding and
  deleting voters for ${t}$-approval and
  ${t}$-veto with an unbounded number of candidates.}
\end{center}
\end{table}

\begin{table}[tp]
\begin{center}
\setlength{\tabcolsep}{10pt}
\begin{tabular}{c|c c}
 \multicolumn{1}{c|}{} & WCCAV & WCCDV \\
\cline{1-3}&&\\[-0.50em]
 \multirow{2}{*}{$t$-approval} & $\bigoh(\log m)$ (Thm.~\ref{cor:wccav-log}) & $\bigoh(\log t)$ (Thm.~\ref{cor:wccdv-log}) \\
                        & $t$ (Thm.~\ref{thm:gbw})                    & $t$ (Thm.~\ref{t:FORMERLY-no-constant}) \\[12pt]
 \multirow{2}{*}{$t$-veto} & $\bigoh(\log t)$ (Thm.~\ref{cor:wccav-log}) & $\bigoh(\log m)$ (Thm.~\ref{cor:wccdv-log})\\
                        & $t$ (Thm.~\ref{t:FORMERLY-no-constant})      & $t$ (Thm.~\ref{thm:gbw}) \\
\end{tabular}

\caption{\label{tab:approx}Approximation ratios of our algorithms for
  WCCAV and WCCDV under $t$-approval and $t$-veto.}
\end{center}
\end{table}

We have studied voter control under a number of voting rules,
including scoring protocols, families of scoring protocols,
and the 
(weak)Condorcet-consistent rules. 
We have shown that the complexity of voter control can be quite
different from the complexity of weighted coalitional manipulation:
there are natural voting rules for which weighted coalitional
manipulation is easy but weighted voter control is hard, and 
there are natural rules where the
opposite is the case. Furthermore, we have shown that for weighted voter
control under $t$-approval and $t$-veto, there are good, natural
approximation algorithms.
Our results for voter control in weighted elections are 
summarized in Tables~\ref{tab:bounded},~\ref{tab:unbounded},
and~\ref{tab:approx}.

\subsubsection*{Acknowledgments}   
We are very 
grateful to the anonymous \mbox{AAMAS 2013} referees for 
helpful comments and suggestions.
This work was supported 
in part 
by grants
AGH-\allowbreak{}11.11.120.865, 
NCN-DEC-2011/03/B/ST6/01393, 
NCN-UMO-2012/06/M/ST1/00358, 
and 
NSF-CCF-\{0915792,\allowbreak{}1101452,\allowbreak{}1101479\},
and two Bessel Awards from the 
Alexander von Humboldt Foundation.

\bibliographystyle{alpha}
\bibliography{gry-sp3ijcai}  %

\newcommand{\etalchar}[1]{$^{#1}$}
\begin{thebibliography}{MLCM10}

\bibitem[BBHH10]{bra-bri-hem-hem:c:sp2}
F.~Brandt, M.~Brill, E.~Hemaspaandra, and L.~Hemaspaandra.
\newblock Bypassing combinatorial protections: Polynomial-time algorithms for
  single-peaked electorates.
\newblock In {\em Proceedings of the 24th AAAI Conference on Artificial
  Intelligence}, pages 715--722, July 2010.

\bibitem[BCE13]{bra-con-end:b:comsoc}
F.~Brandt, V.~Conitzer, and U.~Endriss.
\newblock Computational social choice.
\newblock In G.~Wei\ss, editor, {\em Multiagent Systems}. MIT Press, 2013.
\newblock To appear.

\bibitem[BFH{\etalchar{+}}08]{bre-fal-hem-sch-sch:c:approximating-elections}
E.~Brelsford, P.~Faliszewski, E.~Hemaspaandra, H.~Schnoor, and I.~Schnoor.
\newblock Approximability of manipulating elections.
\newblock In {\em Proceedings of the 23rd AAAI Conference on Artificial
  Intelligence}, pages 44--49. AAAI Press, July 2008.

\bibitem[BFLR12]{bau-fal-lan-rot:c:lazy-voters}
D.~Baumeister, P.~Faliszewski, J.~Lang, and J.~Rothe.
\newblock Campaigns for lazy voters: {T}runcated ballots.
\newblock In {\em Proceedings of the 11th International Conference on
  Autonomous Agents and Multiagent Systems}, pages 577--584, June 2012.

\bibitem[BO91]{bar-oli:j:polsci:strategic-voting}
J.~{{Bartholdi}}, III and J.~Orlin.
\newblock Single transferable vote resists strategic voting.
\newblock {\em Social Choice and Welfare}, 8(4):341--354, 1991.

\bibitem[BRR{\etalchar{+}}12]{bau-roo-rot-sch-xia:weighted-pw}
D.~Baumeister, M.~Roos, J.~Rothe, L.~Schend, and L.~Xia.
\newblock The possible winner problem with uncertain weights.
\newblock In {\em Proceedings of the 20th European Conference on Artificial
  Intelligence}, pages 133--138, August 2012.

\bibitem[BTT89]{bar-tov-tri:j:manipulating}
J.~{{Bartholdi}}, III, C.~Tovey, and M.~Trick.
\newblock The computational difficulty of manipulating an election.
\newblock {\em Social Choice and Welfare}, 6(3):227--241, 1989.

\bibitem[BTT92]{bar-tov-tri:j:control}
J.~{{Bartholdi}}, III, C.~Tovey, and M.~Trick.
\newblock How hard is it to control an election?
\newblock {\em Mathematical and Computer Modeling}, 16(8/9):27--40, 1992.

\bibitem[Con11]{con:b-chapter:sweden}
R.~Congleton.
\newblock The {Swedish} transition to democracy ({Chapter} 14).
\newblock In {\em Perfecting Parliament}. Cambridge University Press, 2011.

\bibitem[CSL07]{con-lan-san:j:when-hard-to-manipulate}
V.~Conitzer, T.~Sandholm, and J.~Lang.
\newblock When are elections with few candidates hard to manipulate?
\newblock {\em Journal of the ACM}, 54(3):Article~14, 2007.

\bibitem[EF10]{elk-fal:c:shift-bribery}
E.~Elkind and P.~Faliszewski.
\newblock Approximation algorithms for campaign management.
\newblock In {\em Proceedings of the 6th International Workshop On Internet And
  Network Economics}, pages 473--482, December 2010.

\bibitem[EFS11]{elk-fal-sli:j:cloning}
E.~Elkind, P.~Faliszewski, and A.~Slinko.
\newblock Cloning in elections: {F}inding the possible winners.
\newblock {\em Journal of Artificial Intelligence Research}, 42:529--573, 2011.

\bibitem[EFS12]{elk-fal-sli:c:decloning}
E.~Elkind, P.~Faliszewski, and A.~Slinko.
\newblock Clone structures in voters' preferences.
\newblock In {\em Proceedings of the 13th ACM Conference on Electronic
  Commerce}, pages 496--513, June 2012.

\bibitem[EPR11]{erd-pir-rot:c:voter-partition-in-bucklin-and-fallback-voting}
G.~Erd\'{e}lyi, L.~Piras, and J.~Rothe.
\newblock The complexity of voter partition in {Bucklin} and fallback voting:
  {Solving} three open problems.
\newblock In {\em Proceedings of the 10th International Conference on
  Autonomous Agents and Multiagent Systems}, pages 837--844, May 2011.

\bibitem[ER10]{erd-rot:c:fallback}
G.~Erd\'{e}lyi and J.~Rothe.
\newblock Control complexity in fallback voting.
\newblock In {\em Proc. of 16th Australasian Theory Symposium}, pages 39--48,
  January 2010.

\bibitem[FHH09]{fal-hem-hem:j:bribery}
P.~Faliszewski, E.~Hemaspaandra, and L.~Hemaspaandra.
\newblock How hard is bribery in elections?
\newblock {\em Journal of Artificial Intelligence Research}, 35:485--532, 2009.

\bibitem[FHH10]{fal-hem-hem:j:cacm-survey}
P.~Faliszewski, E.~Hemaspaandra, and L.~Hemaspaandra.
\newblock Using complexity to protect elections.
\newblock {\em Communications of the ACM}, 53(11):74--82, 2010.

\bibitem[FHH11a]{fal-hem-hem:c:nearly-sp}
P.~Faliszewski, E.~Hemaspaandra, and L.~Hemaspaandra.
\newblock The complexity of manipulative attacks in nearly single-peaked
  electorates.
\newblock In {\em Proceedings of the 13th Conference on Theoretical Aspects of
  Rationality and Knowledge}, pages 228--237, July 2011.

\bibitem[FHH11b]{fal-hem-hem:j:multimode}
P.~Faliszewski, E.~Hemaspaandra, and L.~Hemaspaandra.
\newblock Multimode attacks on elections.
\newblock {\em Journal of Artificial Intelligence Research}, 40:305--351, 2011.

\bibitem[FHH13]{fit-hem-hem:ctoappear:control-manipulation}
Z.~Fitzsimmons, E.~Hemaspaandra, and L.~Hemaspaandra.
\newblock Control in the presence of manipulators: {Cooperative} and
  competitive cases.
\newblock In {\em Proceedings of the 23rd International Joint Conference on
  Artificial Intelligence}. AAAI Press, August 2013.
\newblock To appear.

\bibitem[FHHR09]{fal-hem-hem-rot:b:richer}
P.~Faliszewski, E.~Hemaspaandra, L.~Hemaspaandra, and J.~Rothe.
\newblock A richer understanding of the complexity of election systems.
\newblock In S.~Ravi and S.~Shukla, editors, {\em Fundamental Problems in
  Computing: {Essays} in Honor of {Professor} {Daniel} {J.} {Rosenkrantz}},
  pages 375--406. Springer, 2009.

\bibitem[FHHR11]{fal-hem-hem-rot:j:single-peaked-preferences}
P.~Faliszewski, E.~Hemaspaandra, L.~Hemaspaandra, and J.~Rothe.
\newblock The shield that never was: {Societies} with single-peaked preferences
  are more open to manipulation and control.
\newblock {\em Information and Computation}, 209:89--107, 2011.

\bibitem[FHS08]{fal-hem-sch:c:copeland-ties-matter}
P.~Faliszewski, E.~Hemaspaandra, and H.~Schnoor.
\newblock Copeland voting: Ties matter.
\newblock In {\em Proceedings of the 7th International Conference on Autonomous
  Agents and Multiagent Systems}, pages 983--990, May 2008.

\bibitem[GJ79]{gar-joh:b:int}
M.~Garey and D.~Johnson.
\newblock {\em Computers and Intractability: {A} Guide to the Theory of
  {NP}-Completeness}.
\newblock {W. H. Freeman and Company}, 1979.

\bibitem[HH07]{hem-hem:j:dichotomy}
E.~Hemaspaandra and L.~Hemaspaandra.
\newblock Dichotomy for voting systems.
\newblock {\em Journal of Computer and System Sciences}, 73(1):73--83, 2007.

\bibitem[HHR07]{hem-hem-rot:j:destructive-control}
E.~Hemaspaandra, L.~Hemaspaandra, and J.~Rothe.
\newblock Anyone but him: {The} complexity of precluding an alternative.
\newblock {\em Artificial Intelligence}, 171(5--6):255--285, 2007.

\bibitem[HHR09]{hem-hem-rot:j:hybrid}
E.~Hemaspaandra, L.~Hemaspaandra, and J.~Rothe.
\newblock Hybrid elections broaden complexity-theoretic resistance to control.
\newblock {\em Mathematical Logic Quarterly}, 55(4):397--424, 2009.

\bibitem[KY05]{kol-you:j:cip}
S.~Kolliopoulos and N.~Young.
\newblock Approximation algorithms for covering/packing integer programs.
\newblock {\em Journal of Computer and System Sciences}, 71(4):495--505, 2005.

\bibitem[Lin12]{lin:thesis:elections}
A.~Lin.
\newblock {\em Solving Hard Problems in Election Systems}.
\newblock PhD thesis, Rochester Institute of Technology, Rochester, NY, 2012.

\bibitem[McG53]{mcg:j:election-graph}
D.~McGarvey.
\newblock A theorem on the construction of voting paradoxes.
\newblock {\em Econometrica}, 21(4):608--610, 1953.

\bibitem[Men]{men:jtoappear:range-voting}
C.~Menton.
\newblock Normalized range voting broadly resists control\typeout{Minor Panic:
  men-range-voting: remove to-apppear and add vol/pages/year}.
\newblock {\em Theory of Computing Systems}.
\newblock To appear.

\bibitem[MLCM10]{che-lan-mau-mon:c:possible-winners-adding}
N.~Maudet, J.~Lang, Y.~Chevaleyre, and J.~Monnot.
\newblock Possible winners when new candidates are added: {T}he case of scoring
  rules.
\newblock In {\em Proceedings of the 24th AAAI Conference on Artificial
  Intelligence}, pages 762--767. AAAI Press, July 2010.

\bibitem[MPRZ08]{mei-pro-ros-zoh:j:multiwinner}
R.~Meir, A.~Procaccia, J.~Rosenschein, and A.~Zohar.
\newblock The complexity of strategic behavior in multi-winner elections.
\newblock {\em Journal of Artificial Intelligence Research}, 33:149--178, 2008.

\bibitem[PFPR13]{per-fal-ros-pin:c:losing-voters}
T.~Perek, P.~Faliszewski, M.~Pini, and F.~Rossi.
\newblock The complexity of losing voters.
\newblock In {\em Proceedings of the 12th International Conference on
  Autonomous Agents and Multiagent Systems}, May 2013.
\newblock To appear.

\bibitem[PX12]{par-xia:strategic-schultze-ranked-pairs}
D.~Parkes and L.~Xia.
\newblock A complexity-of-strategic-behavior comparison between {S}chulze's
  rule and ranked pairs.
\newblock In {\em Proceedings of the 26th AAAI Conference on Artificial
  Intelligence}, pages 1429--1435, July 2012.

\bibitem[RS12]{rot-sch:empirical-control}
J.~Rothe and L.~Schend.
\newblock Control complexity in {B}ucklin, {F}allback, and {P}lurality voting:
  {A}n experimental approach.
\newblock In {\em Proceedings of the 11th International Symposium on
  Experimental Algorithms}, pages 356--368, June 2012.

\bibitem[Rus07]{rus:t:borda}
N.~Russell.
\newblock Complexity of control of {B}orda count elections.
\newblock Master's thesis, Rochester Institute of Technology, 2007.

\bibitem[WF12]{fal-woj:c:counting-control}
K.~Wojtas and P.~Faliszewski.
\newblock Possible winners in noisy elections.
\newblock In {\em Proceedings of the 26th AAAI Conference on Artificial
  Intelligence}, pages 1499--1505, July 2012.

\bibitem[Xia12]{xia:c:vote-operations}
L.~Xia.
\newblock How many vote operations are needed to manipulate a voting system?
\newblock In {\em Proceedings (Workshop Notes) of the 4th International
  Workshop on Computational Social Choice}, pages 443--454, September 2012.

\bibitem[XLM11]{lan-mon-xia:c:new-alternatives-new-results}
L.~Xia, J.~Lang, and J.~Monnot.
\newblock Possible winners when new alternatives join: {N}ew results coming up!
\newblock In {\em Proceedings of the 10th International Conference on
  Autonomous Agents and Multiagent Systems}, pages 829--836, May 2011.

\end{thebibliography}

\appendix

\section{Additional Details Related to
Section~\protect\ref{ss:approx}}
We present here the deferred proof 
of Lemma~\ref{t:GBW-t-deferred-av-dv}
and some other details related to 
Section~\protect\ref{ss:approx}.

\begin{proof}[{Proof of Lemma~\protect\ref{t:GBW-t-deferred-av-dv}}]
Our goal is to 
show that 
GBW sometimes really 
does use fully $t$ times the optimal number of added/deleted 
votes, for the cases 
in question.  
Examples are (somewhat detailed but)
not hard to construct, and the lower bound even holds for $t=2$, though
in Section~\ref{ss:t-approval} we obtained an exact solution by
a different approach.
However, one does have to be careful to set the ``gap'' pattern
created by the unregistered voters to be a realizable one.  
For our $t$-approval-WCCAV construction, this will be easy to do
directly.  
For our $t$-veto-WCCDV construction, 
we will establish realizability
through a small tool---which we hope may prove useful elsewhere---that lets one
set up certain patterns of gaps.  We state the tool below 
as Tool~\ref{tool:tool1}.

Fix any $t \in \{2,3,4,\ldots\}$.  We will now construct an instance
of $t$-approval-WCCAV on which GBW uses $t$ times as many additions as
the optimal strategy.  Our construction will have $2t$ candidates:
the preferred candidate $p$, candidates $a_1,\ldots,a_t$, and
candidates $d_1,\ldots,d_{t-1}$.  Now, suppose that under the
votes of the registered voters, the ``gaps'' are as follows.  For each
candidate $a_i$, the total weight of approvals of $a_i$ exceeds the
total weight of approvals of $p$ by exactly $2t$.  And for each
candidate $d_i$, the total weight of approvals of $d_i$ equals the
total weight of approvals of $p$.  
This can easily be realized, namely by our registered voter 
set being one weight-$2t$ voter
who approves of each $a_i$.

Our set of unregistered voters will be as follows.  There will be 
one unregistered voter, call it ``nice,'' of weight $2t$, who
approves of $p$ and each of the $t-1$ candidates $d_i$, and
disapproves of each of the $t$ candidates $a_i$.  For each $j$, $1\leq
j \leq t$, we will have a single unregistered 
voter, call it $\alpha_j$, of weight $3t$,
who approves of $p$ and of each $a_i$ other than $a_j$, and disapproves 
of $a_j$ and all the $d_i$'s.

Note that GBW will add all $t$ voters $\alpha_i$.  But ideal would be to
add the single voter called ``nice,'' since doing so suffices to 
make $p$ a winner.
So for each $t \geq 2$ we have constructed a setting
where GBW for $t$-approval-WCCAV takes
$t$ times more than the optimal number of added votes.
It also holds that for each $t \geq 2$, we can similarly
construct a setting
where GBW for
$t$-veto-WCCDV takes $t$ times more than the optimal number of deleted
votes, 
and can prove that setting to be realizable.
In fact, we can do so by following something of 
the flavor of  the above
scheme, except with a slightly different vote set that adjusts it to
handle the case of deleting voters, and with more 
care regarding realizability. Here is the construction.
Fix any $t \in \{2,3,4,\ldots\}$.  Our candidate set will again be 
the preferred candidate $p$, candidates $a_1,\ldots,a_t$, and
candidates $d_1,\ldots,d_{t-1}$.  Let us specify the voter set.
We will put into our voter set a collection of weight-1 
votes such that the gaps in total approval weight 
\emph{relative to $d_1$} created by those votes are as follows.
Each of $d_2$ through $d_{t-1}$ have the same total approval weight
as $d_1$.  The total approval weight of $p$ exceeds that of 
$d_1$ by $3t^2+3t$.  And the total approval weight of each $a_i$ exceeds
that of $d_1$ by $3t^2$.

As Tool~\ref{tool:tool1} below, we will observe that for 
$2t$-candidate $t$-approval voting, any gap pattern where the 
gaps are all multiples of $t$ can be realized.  
Since in the current proof we are using 
$2t$-candidate $t$-veto, and that is the same as
$2t$-candidate $t$-approval, 
Tool~\ref{tool:tool1} applies here.
In particular, 
Tool~\ref{tool:tool1} 
easily builds a set of weight-1
votes realizing precisely our desired set of gaps.
(The exact number of weight-1 votes used in 
this construction is not important.  However, from the gaps mentioned above
and the vote-set size mentioned in the tool, the precise number is 
easily seen to be 
$(3t+3+t(3t))(2t-1)$.)

We are not yet done building our voter set.  We will also 
have in our voter set
one voter, call it ``nice,'' of weight $2t$, who approves of exactly
all $t$ of the $a_i$'s. 
And for each $j$, $1\leq j \leq t$, we will
have one voter of weight $3t$ who approves of exactly $a_j$ and all
$t-1$ of the $d_i$'s.

Under the entire set of votes 
created above---the votes from the tool combined with ``nice'' and the 
other $t$ votes just mentioned---it is easy to see that $d_1$ is 
a candidate having the least total approval weight, and it is tied 
in total approval weight with each other $d_i$.  
The total approval weight of $p$ exceeds that of $d_1$ by $3t$.
And each $a_i$ exceeds $d_1$ in total approval weight by $5t$.

However, in light of the pattern of votes and weights we have here, 
it is clear that GBW (in its version for $t$-veto) will
delete the $t$ weight-$3t$ voters.  (Note that
the votes added by Tool~\ref{tool:tool1} are all weight-1 votes,
and so are highly unattractive to GBW\@.)
But ideal would be to
delete the single voter called ``nice,'' since doing so suffices to 
make $p$ a winner.
So for each $t \geq 2$ we have constructed a realizable setting
where GBW for $t$-veto-WCCDV takes
$t$ times more than the optimal number of deleted votes.
\end{proof}

Within the above proof, we referred 
to and used 
a small tool that can build certain patterns of vote weight
gaps in certain approval elections.  
It would be 
an overreach to claim that 
this is 
a McGarvey-like tool, since this is a different setting than, and is a far less
flexible result than, the famous theorem of
McGarvey~\cite{mcg:j:election-graph}.  
However, it in a small way is a
tool that perhaps might be useful elsewhere, and so we 
state and prove this
modest tool.

\begin{tool}\label{tool:tool1}
  Let $t\geq 2$.  Let $n_1,\ldots,n_{2t-1}$ be any list of
  nonnegative integers each divisible by $t$.  Then
  there exists a collection of $t$-approval
  votes, over $2t$ candidates, such that under those votes, relative
  to the candidate getting the fewest approvals, the list of gaps
  in number of approvals between that candidate and the other $2t-1$
  candidates is precisely $(n_1,\ldots,n_{2t-1})$.  
  Furthermore, this can be done with 
  $(2t-1)(\sum n_i)/t$ unweighted (i.e., weight 1) votes.  
  It alternatively can be done with 
  $(2t-1)^2$ weighted votes (or even $(2t-1)\|\{i~|~n_i \neq 0\}\|$ 
  weighted votes).
\end{tool}

\begin{proof}
Consider an election with $2t$ candidates, where the votes cast are
$t$-approval votes.  Consider the collection of $2t-1$ votes, each of
weight one, in which the votes all approve of a particular candidate
(for this example, let that one be the first candidate), and
the remaining $t-1$ approvals cyclically rotate around the other
candidates.  So the $t$-approval votes, viewed as bit vectors, are
these: $1\,1^{t-1}\, 0^t$, $1\,0\, 1^{t-1}\, 0^{t-1}$, $\ldots$,
$1\,0^t\, 1^{t-1}$, $1\,1\,0^t\, 1^{t-2}$, $\ldots$,
$1\,1^{t-1}\,0^t\,1$.  Note that the first candidate is approved in
all $2t-1$ of those votes, and each other candidate is approved in
exactly $t-1$ of those votes.  So this collection of votes sets
a gap of $t$ in favor of the first candidate, 
between the total approval weight of the first candidate and
that of each other candidate
And the difference in total approval weight 
between each other pair of candidates is zero.

Given a gap pattern as stated in the tool, where each gap above the 
least-approved candidate 
(call that candidate $c$) 
is a multiple of $t$, we can simply use the 
approach of the above paragraph repeatedly, to boost 
each other candidate, $d$, 
one at a time to whatever multiple of $t$ 
it is supposed to exceed $c$ by in total approval weight.
(In this, 
$d$ will play the role ``the first candidate'' did in
the previous paragraph.)
If $d$'s
surplus relative to $c$ is $kt$ and we wish 
to use only weight-1 votes, we can 
do this for $d$ with $k(2t-1)$ weight-1 votes.  Otherwise, we can do this 
for $d$ with $2t-1$ weight-$k$ votes.
So 
the total number of votes used is as 
given in the statement
of this tool.
\end{proof}

This appendix is not seeking to provide a comprehensive study of which
gap collections are realizable under $t$-approval voting, nor is it
seeking to find the smallest number of voters needed to realize 
realizable gap collection.  That is an interesting direction for study,
but is not our goal here.
However, we 
mention that there clearly exist some gap collections
that cannot be realized.  For example, 
the ``then there exists''
claim of Tool~\ref{tool:tool1} is not even always 
true if one 
removes the assumption of divisibility by $t$.  An example showing this 
is the following.  
Consider a 4-candidate setting where votes will be 
2-approval votes, and we desire a gap list relative to the least-approved
candidate of $(1,1,1)$, i.e., each of the other candidates has one more 
approval than does the least-approved candidate.  
Clearly, the total number of approvals of
any set of votes achieving this is $4B+3$, where $B$ is whatever number 
of approvals the least-approved candidate happens to get under the vote 
set one is trying, and so the total number of approvals is odd.  However,
any vote set of 2-approval votes has an even total number of approvals.
So this gap collection cannot be realized.

\end{document}